\newcommand{\I}{\bm{I}}   % Identity matrix
\newcommand{\E}{\mathbb{E}}    % Expectation
\newcommand{\V}{\mathbb{V}}    % Variance
\newcommand{\R}{\mathbb{R}}    % Real numbers
\newcommand{\C}{\mathbb{C}}    % Complex numbers
\newcommand{\bO}{\mathcal{O}}  % "big O"
\newcommand{\Fr}{\mathcal{F}}  % Fourier Transform operator
\newcommand{\Nd}{\mathcal{N}}  % Normal distribution
\newcommand{\eps}{\varepsilon} % epsilon
\newcommand{\lam}{\lambda}     % lambda
\newcommand{\sgm}{\sigma}      % sigma
\newcommand{\omg}{\omega}      % omega
\newcommand{\bx}{\bm{x}}       % bold x
\newcommand{\by}{\bm{y}}       % bold y
\newcommand{\bv}{\bm{v}}       % bold v
\newcommand{\bu}{\bm{u}}       % bold u
\newcommand{\bomg}{\bm{\omega}} % bold mu
\newcommand{\bS}{\bm{\Sigma}}  % bold Sigma
\newcommand{\abs}[1]{\left|#1\right|}      % absolute value
\newcommand{\cvd}{\rightsquigarrow}        % convergence in distribution
\newcommand{\ind}[1]{\bm{1}_{\set{#1}}}    % Indicator function
\newcommand{\dif}{\mathop{}\!\mathrm{d}}   % d for dx in integral
\newcommand{\norm}[2][]{\left\Vert#2\right\Vert_{#1}} % norm
\newcommand{\shortnorm}[2][]{\Vert#2\Vert_{#1}}       % short norm
\newcommand{\set}[1]{\left\{ #1\right\}}              % set notation
\newcommand{\inp}[1]{\left\langle #1 \right\rangle}   % inner product
\newcommand{\mat}[1]{\begin{bmatrix}#1\end{bmatrix}}  % matrix
\renewcommand{\epsilon}{\varepsilon}
\renewcommand{\phi}{\varphi}
\DeclareMathOperator*{\spn}{span}
\DeclareMathOperator*{\sinc}{sinc}
\DeclareMathOperator*{\supp}{supp}     % logdet
\newcommand{\iid}{\mathrel{\overset{\scalebox{0.5}{\text{i.i.d.}}}{\scalebox{1.1}[1]{$\sim$}}}}
\newtheorem{theorem}{Theorem}
\newtheorem*{theorem*}{Theorem}
\newtheorem{lemma}{Lemma}
\newtheorem*{lemma*}{Lemma}
\newtheorem{corollary}{Corollary}
\newtheorem*{corollary*}{Corollary}
\newtheorem*{definition*}{Definition}
\newtheorem*{proposition*}{Proposition}
\newtheorem*{conjecture*}{Conjecture}
\theoremstyle{remark}
\newtheorem*{remark*}{Remark}
\numberwithin{equation}{section}
\newcommand{\Frt}{\bm{\mathcal{F}}}  % Fourier Transform matrix
\newcommand{\bal}{\bm{\alpha}}
\newcommand{\bbeta}{\bm{\beta}}
\def\spacingset#1{\renewcommand{\baselinestretch}%
{#1}\small\normalsize} \spacingset{1}
\title{
  \textbf{Fast nonparametric spectral density estimation from irregularly sampled data}
} \ifbool{anon}{}{ 
\author{ 
Christopher J. Geoga\thanks{corresponding author: \texttt{geoga@wisc.edu}} \\
\small Department of Statistics \\ 
\small University of Wisconsin-Madison \\
\small Madison, WI 53706
\and 
Paul G. Beckman  
\thanks{PGB is supported in part by the
Office of Naval Research under award \#N00014-21-1-2383 and by the U.S.
Department of Energy, Office of Science, Office of Advanced Scientific Computing
Research, Department of Energy Computational Science Graduate Fellowship under
Award Number DE-SC0022158} \\ 
\small Courant Institute \\ 
\small New York University \\ 
\small New York, NY 10012 \\
}
\date{}
\begin{document}

\maketitle

\begin{abstract}
We introduce a nonparametric spectral density estimator for continuous-time and
continuous-space processes measured at fully irregular locations. Our estimator
is constructed using a weighted nonuniform Fourier sum whose weights yield a
high-accuracy quadrature rule with respect to a user-specified window function.
The resulting estimator significantly reduces the aliasing seen in periodogram
approaches and least squares spectral analysis, sidesteps the dangers of
ill-conditioning of the nonuniform Fourier inverse problem, and can be adapted
to a wide variety of irregular sampling settings. We describe methods for
rapidly computing the necessary weights in various settings, making the
estimator scalable to large datasets. We then provide a theoretical
analysis of sources of bias, and close with demonstrations of the method's
efficacy, including for processes that exhibit very slow spectral decay and are
observed at up to a million locations in multiple dimensions.
  
  \bigskip

  \noindent \textbf{Keywords:} spectral density,
  nonparametrics, nonuniform fast Fourier transform, quadrature
\end{abstract}

\newpage

\section{Introduction} \label{sec:intro}

Let $Y(\bx)$ be a stochastic process indexed by $\bx \in \R^d$ with finite
second moments. Consider the case of a \emph{weakly stationary} $Y(\bx)$ with
$\E Y(\bx) \equiv 0$, so that its covariance function $K(\bx, \bx') :=
\text{Cov}(Y(\bx), Y(\bx'))$ is shift-invariant, meaning $K(\bx, \bx') = K(\bx +
\bm{h}, \bx' + \bm{h})$ for all $\bm{h} \in \R^d$.  In this setting, we may
study the covariance function in a single-argument form as $K(\bx - \bx') =
\text{Cov}(Y(\bx), Y(\bx'))$. By the Wiener-Khinchin theorem
\cite{khintchine1934}, if $K$ is real-valued and continuous, then there exists a
function $S$ that is non-negative, integrable, and symmetric about the origin
(so that $S(\bomg) = S(-\bomg)$) such that
\begin{equation} \label{eq:sdf}
  K(\bx - \bx') = \int_{\R^d} e^{2 \pi i \bomg^T (\bx - \bx')} S(\bomg) \dif
\bomg.
\end{equation}
This function $S$ is called the \emph{spectral density}, and features of $S$
such as integrable singularities or finite moments $\int_{\R^d}
\norm[2]{\bomg}^{2 s} S(\bomg) \dif \bomg < \infty$ have precise implications
regarding sample-path properties of $Y(\bx)$. For these reasons, a large amount
of the theory for prediction, interpolation, and estimation in time series
analysis and Gaussian processes is expressed in the the spectral domain.
See~\cite{brockwell1991, stein1999} for further discussion.

Considering the interpretive value of $S$ and the computational efficiency of
the fast Fourier transform (FFT), the task of obtaining nonparametric estimators
for $S$ is a large area of research. Tools for this problem are particularly
mature for gridded data in one dimension, where a wide variety of methods for
specific applications have been developed and provide excellent results (see
\cite{thomson1982,riedel1995,prieto2007,orfanidis1995,percival2020} and
references therein). In the case of irregular sampling where $\set{x_j}_{j=1}^n$
is not a subset of some Cartesian grid, however, the spectral estimation problem
becomes much more challenging. The primary complication is that one typically
treats such measurements as coming from a continuous-time (or continuous
space-time) process. Unlike the case of truly discrete time series at regular
sampling intervals, in which the spectral density can be represented on the
finite interval $[-\frac{1}{2 \Delta t}, \frac{1}{2 \Delta t}]$, for
continuous-time models $S$ is supported on $\R$ (or $\R^d$ for multi-dimensional
processes). By information-theoretic limits like the Shannon-Nyquist sampling
theorem \cite{shannon1949, landau1967sampling}, the spectral density $S$ may be
supported on frequency bands that are unresolvable from observations at the
given locations $\{x_j\}_{j=1}^n$. 

\begin{figure}[!t]
  \centering
% GNUPLOT: LaTeX picture with Postscript
\begingroup
  \makeatletter
  \providecommand\color[2][]{%
    \GenericError{(gnuplot) \space\space\space\@spaces}{%
      Package color not loaded in conjunction with
      terminal option `colourtext'%
    }{See the gnuplot documentation for explanation.%
    }{Either use 'blacktext' in gnuplot or load the package
      color.sty in LaTeX.}%
    \renewcommand\color[2][]{}%
  }%
  \providecommand\includegraphics[2][]{%
    \GenericError{(gnuplot) \space\space\space\@spaces}{%
      Package graphicx or graphics not loaded%
    }{See the gnuplot documentation for explanation.%
    }{The gnuplot epslatex terminal needs graphicx.sty or graphics.sty.}%
    \renewcommand\includegraphics[2][]{}%
  }%
  \providecommand\rotatebox[2]{#2}%
  \@ifundefined{ifGPcolor}{%
    \newif\ifGPcolor
    \GPcolortrue
  }{}%
  \@ifundefined{ifGPblacktext}{%
    \newif\ifGPblacktext
    \GPblacktexttrue
  }{}%
  % define a \g@addto@macro without @ in the name:
  \let\gplgaddtomacro\g@addto@macro
  % define empty templates for all commands taking text:
  \gdef\gplbacktext{}%
  \gdef\gplfronttext{}%
  \makeatother
  \ifGPblacktext
    % no textcolor at all
    \def\colorrgb#1{}%
    \def\colorgray#1{}%
  \else
    % gray or color?
    \ifGPcolor
      \def\colorrgb#1{\color[rgb]{#1}}%
      \def\colorgray#1{\color[gray]{#1}}%
      \expandafter\def\csname LTw\endcsname{\color{white}}%
      \expandafter\def\csname LTb\endcsname{\color{black}}%
      \expandafter\def\csname LTa\endcsname{\color{black}}%
      \expandafter\def\csname LT0\endcsname{\color[rgb]{1,0,0}}%
      \expandafter\def\csname LT1\endcsname{\color[rgb]{0,1,0}}%
      \expandafter\def\csname LT2\endcsname{\color[rgb]{0,0,1}}%
      \expandafter\def\csname LT3\endcsname{\color[rgb]{1,0,1}}%
      \expandafter\def\csname LT4\endcsname{\color[rgb]{0,1,1}}%
      \expandafter\def\csname LT5\endcsname{\color[rgb]{1,1,0}}%
      \expandafter\def\csname LT6\endcsname{\color[rgb]{0,0,0}}%
      \expandafter\def\csname LT7\endcsname{\color[rgb]{1,0.3,0}}%
      \expandafter\def\csname LT8\endcsname{\color[rgb]{0.5,0.5,0.5}}%
    \else
      % gray
      \def\colorrgb#1{\color{black}}%
      \def\colorgray#1{\color[gray]{#1}}%
      \expandafter\def\csname LTw\endcsname{\color{white}}%
      \expandafter\def\csname LTb\endcsname{\color{black}}%
      \expandafter\def\csname LTa\endcsname{\color{black}}%
      \expandafter\def\csname LT0\endcsname{\color{black}}%
      \expandafter\def\csname LT1\endcsname{\color{black}}%
      \expandafter\def\csname LT2\endcsname{\color{black}}%
      \expandafter\def\csname LT3\endcsname{\color{black}}%
      \expandafter\def\csname LT4\endcsname{\color{black}}%
      \expandafter\def\csname LT5\endcsname{\color{black}}%
      \expandafter\def\csname LT6\endcsname{\color{black}}%
      \expandafter\def\csname LT7\endcsname{\color{black}}%
      \expandafter\def\csname LT8\endcsname{\color{black}}%
    \fi
  \fi
    \setlength{\unitlength}{0.0500bp}%
    \ifx\gptboxheight\undefined%
      \newlength{\gptboxheight}%
      \newlength{\gptboxwidth}%
      \newsavebox{\gptboxtext}%
    \fi%
    \setlength{\fboxrule}{0.5pt}%
    \setlength{\fboxsep}{1pt}%
    \definecolor{tbcol}{rgb}{1,1,1}%
\begin{picture}(9060.00,3400.00)%
    \gplgaddtomacro\gplbacktext{%
      \csname LTb\endcsname%%
      \put(803,676){\makebox(0,0)[r]{\strut{}\footnotesize $10^{-10}$}}%
      \csname LTb\endcsname%%
      \put(803,920){\makebox(0,0)[r]{\strut{}\footnotesize $10^{-9}$}}%
      \csname LTb\endcsname%%
      \put(803,1164){\makebox(0,0)[r]{\strut{}\footnotesize $10^{-8}$}}%
      \csname LTb\endcsname%%
      \put(803,1408){\makebox(0,0)[r]{\strut{}\footnotesize $10^{-7}$}}%
      \csname LTb\endcsname%%
      \put(803,1652){\makebox(0,0)[r]{\strut{}\footnotesize $10^{-6}$}}%
      \csname LTb\endcsname%%
      \put(803,1896){\makebox(0,0)[r]{\strut{}\footnotesize $10^{-5}$}}%
      \csname LTb\endcsname%%
      \put(803,2140){\makebox(0,0)[r]{\strut{}\footnotesize $10^{-4}$}}%
      \csname LTb\endcsname%%
      \put(803,2384){\makebox(0,0)[r]{\strut{}\footnotesize $10^{-3}$}}%
      \csname LTb\endcsname%%
      \put(803,2628){\makebox(0,0)[r]{\strut{}\footnotesize $10^{-2}$}}%
      \csname LTb\endcsname%%
      \put(803,2872){\makebox(0,0)[r]{\strut{}\footnotesize $10^{-1}$}}%
      \csname LTb\endcsname%%
      \put(1654,436){\makebox(0,0){\strut{}\footnotesize 400}}%
      \csname LTb\endcsname%%
      \put(2409,436){\makebox(0,0){\strut{}\footnotesize 800}}%
      \csname LTb\endcsname%%
      \put(3163,436){\makebox(0,0){\strut{}\footnotesize 1200}}%
    }%
    \gplgaddtomacro\gplfronttext{%
      \csname LTb\endcsname%%
      \put(2033,76){\makebox(0,0){\strut{}\footnotesize $\omega$}}%
      \csname LTb\endcsname%%
      \put(2033,3232){\makebox(0,0){\strut{}\small Lomb Scargle}}%
    }%
    \gplgaddtomacro\gplbacktext{%
      \csname LTb\endcsname%%
      \put(3289,676){\makebox(0,0)[r]{\strut{}}}%
      \csname LTb\endcsname%%
      \put(3289,920){\makebox(0,0)[r]{\strut{}}}%
      \csname LTb\endcsname%%
      \put(3289,1164){\makebox(0,0)[r]{\strut{}}}%
      \csname LTb\endcsname%%
      \put(3289,1408){\makebox(0,0)[r]{\strut{}}}%
      \csname LTb\endcsname%%
      \put(3289,1652){\makebox(0,0)[r]{\strut{}}}%
      \csname LTb\endcsname%%
      \put(3289,1896){\makebox(0,0)[r]{\strut{}}}%
      \csname LTb\endcsname%%
      \put(3289,2140){\makebox(0,0)[r]{\strut{}}}%
      \csname LTb\endcsname%%
      \put(3289,2384){\makebox(0,0)[r]{\strut{}}}%
      \csname LTb\endcsname%%
      \put(3289,2628){\makebox(0,0)[r]{\strut{}}}%
      \csname LTb\endcsname%%
      \put(3289,2872){\makebox(0,0)[r]{\strut{}}}%
      \csname LTb\endcsname%%
      \put(4140,436){\makebox(0,0){\strut{}\footnotesize 400}}%
      \csname LTb\endcsname%%
      \put(4895,436){\makebox(0,0){\strut{}\footnotesize 800}}%
      \csname LTb\endcsname%%
      \put(5649,436){\makebox(0,0){\strut{}\footnotesize 1200}}%
    }%
    \gplgaddtomacro\gplfronttext{%
      \csname LTb\endcsname%%
      \put(4519,76){\makebox(0,0){\strut{}\footnotesize $\omega$}}%
      \csname LTb\endcsname%%
      \put(4519,3232){\makebox(0,0){\strut{}\small Forward: $\alpha_j = g(x_j)$}}%
    }%
    \gplgaddtomacro\gplbacktext{%
      \csname LTb\endcsname%%
      \put(5775,676){\makebox(0,0)[r]{\strut{}}}%
      \csname LTb\endcsname%%
      \put(5775,920){\makebox(0,0)[r]{\strut{}}}%
      \csname LTb\endcsname%%
      \put(5775,1164){\makebox(0,0)[r]{\strut{}}}%
      \csname LTb\endcsname%%
      \put(5775,1408){\makebox(0,0)[r]{\strut{}}}%
      \csname LTb\endcsname%%
      \put(5775,1652){\makebox(0,0)[r]{\strut{}}}%
      \csname LTb\endcsname%%
      \put(5775,1896){\makebox(0,0)[r]{\strut{}}}%
      \csname LTb\endcsname%%
      \put(5775,2140){\makebox(0,0)[r]{\strut{}}}%
      \csname LTb\endcsname%%
      \put(5775,2384){\makebox(0,0)[r]{\strut{}}}%
      \csname LTb\endcsname%%
      \put(5775,2628){\makebox(0,0)[r]{\strut{}}}%
      \csname LTb\endcsname%%
      \put(5775,2872){\makebox(0,0)[r]{\strut{}}}%
      \csname LTb\endcsname%%
      \put(6626,436){\makebox(0,0){\strut{}\footnotesize 400}}%
      \csname LTb\endcsname%%
      \put(7381,436){\makebox(0,0){\strut{}\footnotesize 800}}%
      \csname LTb\endcsname%%
      \put(8135,436){\makebox(0,0){\strut{}\footnotesize 1200}}%
    }%
    \gplgaddtomacro\gplfronttext{%
      \csname LTb\endcsname%%
      \put(7005,76){\makebox(0,0){\strut{}\footnotesize $\omega$}}%
      \csname LTb\endcsname%%
      \put(7005,3232){\makebox(0,0){\strut{}\small Our estimator}}%
    }%
    \gplbacktext
    \put(0,0){\includegraphics[width={453.00bp},height={170.00bp}]{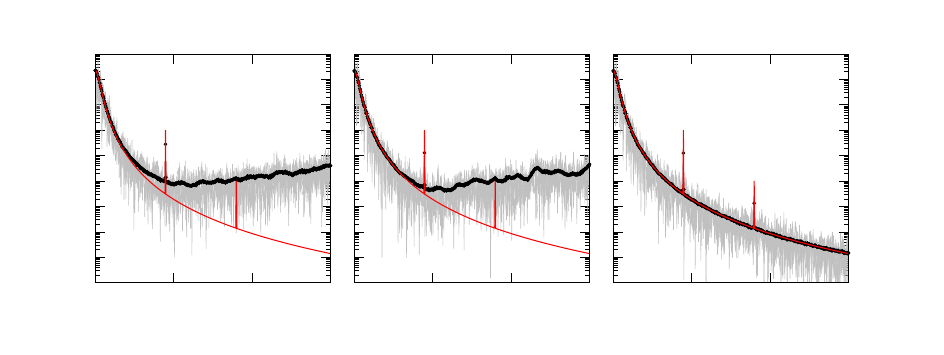}}%
    \gplfronttext
  \end{picture}%
\endgroup
  \caption{A comparison of two existing estimators and the estimator proposed in
  this work, each estimating the spectral density $S$ of a process $Y(x)$ on
  $[0,1]$ with Mat\'ern covariance and two faint spectral lines sampled on a
  jittered grid. Individual estimates for many samples are shown in grey, their
  average is shown in black, and the true SDF $S$ is shown in red. The forward
  estimator is as defined in (\ref{eq:shat}) and $g(x)$ is a standard Kaiser
  window.}
  \label{fig:intro_demo}
\end{figure}

Nonetheless, a large number of approaches have been proposed for estimating
spectral densities in this irregular sampling setting. In the specific case of
identifying discrete periodic components, methods like
ESPRIT~\cite{roy1989esprit} and MUSIC~\cite{schmidt1986multiple} have been
popular and successful (see \cite{babu2010} for a more complete review of
methods in this space). Methods for estimating continuous spectral densities are
similarly numerous, including direct weighted summation~\cite{fuentes2002},
maximum likelihood~\cite{stoica2011maximum}, interpolation-based
methods~\cite{early2020,cui2024fast}, imputation~\cite{guinness2019},
approximate likelihood methods~\cite{matsuda2009,rao2018}, and likelihood-free
methods~\cite{bandyopadhyay2015}. Unfortunately, many of these methods apply
only to data measured on a lattice with gaps. One popular estimator that lifts
this limitation and can be computed for fully irregular data is the
\emph{Lomb-Scargle} periodogram~\cite{lomb1976, scargle1982studies,
vanderplas2018}, which fits a sinusoidal model to the data using least squares
one frequency at a time. Treating regression coefficients at distinct
frequencies as independent estimates of the continuous spectral density in
this way is appealing, as it immediately reduces computational burden and
ostensibly avoids the ill-conditioning inherent in jointly estimation of the
spectral density at multiple frequencies. But as Figure~\ref{fig:intro_demo}
demonstrates, least squares approaches of this variety are far from immune to
bias issues in highly irregular sampling regimes, and aliasing biases can
obscure important spectral features like faint periodic signals.

For gridded data, weighted estimators of the form
\begin{equation} \label{eq:shat}
  \hat{S}(\xi) = \abs{ \sum_{j=1}^n e^{-2 \pi i \xi x_j} \alpha_j \, y_j }^2
\end{equation}
are a foundational tool for spectral analysis.  In this work, we demonstrate
that one can design weights $\set{\alpha_j}_{j=1}^n$ to reduce spectral leakage
even in irregular sampling settings, bringing estimators of the
form~\eqref{eq:shat} for ungridded data closer to parity with their gridded
counterparts. In addition, by leveraging the nonuniform fast Fourier transform
(NUFFT)~\cite{dutt1993,barnett2019}, this estimator can be evaluated at $m$
frequencies $\{\xi_k\}_{k=1}^m$ in $\bO(n + m\log m)$ time, affording some of
the scalability enjoyed by gridded methods which use the FFT. Moreover, we
demonstrate that the NUFFT and related algorithms can be used to accelerate the
computation of weights $\set{\alpha_j}_{j=1}^n$.

Traditionally, weights are designed to be interpretable in the time domain, for
example by selecting $\alpha_j = g(x_j)$ for some \emph{window function} $g(x)$
whose Fourier transform $G(\omg)$ is well-concentrated in a narrow spectral
interval. The key observation we provide in this work is that this design
principle does not carry well to the irregular sampling setting, and instead we
propose to design weights $\set{\alpha_j}_{j=1}^n$ such that
\begin{equation} \label{eq:quad-g}
  H_{\alpha}(\omg) 
  = \sum_{j=1}^n e^{-2 \pi i \omg x_j} \alpha_j
  \approx 
  \int_{a}^b e^{-2 \pi i \omg x} g(x) \dif{x}
  =
  G(\omg).
\end{equation}
For fully gridded data, this design problem is straightforward, and essentially
coincides with the simple $\alpha_j = g(x_j)$ estimator. In the case of
irregular sampling, however, one must carefully specify and numerically solve a
version of~\eqref{eq:quad-g} using tools and perspectives from numerical
quadrature. Figure~\ref{fig:intro_demo} illustrates a significant
improvement in accuracy of our proposed estimator over the na\"ive estimator
with $\alpha_j = g(x_j)$. We now discuss aspects of the design and computation
of the weights $\set{\alpha_j}_{j=1}^n$ that can be done rapidly and to high
accuracy for fully irregular data.

\section{A new spectral density estimator} \label{sec:method}

For expositional clarity, we develop the method here in one dimension. Where
relevant, we provide comments on technical differences in multidimensional
settings, but otherwise the translation of these tools and algorithms to higher
dimensions is direct. A performant software companion to this work which
implements all relevant methods is available at
\texttt{https://github.com/cgeoga/IrregularSpectra.jl}. Proofs for all results
are given in Appendix \ref{app:proofs}.

\subsection{Window functions in standard time series methodology}
\label{sec:windows}

As mentioned above, to reduce finite-sample sources of bias for data being
modeled as truly discrete-time, many nonparametric estimators in this setting
make use of a \emph{window function}, which we will denote $g$, and for the
moment will assume is defined on $[0,1]$ with unit norm $\norm[{L^2([0,1])}]{g}
= 1$ and Fourier transform $G(\omg) = \int_0^1 e^{-2 \pi i \omg x} g(x) \dif x$.
For gridded one-dimensional data sampled on $\set{0, 1, \dots, n-1}$, unlike in
the continuous-time or -space case of~\eqref{eq:sdf}, there is a finite Nyquist
frequency and so the spectral density is modeled on the finite domain as $S :
[-\frac{1}{2},-\frac{1}{2}] \to [0,\infty)$. A popular estimator of this
discrete-time $S$ at frequency $k/n$ for $-n/2 \leq k < n/2$ is
\begin{equation} \label{eq:ts_sdf}
  \hat{S}(k/n) = \abs{
    \frac{1}{\norm[2]{\bm{g}}} \sum_{j=0}^{n-1} e^{-2 \pi i k j/n} g_j y_j
  }^2,
\end{equation}
where $\bm{g} := [g(j/n)]_{j=0}^{n-1}$ is the vector generated by the window
function $g$. A standard computation shows that in this setting we have
\begin{equation} \label{eq:1dsdf_conv}
  \E \hat{S}(k/n) 
  = \int_{-1/2}^{1/2} |G_n(k/n - \omg)|^2 S(\omg) \dif \omg
  = \big(\abs{G_n}^2 * S\big)(k/n)
\end{equation}
where $G_n(\omg) := \frac{1}{\norm[2]{\bm{g}}} \sum_{j=0}^{n-1} e^{-2 \pi i j
\omg} g_j$. Naturally, the ideal window function $g$ would be one that
corresponds to $\abs{G_n(\omg)}^2 \approx \delta(\omega)$, so that
$\hat{S}(k/n)$ would be an unbiased estimator of $S(k/n)$.  For finite $n$, the
Heisenberg uncertainty principle enforces an upper bound on how closely
$\abs{G_n}^2$ can possibly approximate the idealized, perfectly concentrated
delta function~\cite{slepian1983, grochenig2001foundations}. Nevertheless,
designing window functions with favorable concentration properties remains a
driving motivation and important practical concern. The choice $g_j \equiv 1$
implicitly corresponds to selecting the window ${g(x) = \ind{x\in[0,1]}}$, in
which case the estimator~\eqref{eq:ts_sdf} is referred to as the
\emph{periodogram} \citep{brockwell1991}. In this case one obtains
\begin{align} \label{eq:Gn_int}
  n^{-\frac{1}{2}} G_n(k/n) 
  = \frac{1}{n} \sum_{j=0}^{n-1} e^{-2\pi i k j / n}
  &\approx 
  \int_0^1 e^{-2\pi i k x} \dif{x}
  = e^{-ik\pi} \sinc(\pi k)
  = G(k).
\end{align}
The approximate equality above is due to interpreting the summation as an
approximation to the continuous integral using the trapezoidal rule. For
integrands that are smooth and periodic, the trapezoidal rule converges
exponentially~\cite{trefethen2014} and the above approximation is very accurate.
While this is not the typical perspective taken in analyzing window functions,
this quadrature view forms the basis of the new methods proposed here.

\subsection{Extension to irregular sampling} \label{sec:extension}

Let $\by = [Y(x_j)]_{j=1}^n$ denote measurements of the process $Y(x)$ at
potentially non-equidistant locations $\{x_j\}_{j=1}^n$. As discussed above, we
propose to compute estimators of the form~\eqref{eq:shat} where the weights
$\bm{\alpha} \in \R^n$ are designed specifically to satisfy~\eqref{eq:quad-g}
for all $\abs{\omega} \leq \Omega$. This $\Omega$ is the \textit{maximum
controlled frequency}, a user-selected parameter that can be thought of as an
analog to the Nyquist frequency adapted to the context of recovering continuous
window functions from irregular data. Given such weights $\bal$ we may express
the variance of this weighted Fourier sum as a convolution, just as in
\eqref{eq:1dsdf_conv}. Under the assumption of~\eqref{eq:quad-g}, this
convolution may be decomposed as
\begin{align} \label{eq:bias_decomp}
  \E \hat{S}(\xi) 
  &= \int_{\R} |H_{\alpha}(\xi - \omg)|^2 S(\omg) \dif \omg \notag
  \\
  &\approx
  S(\xi) +
  \underbrace{
  \int_{\xi - W}^{\xi + W}
  |G(\xi - \omg)|^2 S(\omg) \dif \omg - S(\xi)
  }_{\text{window-induced bias}} \\ 
  \notag
  &\hspace*{1.35cm} +
  \underbrace{
  \int_{-\infty}^{\xi - \Omega}
  |H_{\alpha}(\xi - \omg)|^2 S(\omg) \dif \omg
  }_{\text{lower aliasing bias}}
  +
  \underbrace{
  \int_{\xi + \Omega}^{\infty}
  |H_{\alpha}(\xi - \omg)|^2 S(\omg) \dif \omg
  }_{\text{upper aliasing bias}}.
\end{align}
The domain in the second term above is reduced because, by assumption on $G$,
the contribution in the intervals $[-\Omega,-W]$ and $[W,\Omega]$ can be made
exceptionally small, for example
\begin{equation*} %\label{eq:}
  \int_{[-\Omega,-W]\cap[W,\Omega]} |G(\xi - \omg)|^2 S(\omg) \dif \omg
  < 10^{-16}
\end{equation*}
for $W \ll \Omega$. If $S$ does not vary wildly on the narrow interval $[\xi-W,
\xi+W]$, then the second term will also be small.  With~\eqref{eq:quad-g}
and~\eqref{eq:bias_decomp} in mind, there are three main design considerations
which must be addressed in order to compute suitable weights $\bm{\alpha}$:
\begin{itemize}
  \item[\textbf{1.}] selecting the maximum controlled frequency $\Omega$ and
  computing the weights $\bal$ so that $H_{\bal}(\omega) \approx G(\omega)$ for
  all $\abs{\omega} \leq \Omega$, 
  \item[\textbf{2.}] choosing the window $g$ so that $G$ is maximally
  concentrated in order to minimize the window-induced bias, and
  \item[\textbf{3.}] balancing the fundamental tradeoff between increasing
  $\Omega$, which permits the estimation of $S$ at higher frequencies, and
  minimizing $\norm[2]{\bal}$, which has a direct relationship with the
  magnitude of the aliasing biases.
\end{itemize}
In the following three sections we treat each of these considerations in turn,
establishing a theoretical basis for the accuracy and limitations of our
estimator.

\subsection{Weight design} \label{sec:weights}

For the moment, let us assume that the window function $g$ is given. We will
return to the problem of selecting $g$ in Section~\ref{sec:window}. The problem
of computing weights $\bm{\alpha}$ which satisfy~\eqref{eq:quad-g} is precisely
the problem of designing a \emph{quadrature rule} with fixed node locations
$\set{x_j}_{j=1}^n$ which accurately integrates the family of functions $\{x
\mapsto e^{-2\pi i\omega x}\}_{\omega\in[-\Omega, \Omega]}$ under the weight
function $g$. If the nodes $\set{x_j}_{j=1}^n$ can be chosen freely, then one
could compute a \textit{Gaussian quadrature rule} which integrates these
functions to high accuracy~\cite{golub1969calculation,yarvin1998generalized}. A
particularly simple choice would be to choose $\set{x_j}_{j=1}^n$ to be the
zeros of the $n$-th order Legendre polynomial, in which case
\emph{Gauss-Legendre} quadrature weights $\set{\gamma_j}_{j=1}^n$ can be
computed and $\bm{\alpha} = \set{\gamma_j g(x_j)}_{j=1}^n$ will
satisfy~\eqref{eq:quad-g}. These nodes and weights can be computed very rapidly
and accurately even for large $n$~\cite{glaser2007fast,hale2013fast}, and so for
practitioners who have the freedom to choose design points, this course of
action provides an immediate and straightforward choice of weights.

\begin{figure}[!t]
  \centering
% GNUPLOT: LaTeX picture with Postscript
\begingroup
  \makeatletter
  \providecommand\color[2][]{%
    \GenericError{(gnuplot) \space\space\space\@spaces}{%
      Package color not loaded in conjunction with
      terminal option `colourtext'%
    }{See the gnuplot documentation for explanation.%
    }{Either use 'blacktext' in gnuplot or load the package
      color.sty in LaTeX.}%
    \renewcommand\color[2][]{}%
  }%
  \providecommand\includegraphics[2][]{%
    \GenericError{(gnuplot) \space\space\space\@spaces}{%
      Package graphicx or graphics not loaded%
    }{See the gnuplot documentation for explanation.%
    }{The gnuplot epslatex terminal needs graphicx.sty or graphics.sty.}%
    \renewcommand\includegraphics[2][]{}%
  }%
  \providecommand\rotatebox[2]{#2}%
  \@ifundefined{ifGPcolor}{%
    \newif\ifGPcolor
    \GPcolortrue
  }{}%
  \@ifundefined{ifGPblacktext}{%
    \newif\ifGPblacktext
    \GPblacktexttrue
  }{}%
  % define a \g@addto@macro without @ in the name:
  \let\gplgaddtomacro\g@addto@macro
  % define empty templates for all commands taking text:
  \gdef\gplbacktext{}%
  \gdef\gplfronttext{}%
  \makeatother
  \ifGPblacktext
    % no textcolor at all
    \def\colorrgb#1{}%
    \def\colorgray#1{}%
  \else
    % gray or color?
    \ifGPcolor
      \def\colorrgb#1{\color[rgb]{#1}}%
      \def\colorgray#1{\color[gray]{#1}}%
      \expandafter\def\csname LTw\endcsname{\color{white}}%
      \expandafter\def\csname LTb\endcsname{\color{black}}%
      \expandafter\def\csname LTa\endcsname{\color{black}}%
      \expandafter\def\csname LT0\endcsname{\color[rgb]{1,0,0}}%
      \expandafter\def\csname LT1\endcsname{\color[rgb]{0,1,0}}%
      \expandafter\def\csname LT2\endcsname{\color[rgb]{0,0,1}}%
      \expandafter\def\csname LT3\endcsname{\color[rgb]{1,0,1}}%
      \expandafter\def\csname LT4\endcsname{\color[rgb]{0,1,1}}%
      \expandafter\def\csname LT5\endcsname{\color[rgb]{1,1,0}}%
      \expandafter\def\csname LT6\endcsname{\color[rgb]{0,0,0}}%
      \expandafter\def\csname LT7\endcsname{\color[rgb]{1,0.3,0}}%
      \expandafter\def\csname LT8\endcsname{\color[rgb]{0.5,0.5,0.5}}%
    \else
      % gray
      \def\colorrgb#1{\color{black}}%
      \def\colorgray#1{\color[gray]{#1}}%
      \expandafter\def\csname LTw\endcsname{\color{white}}%
      \expandafter\def\csname LTb\endcsname{\color{black}}%
      \expandafter\def\csname LTa\endcsname{\color{black}}%
      \expandafter\def\csname LT0\endcsname{\color{black}}%
      \expandafter\def\csname LT1\endcsname{\color{black}}%
      \expandafter\def\csname LT2\endcsname{\color{black}}%
      \expandafter\def\csname LT3\endcsname{\color{black}}%
      \expandafter\def\csname LT4\endcsname{\color{black}}%
      \expandafter\def\csname LT5\endcsname{\color{black}}%
      \expandafter\def\csname LT6\endcsname{\color{black}}%
      \expandafter\def\csname LT7\endcsname{\color{black}}%
      \expandafter\def\csname LT8\endcsname{\color{black}}%
    \fi
  \fi
    \setlength{\unitlength}{0.0500bp}%
    \ifx\gptboxheight\undefined%
      \newlength{\gptboxheight}%
      \newlength{\gptboxwidth}%
      \newsavebox{\gptboxtext}%
    \fi%
    \setlength{\fboxrule}{0.5pt}%
    \setlength{\fboxsep}{1pt}%
    \definecolor{tbcol}{rgb}{1,1,1}%
\begin{picture}(8500.00,3400.00)%
    \gplgaddtomacro\gplbacktext{%
      \csname LTb\endcsname%%
      \put(747,813){\makebox(0,0)[r]{\strut{}\footnotesize 0}}%
      \csname LTb\endcsname%%
      \put(747,1476){\makebox(0,0)[r]{\strut{}\footnotesize 0.001}}%
      \csname LTb\endcsname%%
      \put(747,2138){\makebox(0,0)[r]{\strut{}\footnotesize 0.002}}%
      \csname LTb\endcsname%%
      \put(1112,436){\makebox(0,0){\strut{}\footnotesize -0.8}}%
      \csname LTb\endcsname%%
      \put(1642,436){\makebox(0,0){\strut{}\footnotesize -0.4}}%
      \csname LTb\endcsname%%
      \put(2173,436){\makebox(0,0){\strut{}\footnotesize 0}}%
      \csname LTb\endcsname%%
      \put(2703,436){\makebox(0,0){\strut{}\footnotesize 0.4}}%
      \csname LTb\endcsname%%
      \put(3233,436){\makebox(0,0){\strut{}\footnotesize 0.8}}%
    }%
    \gplgaddtomacro\gplfronttext{%
      \csname LTb\endcsname%%
      \put(2172,76){\makebox(0,0){\strut{}\small $x$}}%
      \csname LTb\endcsname%%
      \put(2172,3063){\makebox(0,0){\strut{}\small weight values $\alpha_j$}}%
    }%
    \gplgaddtomacro\gplbacktext{%
      \csname LTb\endcsname%%
      \put(4245,1014){\makebox(0,0)[r]{\strut{}\footnotesize $10^{-20}$}}%
      \csname LTb\endcsname%%
      \put(4245,1436){\makebox(0,0)[r]{\strut{}\footnotesize $10^{-15}$}}%
      \csname LTb\endcsname%%
      \put(4245,1858){\makebox(0,0)[r]{\strut{}\footnotesize $10^{-10}$}}%
      \csname LTb\endcsname%%
      \put(4245,2281){\makebox(0,0)[r]{\strut{}\footnotesize $10^{-5}$}}%
      \csname LTb\endcsname%%
      \put(4245,2703){\makebox(0,0)[r]{\strut{}\footnotesize $10^{0}$}}%
      \csname LTb\endcsname%%
      \put(4345,436){\makebox(0,0){\strut{}\footnotesize 0}}%
      \csname LTb\endcsname%%
      \put(4769,436){\makebox(0,0){\strut{}\footnotesize 20}}%
      \csname LTb\endcsname%%
      \put(5193,436){\makebox(0,0){\strut{}\footnotesize 40}}%
      \csname LTb\endcsname%%
      \put(5617,436){\makebox(0,0){\strut{}\footnotesize 60}}%
      \csname LTb\endcsname%%
      \put(6041,436){\makebox(0,0){\strut{}\footnotesize 80}}%
      \csname LTb\endcsname%%
      \put(6465,436){\makebox(0,0){\strut{}\footnotesize 100}}%
      \csname LTb\endcsname%%
      \put(6889,436){\makebox(0,0){\strut{}\footnotesize 120}}%
    }%
    \gplgaddtomacro\gplfronttext{%
      \csname LTb\endcsname%%
      \put(8188,2584){\makebox(0,0)[r]{\strut{}\footnotesize $|G(\omg)|^2$}}%
      \csname LTb\endcsname%%
      \put(8188,2344){\makebox(0,0)[r]{\strut{}\footnotesize no correction}}%
      \csname LTb\endcsname%%
      \put(8188,2104){\makebox(0,0)[r]{\strut{}\footnotesize irreg. trap.}}%
      \csname LTb\endcsname%%
      \put(8188,1864){\makebox(0,0)[r]{\strut{}\footnotesize our method}}%
      \csname LTb\endcsname%%
      \put(5670,76){\makebox(0,0){\strut{}\small $\omg$}}%
      \csname LTb\endcsname%%
      \put(5670,3063){\makebox(0,0){\strut{}\small $|\sum_{j=1}^n e^{-2 \pi i \omg x_j} \alpha_j|^2$}}%
    }%
    \gplbacktext
    \put(0,0){\includegraphics[width={425.00bp},height={170.00bp}]{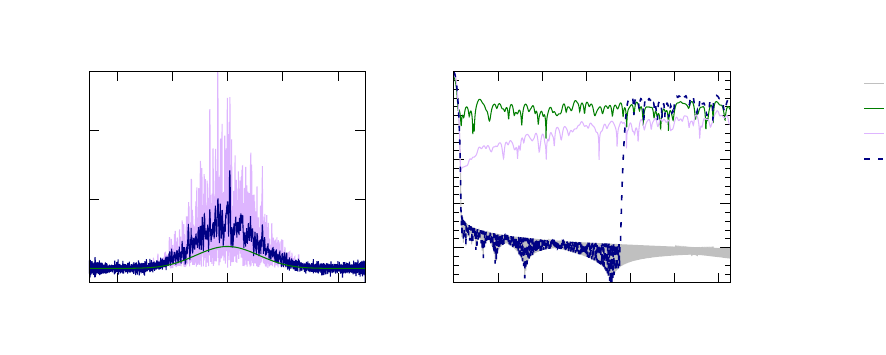}}%
    \gplfronttext
  \end{picture}%
\endgroup
  \caption{Accuracy of various weighting schemes in recovering $G$ using a
  weighted Fourier sum with $\set{x_j}_{j=1}^n \iid \text{Unif}([-1, 1])$.
  Weights for our method are computed with $\Omega = 75$. The window $g$ is
  chosen to be a Kaiser function, discussed in the next subsection, and the
  weights are $\alpha_j = g(x_j)/n$ for no correction and $\alpha_j = g(x_j)
  \gamma_j$ as in~\eqref{eq:irtrap} for the irregular trapezoidal scheme.}
  \label{fig:flat_trap_ours}
\end{figure}

In the much more common case in which data are provided at arbitrary irregular
locations $\set{x_j}_{j=1}^n$ without input from the analyst, no such high-order
quadrature exists a priori. One particularly simple choice of low-order
quadrature rule in this setting is an irregular trapezoidal rule with weights 
\begin{equation} \label{eq:irtrap} 
  \gamma_j = \begin{cases}
    (x_2 - x_1) / 2 & \text{for} \ j=1 \\
    (x_{j+1} - x_{j-1}) / 2 & \text{for} \ 1 < j < n \\
    (x_n - x_{n-1}) / 2 & \text{for} \ j=n.
  \end{cases}
\end{equation}
These weights are sometimes referred to as \textit{density compensation factors}
in imaging inverse
problems~\cite{grochenig1993discrete,viswanathan2010reconstruction,adcock2014stable},
and may be satisfactory for estimating spectral densities at low to moderate
frequencies from densely-sampled or quasi-equispaced data. However, as is
demonstrated in Figure \ref{fig:flat_trap_ours}, even mild gaps in the data lead
to large errors when integrating sufficiently oscillatory functions.

In order to develop an alternative quadrature rule on irregular nodes
$\{x_j\}_{j=1}^n$ which gives higher accuracy, we turn \eqref{eq:quad-g} into a
linear system which can be solved for the weights $\bal$. For this purpose, we
take $\omega_k := \Omega\cos(\frac{2k - 1}{2n} \pi)$ for $k=1,\dots,n$ to be
Chebyshev nodes on $[-\Omega,\Omega]$. Let $\Frt \in \C^{n \times n}$ denote the
nonuniform discrete Fourier matrix given by $\Frt_{jk} := e^{-2 \pi i \omg_j
x_k}$ and define $\bm{b} := [G(\omg_j)]_{j=1}^n$. Then we seek solutions
$\bm{\alpha}$ to the linear system 
\begin{equation} \label{eq:weights_linsys} 
  \Frt \bm{\alpha}
= \bm{b}. 
\end{equation} 
The matrix $\bm{\Frt}$ is exceptionally ill-conditioned in most
cases~\cite{pan2016bad}. However, by using the fact that $G$ is entire, the
following theorem demonstrates that with mild oversampling, a solution
$\bm{\alpha}$ to the linear system~\eqref{eq:weights_linsys}---and thus a
suitable choice of weights satisfying~\eqref{eq:quad-g}---is guaranteed to
exist. 
\begin{theorem} \label{thm:recovery} Take $\Omega > 0$ and $a \leq x_1 < \dots <
  x_n \leq b$. Let $G$ be bandlimited with $\supp(g) \subseteq [a,b]$. Then for
  any $\rho > 1$ there exist weights $\bm{\alpha} \in \R^n$ such that
  \begin{equation*}
    \norm[{L^\infty([-\Omega,\Omega])}]{H_{\bm{\alpha}} - G}
    \leq \frac{4}{\rho - 1} \exp\left\{\frac{\pi}{2} \Omega (b-a) (\rho + \rho^{-1}) - n \log\rho \right\} \left( \norm[{L^1([a,b])}]{g} + \norm[1]{\bm{\alpha}} \right).
  \end{equation*}
\end{theorem}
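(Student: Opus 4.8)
The plan is to take $\bal$ to be a solution of the linear system~\eqref{eq:weights_linsys}, so that $H_{\bal}$ agrees with $G$ at the $n$ Chebyshev nodes $\{\omega_k\}_{k=1}^n\subset[-\Omega,\Omega]$, and then to bound the difference $E:=H_{\bal}-G$ on $[-\Omega,\Omega]$ using that $E$ extends to an entire function of modest exponential type. A real solution exists whenever $\Frt$ is nonsingular: since $g$ is real-valued, the $x_j$ are real, and the first-kind Chebyshev nodes satisfy $\omega_{n+1-k}=-\omega_k$, conjugating~\eqref{eq:weights_linsys} merely reverses the order of its rows (and of the right-hand side $\bm{b}$), so uniqueness of $\bal=\Frt^{-1}\bm{b}$ forces $\bal\in\R^n$. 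A construction of suitable $\bal$ that avoids even this genericity assumption is indicated at the end.

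First I would reduce to a symmetric interval. Replacing each $x_j$ by $x_j-c$ and $g(\cdot)$ by $g(\cdot+c)$ with $c=(a+b)/2$ multiplies both $H_{\bal}(\omega)$ and $G(\omega)$ by $e^{2\pi i\omega c}$; this preserves the interpolation conditions and changes none of $\norm[{L^\infty([-\Omega,\Omega])}]{E}$, $\norm[{L^1([a,b])}]{g}$, or $\norm[1]{\bal}$, since the extra factor has unit modulus on the real axis. Hence I may assume $[a,b]=[-\tfrac{b-a}{2},\tfrac{b-a}{2}]$, so $|x_j|\le\tfrac{b-a}{2}$ and $\supp g\subseteq[-\tfrac{b-a}{2},\tfrac{b-a}{2}]$. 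Both $H_{\bal}(\omega)=\sum_j\alpha_j e^{-2\pi i\omega x_j}$ and $G(\omega)=\int e^{-2\pi i\omega x}g(x)\dif x$ are then entire. For a point $z$ of the Bernstein ellipse $\mathcal{E}_\rho$ attached to $[-\Omega,\Omega]$, i.e.\ $z=\tfrac{\Omega}{2}(\rho e^{i\theta}+\rho^{-1}e^{-i\theta})$, one has $|\Im z|\le|z|\le\tfrac{\Omega}{2}(\rho+\rho^{-1})$, so $|e^{-2\pi izx}|=e^{2\pi x\Im z}\le e^{2\pi|x|\,|\Im z|}\le\exp\{\tfrac{\pi}{2}\Omega(b-a)(\rho+\rho^{-1})\}$ for all $|x|\le\tfrac{b-a}{2}$. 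Summing against $\bal$ and integrating against $g$ gives
\[
  \max_{z\in\mathcal{E}_\rho}\bigl(|H_{\bal}(z)|+|G(z)|\bigr)\;\le\;\exp\{\tfrac{\pi}{2}\Omega(b-a)(\rho+\rho^{-1})\}\bigl(\norm[1]{\bal}+\norm[{L^1([a,b])}]{g}\bigr).
\]

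Next I would invoke the standard Chebyshev-interpolation estimate. The difference $E=H_{\bal}-G$ is entire, hence analytic on and inside $\mathcal{E}_\rho$, and it vanishes at all $n$ nodes $\omega_k$, so the degree-$(n-1)$ polynomial interpolating $E$ at these nodes is identically zero. The classical bound for interpolation at Chebyshev points (see e.g.~\cite{trefethen2014}, affinely rescaled from $[-1,1]$ to $[-\Omega,\Omega]$) then gives $\norm[{L^\infty([-\Omega,\Omega])}]{E}\le\frac{4}{\rho-1}\rho^{-n}\max_{z\in\mathcal{E}_\rho}|E(z)|$, the constant $\tfrac{4}{\rho-1}$ and the decay $\rho^{-n}$ arising from the geometric decay of the Chebyshev coefficients of a function analytic in $\mathcal{E}_\rho$ together with the aliasing identity for the interpolation error. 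Substituting the previous display and writing $\rho^{-n}=e^{-n\log\rho}$ yields the stated inequality. I expect the delicate point to be the prefactor of this intermediate step: obtaining the exponent $\tfrac{\pi}{2}\Omega(b-a)(\rho+\rho^{-1})$ with the claimed constant requires first recentering $[a,b]$ (so that the relevant half-width is $(b-a)/2$ rather than $\max(|a|,|b|)$) and verifying that recentering leaves every quantity in the statement invariant, and then settling for the clean bound $|\Im z|\le|z|\le\tfrac{\Omega}{2}(\rho+\rho^{-1})$ on $\mathcal{E}_\rho$ rather than the sharper semi-minor-axis bound.

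Finally, the genericity caveat is removable. Instead of solving~\eqref{eq:weights_linsys} one may take $\bal$ to be the unique real solution of the always-invertible Vandermonde system $\sum_{j=1}^n\alpha_j x_j^m=\int_a^b x^m g(x)\dif x$ for $m=0,\dots,n-1$, so that $\sum_j\alpha_j q(x_j)=\int_a^b q(x)g(x)\dif x$ for every polynomial $q$ with $\deg q<n$. Then, for any such $q$, the polynomial parts cancel and
\[
  H_{\bal}(\omega)-G(\omega)=\sum_{j=1}^n\alpha_j\bigl(e^{-2\pi i\omega x_j}-q(x_j)\bigr)-\int_a^b\bigl(e^{-2\pi i\omega x}-q(x)\bigr)g(x)\dif x,
\]
whence $\abs{H_{\bal}(\omega)-G(\omega)}\le\bigl(\norm[1]{\bal}+\norm[{L^1([a,b])}]{g}\bigr)\inf_{\deg q<n}\norm[{L^\infty([a,b])}]{e^{-2\pi i\omega(\cdot)}-q}$, and the best-approximation error is again controlled, exactly as above, by the size of the entire function $x\mapsto e^{-2\pi i\omega x}$ on a Bernstein ellipse for $[a,b]$. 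Either route has the same analytic core: a function analytic in a Bernstein ellipse whose size there is at most $\exp\{\tfrac{\pi}{2}\Omega(b-a)(\rho+\rho^{-1})\}$ times the natural norm is approximated to accuracy $O(\rho^{-n})$ by an object (a polynomial, or $H_{\bal}$) that matches it at $n$ Chebyshev nodes.
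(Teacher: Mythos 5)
Your proposal is correct and follows essentially the same route as the paper's proof: solve the interpolation conditions $H_{\bm{\alpha}}(\omega_k)=G(\omega_k)$ at $n$ Chebyshev nodes on $[-\Omega,\Omega]$, then apply the classical Bernstein-ellipse bound for Chebyshev interpolation to the entire function $H_{\bm{\alpha}}-G$, whose modulus on $E_\rho$ is controlled by $\exp\{\tfrac{\pi}{2}\Omega(b-a)(\rho+\rho^{-1})\}\left(\norm[{L^1([a,b])}]{g}+\norm[1]{\bm{\alpha}}\right)$. Your extra care about the realness of $\bm{\alpha}$ (via the symmetry of the Chebyshev nodes) and about possible singularity of $\bm{\Frt}$ (via the Vandermonde moment-matching fallback) patches points the paper's proof passes over silently, but the analytic core is identical.
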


Note that we can choose the rate of decay $\rho$ in Theorem~\ref{thm:recovery}
to be arbitrarily large, at the cost of an increased multiplicative constant.
Therefore, this theorem demonstrates that the approximation error between
$H_{\bal}(\omega)$ and $G(\omega)$ on the interval $\omega \in [-\Omega,\Omega]$
decays \textit{superexponentially} in $n$ (i.e. faster than any exponential),
with a rate and prefactor which depend on the space-frequency product
$2\Omega(b-a)$ and the choice of $\rho$.

To make this result more concrete, we provide the following corollary, which
fixes $\rho$, demonstrating the oversampling necessary to accurately recover the
window $G$ on an interval $[-\Omega,\Omega]$.

\begin{corollary} \label{cor:recovery} Take $\epsilon > 10^{-16}$ and $n > 0$.
Define the pseudo-Nyquist frequency $\Omega_{\text{\normalfont nyq}} :=
\frac{n}{2(b-a)}$. Then under the assumptions of Theorem \ref{thm:recovery}, for
any $\Omega$ such that
  \begin{equation*}
    \Omega 
    \leq \frac{n - 35}{5(b-a)}
    = \frac{2}{5} \, \Omega_{\text{\normalfont nyq}} - \frac{7}{b-a},
  \end{equation*}
  there exist weights $\bm{\alpha} \in \R^n$ such that
  \begin{equation*}
    \norm[{L^\infty([-\Omega,\Omega])}]{H_{\bm{\alpha}} - G}
    \leq \epsilon \left( \norm[{L^1([a,b])}]{g} + \norm[1]{\bm{\alpha}} \right)
  \end{equation*}
\end{corollary}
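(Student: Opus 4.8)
The plan is to apply Theorem~\ref{thm:recovery} at a single fixed value of $\rho$ and to show that the hypothesis on $\Omega$ forces the exponent in that bound to be sufficiently negative. I would take $\rho = 3$. Since $\tfrac{4}{\rho-1} = 2$ and $\tfrac{\pi}{2}(\rho+\rho^{-1}) = \tfrac{\pi}{2}\cdot\tfrac{10}{3} = \tfrac{5\pi}{3}$, Theorem~\ref{thm:recovery} then produces weights $\bm{\alpha} \in \R^n$ with
\begin{equation*}
  \norm[{L^\infty([-\Omega,\Omega])}]{H_{\bm{\alpha}} - G}
  \leq 2 \exp\left\{\tfrac{5\pi}{3}\,\Omega(b-a) - n\log 3\right\}\left(\norm[{L^1([a,b])}]{g} + \norm[1]{\bm{\alpha}}\right).
\end{equation*}
If $n \leq 35$ the hypothesis $\Omega \le \frac{n-35}{5(b-a)}$ admits no $\Omega > 0$, so the statement is vacuous; hence I may assume $n > 35$, in which case the hypothesis is exactly $n \ge 5\,\Omega(b-a) + 35$.

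Substituting this lower bound on $n$ into the exponent (legitimate since $\log 3 > 0$) gives
\begin{equation*}
  \tfrac{5\pi}{3}\,\Omega(b-a) - n\log 3
  \;\le\; 5\,\Omega(b-a)\left(\tfrac{\pi}{3} - \log 3\right) - 35\log 3.
\end{equation*}
The one elementary fact the argument rests on is $\pi < 3\log 3$ (equivalently $e^{\pi} < 27$), which makes $\tfrac{\pi}{3} - \log 3 < 0$; since $\Omega(b-a) \ge 0$ the first term is nonpositive, so the exponent is at most $-35\log 3$ and
\begin{equation*}
  \norm[{L^\infty([-\Omega,\Omega])}]{H_{\bm{\alpha}} - G}
  \leq 2\cdot 3^{-35}\left(\norm[{L^1([a,b])}]{g} + \norm[1]{\bm{\alpha}}\right).
\end{equation*}
Because $3^{35} > 2\times 10^{16}$ we have $2\cdot 3^{-35} < 10^{-16} < \epsilon$, which is precisely the claimed estimate, with the same weights $\bm{\alpha}$. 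The alternative form $\frac{n-35}{5(b-a)} = \frac{2}{5}\Omega_{\text{\normalfont nyq}} - \frac{7}{b-a}$ is just the definition $\Omega_{\text{\normalfont nyq}} = \frac{n}{2(b-a)}$ rearranged.

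The only point requiring any care is the choice of $\rho$, and the mild obstacle is that two requirements pull it in opposite directions: $\rho$ must be large enough that the residual constant $\tfrac{4}{\rho-1}\rho^{-35}$ drops below $10^{-16}$, yet small enough that $\tfrac{\pi}{2}(\rho+\rho^{-1}) \le 5\log\rho$, which is what permits the $\Omega(b-a)$ term to be discarded in the exponent. These constraints leave a comfortable interval of admissible $\rho$ containing $\rho=3$, and any fixed value in that interval gives the same conclusion up to mild perturbations of the constants $35$ and $5$. Beyond this the proof is bookkeeping plus the two numerical checks $e^{\pi} < 27$ and $2\cdot 3^{-35} < 10^{-16}$.
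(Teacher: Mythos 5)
Your proof is correct and follows essentially the same route as the paper's: both take $\rho = 3$ in Theorem~\ref{thm:recovery} and reduce the claim to the two numerical facts $\pi < 3\log 3$ (the paper writes this as $\tfrac{5\pi}{3\log 3} < 5$) and $2\cdot 3^{-35} < 10^{-16}$. Your write-up just spells out the bookkeeping that the paper's one-line proof leaves implicit.
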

\begin{proof}
  Take $\rho = 3$ in Theorem \ref{thm:recovery}, set the right hand side equal
  to $\epsilon( \norm[{L^1([a,b])}]{g} + \norm[1]{\bm{\alpha}} )$, and use the
  facts $\frac{5\pi}{3\log 3} < 5$ and $-\log_3\big(\frac{10^{-16}}{2}\big) <
  35$.
\end{proof}

Corollary~\ref{cor:recovery} roughly states that the window recovery problem can
be solved to 16 digit accuracy relative to $\norm[1]{\bal}$ for any $\Omega \leq
\frac{2}{5} \, \Omega_{\text{nyq}}$, where the \textit{pseudo-Nyquist} frequency
$\Omega_{\text{nyq}}$ is simply the Nyquist frequency if $\set{x_j}_{j=1}^n$
were equispaced on $[a,b]$. We will see in Section~\ref{sec:window}, however,
that this may result in prohibitively large oscillatory weights depending on the
choice of $g$.  By fixing the interval $[a,b]$ on which observations are
collected and letting $n \to \infty$, applying Corollary~\ref{cor:recovery} at
each $n$ also establishes the existence of a sequence $\Omega_n \to \infty$ for
which the recovery problem can we solved to high accuracy. This observation
implies a notion of asymptotic convergence, which we treat in more detail in
Corollary~\ref{cor:rate}.

It is worth emphasizing that the window reconstruction problem differs in
several important ways from the well-studied \textit{irregular sampling} problem
of computing Fourier coefficients from data by solving the inverse problem
$\bm{\Frt}^* \bm{c} = \bm{y}$~\cite{benedetto1992irregular,
aldroubi2001nonuniform,adcock2014stable,grochenig1993discrete}. Computing
weights $\bm{\alpha}$ which satisfy $\bm{\Frt} \bal = \bm{b}$ when $G$ is a
smooth, bandlimited, concentrated function is possible even when the general
inverse problem is hopelessly ill-conditioned, and no such weights could be
stably computed for non-smooth, noisy data. In the irregular sampling problem
$\bm{\Frt}^* \bm{c} = \bm{y}$, clustered points $x_j$ induce ill-conditioning as
small perturbations to the samples $y_j$ can lead to wild oscillations in the
recovered coefficients $\bm{c}$. However, in our window reconstruction setting,
the existence of $\bm{\alpha}$ requires only that the window function
$G(\omega)$ lies approximately in $\spn(\{e^{-2\pi i\omega x_j}\}_{j=1}^n)$ for
$\omega \in [-\Omega, \Omega]$, from which one can compute accurate weights
$\bm{\alpha}$ using any backwards stable method of solving the least squares
problem $\bm{\Fr} \bm{\alpha} = \bm{b}$, for example a Householder QR
factorization~\cite{trefethen2022numerical}. Therefore adding sampling points
$x_j$ can only improve the accuracy of the reconstructed window
$H_{\bal}(\omega)$.

\subsection{Choosing a window} \label{sec:window}

We now turn to the problem of designing a suitable window function $g$. As has
been discussed already, the primary quality of a window function is that its
Fourier transform $G(\omg) = \int_a^b g(x) e^{-2 \pi i \omg x} \dif x$ is a
close approximation to a delta mass. This intuitive statement can be interpreted
in multiple ways, and thus several tradeoffs have been explored at length in the
time series and signal processing literatures between making $G$ more narrowly
peaked at the origin (reducing ``local" bias) and reducing its mass off of a
main lobe (reducing ``broadband" bias). In certain applications, such as picking
out sharp nearby peaks in $S(\omg)$, choices for $g$ like the \emph{sine tapers}
that minimize local bias~\cite{riedel1995} can be a good choice and are
compatible with our method. But in most cases, a better default is to elect to
minimize broadband bias, and so we now provide a more detailed description of
window design for optimal concentration around the origin using the
\emph{Slepian} or \emph{prolate} functions, first described in
\cite{slepian1961} and which serve as the backbone of the celebrated multitaper
estimator in time series analysis \cite{thomson1982}.

Assuming momentarily that $g$ is supported on $[-\frac{1}{2}, \frac{1}{2}]$ for
notational convenience and normalized so that $\norm[{L^2([-\frac{1}{2},
\frac{1}{2}])}]{g} = 1$, the \emph{spectral concentration} of $g$ on $[-W, W]$
is given by
\begin{equation} \label{eq:conc}
  \lam(g) := \int_{-W}^W |G(\omg)|^2 \dif \omg.
\end{equation}
We note that $\lam(g) < 1$, and the closer this concentration is to one, the
closer $g$ is to being a truly bandlimited function (which is not possible for
any function with finite support). In~\cite{slepian1961}, the authors prove that
the \emph{optimally} concentrated function in the sense of maximizing $\lambda$
is the dominant eigenfunction of the linear integral operator
\begin{equation*} % \label{eq:fredslep}
  T: f(x) \mapsto \int_{-1/2}^{1/2} \text{sinc}(W(t - x)) f(t) \dif t,
\end{equation*}
and \cite{simons2011} gives a more general convenient form of this operator in
multiple dimensions for functions $f(\bx)$ defined on domains $\mathcal{D}$ and
with arbitrary regions of concentration $\mathcal{R}$ given by
\begin{equation} \label{eq:fredslep_highd}
  T: f(\bx) \mapsto \int_{\mathcal{D}} \set{ \int_{\mathcal{R}} e^{2 \pi i
  \bm{s}^T (\bm{t} - \bx)} \dif \bm{s}} f(\bm{t}) \dif \bm{t}.
\end{equation}
Such optimally concentrated functions are called \emph{prolate spheroidal
wavefunctions} (PSWFs), or simply as prolates. Many fast and accurate numerical
methods and algorithms for evaluating prolates
exist~\cite{xiao2001,gruenbacher2002,greengard2024}, and the implementation used
here computes them in $\bO(n \log n)$ time by discretizing $T$ using
Gauss-Legendre quadrature and accelerating the action of the discretized
integral operator using the \emph{fast sinc transform} \cite{greengard2007}.
With this acceleration, obtaining a small number of dominant eigenvectors using
implicit Krylov methods can be done in quasilinear runtime cost even without a
preconditioner, as the operator $T$ has a small number of clustered
non-degenerate eigenvalues but then exhibits rapid spectral decay.

As a simple option for data observed on an interval in one dimension, an
approximation to the prolate is given in closed form by the \emph{Kaiser} window
\citep{kaiser1980}. For $[a,b] = [-\frac{1}{2},\frac{1}{2}]$, this window is
given in the time and Fourier domain by
\begin{equation} \label{eq:kaiser}
  g(x) = \ind{x \in [-1/2, 1/2]} c_0 I_0(\beta \sqrt{1 - (2x)^2})
  \quad \quad
  G(\omg) = \begin{cases}
    \frac{c_0 \sinh(\sqrt{\beta^2 - (2 \pi \omg)^2})}{\sqrt{\beta^2 - (2 \pi \omg)^2}} & |2 \pi \omg| < \beta
    \\
    c_0 \text{sinc}(
      \sqrt{(2 \pi \omg)^2 - \beta^2}) & |2 \pi \omg| \geq \beta
  \end{cases},
\end{equation}
where $I_0$ is the first-kind modified Bessel function, $\beta$ is a shape
parameter, and $c_0$ is selected so that
$\norm[{L^2([-\frac{1}{2},\frac{1}{2}])}]{g} = 1$~\cite{nist}. It is
straightforward to shift and rescale this Fourier transform pair so that $g$ is
supported on an arbitrary interval $[a, b]$, and we will not comment in detail
on such transformations when they are done. While the prolate function is the
theoretically superior choice, the Kaiser window is fast and convenient to
evaluate and gives comparable performance in practice, and thus we utilize it in
a number of numerical demonstrations here.

For the purposes of our estimator, it is also important that the window function
$g$ is smooth and only supported on domains containing the sampling locations.
The motivation for these soft requirements is more subtle than the first and
pertains to controlling the norm of $\bal$. As Theorem~\ref{thm:recovery}
indicates, if $\norm[1]{\bal}$ is large, then the absolute error
$\norm[{L^\infty([a,b])}]{H_{\bal} - G}$ in the window reconstruction will in
general also be large. In addition, we will see in Section~\ref{sec:analysis}
that $\norm[2]{\bal}$ controls the size of the aliasing biases. To provide some
intuition for the importance of smoothness and matching the support of the data,
we return to the linear system $\bm{\Frt} \bal = \bm{b}$. Consider taking an SVD
to obtain $\bm{U} \bm{D} \bm{V}^* = \bm{\Frt}$, where $\bm{U}, \bm{V} \in \C^{n
\times n}$ are unitary matrices, and $\bm{D} \in \R^{n \times n}$ is a diagonal
matrix whose entries are the singular values $\{\sigma_j\}_{j=1}^n$ of
$\bm{\Frt}$. Since the $\ell_2$ norm is invariant to unitary transformations, we
have $\norm[2]{\bal} = \norm[2]{\bm{D}^{-1} \bm{U}^* \bm{b}}$. As previously
noted, the nonuniform Fourier matrix $\bm{\Frt}$ is typically highly
ill-conditioned, and thus its singular values decay rapidly. If $\bm{b}$ lies
approximately in the span of the dominant $r$ eigenvectors of $\bm{\Frt}$, then
$(\bm{U}^* \bm{b})_j \approx 0$ for $j > r$, and thus $(\bm{U}^* \bm{b})_j /
\sigma_j$ remains small in magnitude even for very small singular values
$\sigma_j$. In contrast, if $\bm{b}$ is not orthogonal to the singular vectors
corresponding to small singular values $\sigma_j$, then $(\bm{U}^* \bm{b})_j /
\sigma_j \gg 1$ for some $j$, and $\norm[2]{\bal}$ will also be large. If $g$ is
large in some interval where no sampling locations lie, then we are attempting
to approximate $G$ by sinusoids whose dominant columnspace does not include the
relevant frequencies. Similarly, if $g$ contains discontinuities in its
derivatives, then $G$ will decay slowly, and more sinusoids are necessary to
capture its behavior on its larger domain of numerical support. For a smooth $g$
which takes large values only in regions where sampling locations are present,
we avoid these numerical issues, so that $G$ and the corresponding $\bm{b}$ lie
handily in the span of the dominant eigenvectors and $\norm[2]{\bal}$ is small. 

\begin{figure}[!ht]
  \centering
% GNUPLOT: LaTeX picture with Postscript
\begingroup
  \makeatletter
  \providecommand\color[2][]{%
    \GenericError{(gnuplot) \space\space\space\@spaces}{%
      Package color not loaded in conjunction with
      terminal option `colourtext'%
    }{See the gnuplot documentation for explanation.%
    }{Either use 'blacktext' in gnuplot or load the package
      color.sty in LaTeX.}%
    \renewcommand\color[2][]{}%
  }%
  \providecommand\includegraphics[2][]{%
    \GenericError{(gnuplot) \space\space\space\@spaces}{%
      Package graphicx or graphics not loaded%
    }{See the gnuplot documentation for explanation.%
    }{The gnuplot epslatex terminal needs graphicx.sty or graphics.sty.}%
    \renewcommand\includegraphics[2][]{}%
  }%
  \providecommand\rotatebox[2]{#2}%
  \@ifundefined{ifGPcolor}{%
    \newif\ifGPcolor
    \GPcolortrue
  }{}%
  \@ifundefined{ifGPblacktext}{%
    \newif\ifGPblacktext
    \GPblacktexttrue
  }{}%
  % define a \g@addto@macro without @ in the name:
  \let\gplgaddtomacro\g@addto@macro
  % define empty templates for all commands taking text:
  \gdef\gplbacktext{}%
  \gdef\gplfronttext{}%
  \makeatother
  \ifGPblacktext
    % no textcolor at all
    \def\colorrgb#1{}%
    \def\colorgray#1{}%
  \else
    % gray or color?
    \ifGPcolor
      \def\colorrgb#1{\color[rgb]{#1}}%
      \def\colorgray#1{\color[gray]{#1}}%
      \expandafter\def\csname LTw\endcsname{\color{white}}%
      \expandafter\def\csname LTb\endcsname{\color{black}}%
      \expandafter\def\csname LTa\endcsname{\color{black}}%
      \expandafter\def\csname LT0\endcsname{\color[rgb]{1,0,0}}%
      \expandafter\def\csname LT1\endcsname{\color[rgb]{0,1,0}}%
      \expandafter\def\csname LT2\endcsname{\color[rgb]{0,0,1}}%
      \expandafter\def\csname LT3\endcsname{\color[rgb]{1,0,1}}%
      \expandafter\def\csname LT4\endcsname{\color[rgb]{0,1,1}}%
      \expandafter\def\csname LT5\endcsname{\color[rgb]{1,1,0}}%
      \expandafter\def\csname LT6\endcsname{\color[rgb]{0,0,0}}%
      \expandafter\def\csname LT7\endcsname{\color[rgb]{1,0.3,0}}%
      \expandafter\def\csname LT8\endcsname{\color[rgb]{0.5,0.5,0.5}}%
    \else
      % gray
      \def\colorrgb#1{\color{black}}%
      \def\colorgray#1{\color[gray]{#1}}%
      \expandafter\def\csname LTw\endcsname{\color{white}}%
      \expandafter\def\csname LTb\endcsname{\color{black}}%
      \expandafter\def\csname LTa\endcsname{\color{black}}%
      \expandafter\def\csname LT0\endcsname{\color{black}}%
      \expandafter\def\csname LT1\endcsname{\color{black}}%
      \expandafter\def\csname LT2\endcsname{\color{black}}%
      \expandafter\def\csname LT3\endcsname{\color{black}}%
      \expandafter\def\csname LT4\endcsname{\color{black}}%
      \expandafter\def\csname LT5\endcsname{\color{black}}%
      \expandafter\def\csname LT6\endcsname{\color{black}}%
      \expandafter\def\csname LT7\endcsname{\color{black}}%
      \expandafter\def\csname LT8\endcsname{\color{black}}%
    \fi
  \fi
    \setlength{\unitlength}{0.0500bp}%
    \ifx\gptboxheight\undefined%
      \newlength{\gptboxheight}%
      \newlength{\gptboxwidth}%
      \newsavebox{\gptboxtext}%
    \fi%
    \setlength{\fboxrule}{0.5pt}%
    \setlength{\fboxsep}{1pt}%
    \definecolor{tbcol}{rgb}{1,1,1}%
\begin{picture}(9060.00,3400.00)%
    \gplgaddtomacro\gplbacktext{%
      \csname LTb\endcsname%%
      \put(351,817){\makebox(0,0)[r]{\strut{}\footnotesize 0}}%
      \csname LTb\endcsname%%
      \put(351,1289){\makebox(0,0)[r]{\strut{}\footnotesize 0.5}}%
      \csname LTb\endcsname%%
      \put(351,1760){\makebox(0,0)[r]{\strut{}\footnotesize 1}}%
      \csname LTb\endcsname%%
      \put(351,2232){\makebox(0,0)[r]{\strut{}\footnotesize 1.5}}%
      \csname LTb\endcsname%%
      \put(351,2703){\makebox(0,0)[r]{\strut{}\footnotesize 2}}%
      \csname LTb\endcsname%%
      \put(452,436){\makebox(0,0){\strut{}\footnotesize -1}}%
      \csname LTb\endcsname%%
      \put(1299,436){\makebox(0,0){\strut{}\footnotesize -0.5}}%
      \csname LTb\endcsname%%
      \put(2146,436){\makebox(0,0){\strut{}\footnotesize 0}}%
      \csname LTb\endcsname%%
      \put(2994,436){\makebox(0,0){\strut{}\footnotesize 0.5}}%
      \csname LTb\endcsname%%
      \put(3841,436){\makebox(0,0){\strut{}\footnotesize 1}}%
    }%
    \gplgaddtomacro\gplfronttext{%
      \csname LTb\endcsname%%
      \put(2146,76){\makebox(0,0){\strut{}\small $x$}}%
      \csname LTb\endcsname%%
      \put(2146,3063){\makebox(0,0){\strut{}\small Time-domain windows}}%
    }%
    \gplgaddtomacro\gplbacktext{%
      \csname LTb\endcsname%%
      \put(4645,676){\makebox(0,0)[r]{\strut{}\footnotesize $10^{-25}$}}%
      \csname LTb\endcsname%%
      \put(4645,1066){\makebox(0,0)[r]{\strut{}\footnotesize $10^{-20}$}}%
      \csname LTb\endcsname%%
      \put(4645,1456){\makebox(0,0)[r]{\strut{}\footnotesize $10^{-15}$}}%
      \csname LTb\endcsname%%
      \put(4645,1845){\makebox(0,0)[r]{\strut{}\footnotesize $10^{-10}$}}%
      \csname LTb\endcsname%%
      \put(4645,2235){\makebox(0,0)[r]{\strut{}\footnotesize $10^{-5}$}}%
      \csname LTb\endcsname%%
      \put(4645,2625){\makebox(0,0)[r]{\strut{}\footnotesize $10^{0}$}}%
      \csname LTb\endcsname%%
      \put(4746,436){\makebox(0,0){\strut{}\footnotesize 0}}%
      \csname LTb\endcsname%%
      \put(5424,436){\makebox(0,0){\strut{}\footnotesize 10}}%
      \csname LTb\endcsname%%
      \put(6102,436){\makebox(0,0){\strut{}\footnotesize 20}}%
      \csname LTb\endcsname%%
      \put(6779,436){\makebox(0,0){\strut{}\footnotesize 30}}%
      \csname LTb\endcsname%%
      \put(7457,436){\makebox(0,0){\strut{}\footnotesize 40}}%
      \csname LTb\endcsname%%
      \put(8135,436){\makebox(0,0){\strut{}\footnotesize 50}}%
    }%
    \gplgaddtomacro\gplfronttext{%
      \csname LTb\endcsname%%
      \put(7637,2488){\makebox(0,0)[r]{\strut{}\footnotesize prolate ($\norm[2]{\bal} \approx 0.01$)}}%
      \csname LTb\endcsname%%
      \put(7637,2248){\makebox(0,0)[r]{\strut{}\footnotesize Kaiser ($\norm[2]{\bal} \approx 10^{11}$)}}%
      \csname LTb\endcsname%%
      \put(6440,76){\makebox(0,0){\strut{}\small $\omg$}}%
      \csname LTb\endcsname%%
      \put(6440,3063){\makebox(0,0){\strut{}\small $|H_{\alpha}(\omg)|^2$}}%
    }%
    \gplbacktext
    \put(0,0){\includegraphics[width={453.00bp},height={170.00bp}]{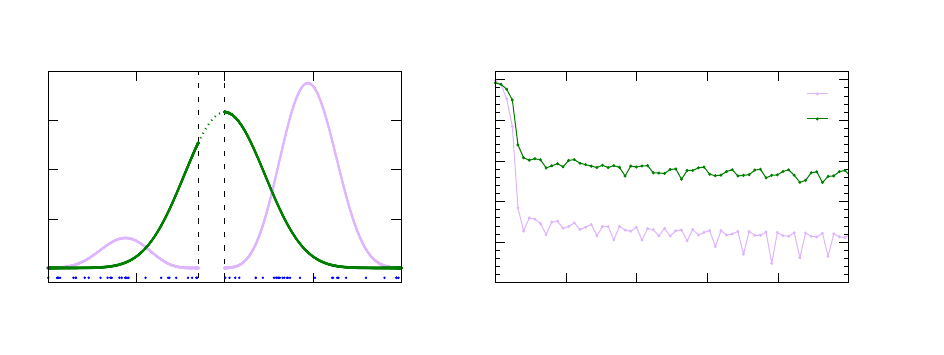}}%
    \gplfronttext
  \end{picture}%
\endgroup
  \caption{A comparison of the recovered window $\abs{H_{\alpha}}^2$ in a
  sampling setting with $\set{x_j}_{j=1}^n \iid \text{Unif}([-1,
  1]\setminus[-0.15,0])$ and $n = 3000$ using a generic Kaiser window supported
  on $[-1, 1]$ and a prolate window supported on the union of disjoint
  intervals. Blue dots in the left subplot show a representative subset of
  sampling locations.}
  \label{fig:gap_prolate}
\end{figure}

Figure \ref{fig:gap_prolate} provides a visual demonstration of this phenomenon.
Data locations are chosen uniformly at random on $[-1, 1] \setminus [-0.15, 0]$,
meaning that there will necessarily be a gap in the sorted points of size at
least $0.15$. While the recovered weights $\bal$ for window $g$ selected to be a
standard Kaiser window on $[-1, 1]$ do still provide a reasonably
well-concentrated approximation $H_{\alpha}(\omg)$ (accounting for the loss of a
few digits due to the size of $\norm[2]{\bm{\alpha}}$), the weights have a norm
on the order of $10^{11}$. Using a prolate that has been adapted specifically to
the data domain $\mathcal{D} = [-1, 1] \setminus [-0.15, 0]$, on the other hand,
provides weights with a norm $13$ orders of magnitude smaller and concentration
at the level of machine precision. 

Finally, we note that the problem of choosing $H_{\bm{\beta}}(\omega) =
\sum_{j=1}^n \beta_j e^{-2 \pi i \omg x_j}$ to be maximally concentrated in a
window $[-W, W]$ can be approached from an optimization rather than a quadrature
perspective. In particular, selecting an optimally-concentrated
$H_{\bm{\beta}}(\omega)$ can be formulated as
\begin{align} \label{eq:GPSS-optimization}
  \max_{\bbeta \in \R^n} \norm[{L^2([-W, W])}]{\sum_{j=1}^n \beta_j e^{-2 \pi i \omg x_j}}
  \quad
  \text{s.t.}
  \quad
  \quad
  \norm[{L^2([-\Omega, \Omega])}]{\sum_{j=1}^n \beta_j e^{-2 \pi i \omg x_j}}=1.
\end{align}
We can express these $L^2$ norms exactly as vector norms by defining the
matrices
\begin{equation} \label{eq:GEP-matrices}
  \bm{A}_{jk} = \frac{\sin(2 \pi W(x_j - x_k))}{\pi (x_j - x_k)}
  \quad\quad
  \text{and}
  \quad\quad
  \bm{B}_{jk} = \frac{\sin(2 \pi \Omega(x_j - x_k))}{\pi (x_j - x_k)}
\end{equation}
and noting that $\shortnorm[{L^2([-W, W])}]{\sum_{j=1}^n \beta_j e^{-2 \pi i
\omg x_j}} = \bbeta^* \bm{A} \bbeta$ and $\shortnorm[{L^2([-\Omega,
\Omega])}]{\sum_{j=1}^n \beta_j e^{-2 \pi i \omg x_j}} = \bbeta^* \bm{B}
\bbeta$. Therefore, taking $\bbeta$ to solve the optimization
problem~\eqref{eq:GPSS-optimization} is equivalent to choosing $\bbeta$ to be
the dominant eigenvector of the generalized eigenvalue problem (GEP) $\bm{A}
\bbeta = \lambda \bm{B} \bbeta$, normalized so that $\bbeta^* \bm{B} \bbeta =
1$. This is exactly the form of the weights $\bbeta$ found in the Bronez
estimator~\cite{bronez1988}, in which $\bbeta$ are referred to as
\emph{generalized} prolate sequences. Bronez proposes an algorithm of complexity
$\bO(n^4)$ which solves a GEP for every desired frequency $\xi$ to obtain
estimates $\hat{S}(\xi)$ in the case of bandlimited processes with
$\text{supp}(S) \subseteq [-\Omega, \Omega]$.

Several comments are worth making on this comparison. First, the GPSS weights
described above are specified by the observation locations alone and do not
require a geometric description of the sampling domain. This makes the GPSS
problem easier to specify than our approach, which may yield less biased
spectral estimates in certain sampling regimes compared to our method. However,
solving a GEP involving two extremely ill-conditioned PSD matrices is both
expensive and numerically challenging, whereas fast, generic tools for solving
linear systems can be applied to our formulation, as we illustrate in the
following subsection. In addition, in many irregular sampling cases, the support
or domain of the data can be well-described by a union of convex regions. Using
prolate functions on such domains as right-hand side vectors $\bm{b}$ often
results in weights $\bal$ which are effectively equivalent to those computed
using the GPSS formulation. From this perspective, the approach we develop here
can be viewed as an alternative to Bronez's which is more amenable to
acceleration using modern computational techiniques.

As a final comment, we note that in the case where $\set{x_j}$ has been sampled
on a  grid and one selects $\Omega = (x_2 - x_1)^{-1}/2$ (the Nyquist
frequency), the GEP above simplifies to a standard eigenvalue problem as $\bm{B}
\propto \I$ (an observation also made by \cite{chave2019}). In such a setting,
matrix vector products with the sinc matrix $\bm{A}$ can be accelerated using
the \emph{fast sinc transform}~\cite{greengard2007}, and thus the GPSS weights
$\bm{\beta}$ which are solutions to $\bm{A} \bm{\beta} = \lambda \bm{\beta}$ can
be computed rapidly using a Krylov method.

\subsection{Numerical methods for weight computation} \label{sec:numerical}

Computing $\bm{\alpha}$ by solving the dense linear system $\bm{\Frt} \bal =
\bm{b}$ requires $\bO(n^3)$ operations if $\Omega = \bO(n)$. Fortunately,
several structural aspects of the problem can be exploited to accelerate the
computation of weights $\bal$. The first observation is that matrix-vector
products with $\bm{\Fr}$ can be computed in $\bO(n\log n)$ time using the NUFFT,
and so computational methods that require only matrix-vector products can be
accelerated.

One particularly simple case is for fixed $\Omega = \bO(1)$. In this setting,
the numerical rank of of $\Frt$ will stay roughly constant as $n$ grows.
Therefore, using rank-revealing~\cite{gu1996} or randomized~\cite{halko2011}
early-terminating factorization methods to obtain a low-rank approximation $\Frt
\approx \bm{U} \bm{V}^*$ with $\bm{U}, \bm{V} \in \mathbb{C}^{n \times r}$ will
require only $\bO(n)$ and $\bO(n\log n)$ operations respectively for bounded
rank $r$. Once such a factorization is obtained, standard pseudoinverse
identities can be used to obtain $\bal$ in $\bO(n)$ work.

In the more challenging case where $\Omega = \bO(n)$, the numerical rank of
$\Frt$ will be $\bO(n)$, and so early-terminating approaches would not improve
the runtime complexity of solving (\ref{eq:weights_linsys}). However, the NUFFT
still offers accelerated matrix-vector products in $\bO(n \log n)$ time, and
thus computation of $\bal$ can also be accelerated in this setting. In
particular, If one chooses $\set{\omg_j}_{j=1}^n$ to be Gauss-Legendre nodes on
$[-\Omega, \Omega]$ and defines the diagonal matrix $\bm{D}_{jj} = \gamma_j$ to
be the corresponding quadrature weights, we may transform
(\ref{eq:weights_linsys}) to $ \sqrt{\bm{D}} \Frt \bal = \sqrt{\bm{D}} \bm{b}$
which corresponds to normal equations given by 
$
  \Frt^* \bm{D} \Frt \bal = \Frt^* \bm{D} \bm{b}.
$
The key observation is that, by design,
\begin{equation} \label{eq:sinc_system}
  (\Frt^* \bm{D} \Frt)_{jk} \approx 2 \Omega \text{sinc}(2 \Omega (x_j - x_k)).
\end{equation}
Therefore, one can cheaply assemble an approximation $\bm{P} \approx \Frt^*
\bm{D} \Frt + \delta \I$ for $\delta$ near machine epsilon to this sinc matrix
which can be used as a preconditioner~\citep{murphy2000} in conjunction with an
iterative solver~\citep{fong2011} to achieve rapid convergence for good RHS
vectors $\bm{b}$ that can be accurately expressed as linear combinations of the
dominant eigenvectors of $\Frt^* \bm{D} \Frt$ (see \cite[Section 6.11]{saad2003}
for a rigorous analysis in the context of conjugate gradients). Using this
approach, the software companion to this work can compute weights satisfying
(\ref{eq:weights_linsys}) to effectively machine precision for hundreds of
thousands of fully irregular points $x_j$ in under a minute, as we demonstrate
in the next section.

Choosing an effective preconditioner for this problem is, in our experience,
dimension-dependent. In one dimension, a hierarchical
matrix~\cite{hackbusch2015} approximation to the matrix $\bm{P}_{jk} = 2 \Omega
\text{sinc}(2 \Omega (x_j - x_k)) + \delta \ind{j=k}$ can be assembled to high
accuracy because the sinc function is the product of two kernels: a $1/(x_j -
x_k)$ kernel, which is smooth away from the origin and thus leads to
off-diagonal blocks with low ranks~\cite{cheng1999fast, hackbusch2015}, and the
oscillatory kernel $\sin(2 \Omega (x_j - x_k))$. Note that any matrix
$\bm{M}_{jk} = \sin(2 \Omega (x_j - x_k))$ will be at most rank two because it
is the imaginary part of a Hermitian rank-one matrix: $\bm{M} = \text{Im}(\bm{u}
\bm{u}^*)$ where $\bm{u}_j = e^{-2 \pi i \Omega x_j}$. Therefore the rank of
each off-diagonal block of the $1/(x_j - x_k)$ kernel matrix will be at most
doubled by the oscillatory term, as
\begin{equation*} %\label{eq:}
  \underbrace{\bm{U} \bm{V}^T}_{\substack{\text{low rank block} \\ \text{of $\frac{1}{x_j - x_k}$}}}
  \circ \hspace{6pt}
  \Big(\underbrace{\bu_1 \bv_1^T + \bu_2 \bv_2^T}_{\substack{\text{rank two block} \\ \text{of $\sin\hspace{-2pt}\big(2\Omega(x_j - x_k)\big)$}}}\Big)
  =
  \bm{D}_{\bu_1} \bm{U} \bm{V}^T \bm{D}_{\bv_1} 
  +
  \bm{D}_{\bu_2} \bm{U} \bm{V}^T \bm{D}_{\bv_2}
\end{equation*}
by standard properties of Hadamard products, where $\bm{D}_{\bm{v}} :=
\text{diag}(v_1,\dots,v_n)$. As a result, the off-diagonal blocks of the sinc
matrix inherit the bounded ranks of the $1/(x_j - x_k)$ kernel, and a
hierarchical matrix approximation to $\bm{P}$ can thus be assembled and used to
solve linear systems in $\bO(n \log n)$ time in one dimension.

Unfortunately, the above rank observations for the oscillatory term do not apply
in higher dimensions, and empirically we find that preconditioning using
hierarchical matrices does not scale well in two dimensions. To motivate a more
general preconditioning strategy, we start by making the observation that for
\emph{any} kernel function $K(\bx - \bx') = \int_{\R^d} e^{2 \pi i
\bomg^T(\bx-\bx')} G_K(\bomg) \dif \bomg$ where $G_K$ is smooth and supported on
$[-\Omega, \Omega]^d$, one may instead select the $\bm{D}$ matrix
in~\eqref{eq:sinc_system} to have diagonal values $\bm{D}_{jj} = \gamma_j
G_K(\bomg_j)$. Then the matrix quadratic form in the normal equations will have
entries approximately given by $(\Frt^* \bm{D} \Frt)_{jk} \approx K(\bx_j -
\bx_k)$. A useful (but likely non-optimal) choice that we employ in this work is
to select $K$ to be a Gaussian function with bandwidth proportional to $\Omega$.
In this case, one may approximate $\bm{P}_{jk} = K(\bx_j - \bx_k)$ with a sparse
matrix by dropping near-zero values corresponding to the kernel at sufficiently
far distances, and use this sparse matrix as a preconditioner. Assessing the
runtime cost of factorizing and subsequently solving linear systems with a
sparse matrix is notoriously difficult because it depends on the sparsity
pattern. But for compact kernels in $2$D, the runtime cost will generally be
$\bO(n^{3/2})$~\cite{lipton1979}.

As a special case, we note that if locations $\set{x_j}$ are given on a gappy
lattice in any dimension, then $\Frt^* \bm{D} \Frt \approx 2 \Omega \I$. Thus in
this special case, despite severe ill-conditioning, iterative methods for the
transformed system will converge rapidly for well-chosen right-hand sides
$\bm{b}$. Finally, we note that there is a growing body of literature on the
problem of direct NUFFT inversion. In particular, methods such as the ones
described in~\cite{kircheis2019direct, kircheis2023fast, wilber2024} may also be
applicable or adaptable in some settings.

\begin{table}
  \centering
  \renewcommand{\arraystretch}{1.3}
  \begin{tabular}{c|c|c|c|c|c} 
    & \multicolumn{2}{c|}{\textbf{Compute} $\bm{b} = [G(\xi_j)]_{j=1}^n$} &
    \multicolumn{2}{c|}{\textbf{Solve} $\bm{\Frt} \bal = \bm{b}$} &
    \textbf{Evaluate} $\{S(\xi_k)\}_{k=1}^m$ \\ \hline 
    \multirow{2}{*}{\textbf{Dense}} & Prolate & Closed form &
    \multicolumn{2}{c|}{\multirow{2}{*}{$\bO(n^3)$}} &
    \multirow{2}{*}{$\bO(nm)$} \\ \cline{2-3} & $\bO(n^3)$ &
    \multirow{3}{*}{$\bO(n)$} & \multicolumn{2}{c|}{} & \\ \cline{1-2}
    \cline{4-6} \multirow{2}{*}{\textbf{Accelerated}} &
    \multirow{2}{*}{$\bO(n\log n)$} & & 1D & 2D & \multirow{2}{*}{$\bO(n + m\log
    m)$} \\ \cline{4-5} & & & $\bO(n\log n)$ & $\bO(n^{3/2})$ &
  \end{tabular}
  \caption{Complexities of the steps of our algorithm in one and two dimensions
  using na\"ive dense linear algebra as well as the accelerated methods proposed
  here.}
  \label{tab:complexity}
\end{table}

Having described all the subroutines needed to compute our spectral estimator,
we summarize their asymptotic costs in Table~\ref{tab:complexity}. These
subroutines include computing prolate functions $g$ and their Fourier transforms
$G$ for various domains in physical and spectral space (or simply evaluating a
closed form $G$ for e.g. the Kaiser window), solving the weight linear system
$\bm{\Frt} \bal = \bm{b}$ using a preconditioned iterative method, and
evaluating the estimator $\hat{S}$ at a set of desired frequencies using the
NUFFT. Note that all costs are quasilinear in the number of observations $n$ and
the number of estimated frequencies $m$ in one dimension, and sub-quadratic in
two dimensions. The accelerated cost of solving $\bm{\Frt} \bal = \bm{b}$
in Table~\ref{tab:complexity} assumes that the number of Krylov iterations is
constant in $n$. We see this behavior empirically (see Section~\ref{sec:demo}),
but a rigorous proof remains for future work.

\subsection{Controlling aliasing bias} \label{sec:analysis}

With the discussion of properly computing the weights $\bm{\alpha}$ complete, we
now turn to an analysis of the aliasing-based sources of bias in the estimator
$\hat{S}(\xi)$. For the duration of this section, we will assume the following:
\begin{itemize}
\item[(A$1$)] Measurement locations $\set{x_j}_{j=1}^n$ are i.i.d. samples from
a probability density function $p$.
\item[(A$2$)] The sampling density $p$ is symmetric about the origin and
supported on $[-a, a]$, so that its corresponding characteristic function
$\varphi(t) = \E_{X \sim p} e^{2 \pi i t X}$ is real-valued.
\item[(A$3$)] The weights $\bm{\alpha}$ have been computed to sufficient
accuracy that we may treat $H_{\alpha}(\omg) \approx G(\omg)$ as an equality for
$\abs{\omg} < \Omega$.
\item[(A$4$)] The maximum controlled frequency $\Omega$ for weights
$\bm{\alpha}$ is large enough that $\varphi(2 \Omega) \gg \varphi(\Omega)^2$,
and we may treat $\varphi(2 \Omega) - \varphi(\Omega)^2 \approx \varphi(2
\Omega)$ as an equality.
\item[(A$5$)] $\Omega$ may grow with $n$, but only at a rate such that there
exist weights $\bm{\alpha}$ with $\norm[1]{\bm{\alpha}} = \bO(1)$. Consequently,
the dependence of $\Omega$ and $\bal$ on $n$ will be suppressed.
\end{itemize}

We note that (A2) and (A4) are for notational simplicity and can be removed with
no impact on the results beyond additional notation. The assumption (A5) is
milder than it seems, and simply rules out edge cases in which $\Omega$
asymptotically converges to the Nyquist frequency in a way that induces the
weights $\bal$ to become oscillatory and grow in norm. As a reminder, by
construction one has that $\sum_{j=1}^n \alpha_j \approx G_0 := G(0)$, and so if
$\set{\alpha_j}_{j=1}^n$ is not overly oscillatory then $\norm[1]{\bal} \approx
G_0$ and this condition will naturally be satisfied.

Under these assumptions, we now study the two aliasing error-based bias terms
shown in (\ref{eq:bias_decomp}). Writing the estimator as $\hat{S}(\xi_j) =
|\eta_j|^2$ with $\bm{\eta} = \Frt \cdot \text{Diag}(\bal) \cdot \by$ and
letting $\hat{\bm{s}} = [\hat{S}(\xi_j)]_{j=1}^n$ denote the vector of
estimators we have that
\begin{equation*} %\label{eq:}
  \E \hat{\bm{s}} = \text{Diag}(\bm{M}) := \text{Diag}\Big(\Frt \text{Diag}(\bal)
  \bS \text{Diag}(\bal) \Frt^*\Big),
\end{equation*}
where $\by = [Y(x_j)]_{j=1}^n$ and $\bS = \E \by \by^T$ is its covariance
matrix. Therefore, by~\eqref{eq:bias_decomp} we have that
\begin{equation} \label{eq:epsj}
  \eps(\xi_j) = \bm{M}_{jj} - \int_{-\Omega + \xi_j}^{\Omega + \xi_j} |G(\omg - \xi_j)|^2
  S(\omg) \dif \omg
  =
  \int_{E_{\Omega, \xi_j}} |H_{\alpha}(\omg - \xi_j)|^2 S(\omg) \dif \omg,
\end{equation}
where $E_{\Omega, \xi_j} = \R \setminus [-\Omega + \xi_j, \Omega + \xi_j]$, is
precisely the aliasing-based bias in the estimator $\hat{S}(\xi_j)$. This error
is of size 
\begin{equation*} %\label{eq:}
  \E \eps(\xi_j)
  =
  \int_{E_{\Omega, \xi_j}} \E |H_{\alpha}(\omg - \xi)|^2 S(\omg) \dif \omg
  \approx
  \norm[2]{\bm{\alpha}}^2 \int_{E_{\Omega, \xi_j}} S(\omg) \dif \omg,
\end{equation*}
where the expectation is taken over the random locations $\set{x_j}_{j=1}^n$.
Thus we see that controlling the size of $\norm[2]{\bal}$ is critical to
reducing this aliasing bias. To obtain a more precise concentration-type bound
in terms of its relative size over the quantity to be estimated, $S(\xi)$, we
provide the following theorem whose proof can be found in
Appendix~\ref{app:proofs}.
\begin{theorem} \label{thm:aliasing} Let $S$ be a valid spectral density, let
  points $\set{x_j}_{j=1}^n \iid p$, and let $\bm{\alpha} \in \R^n$ be
  corresponding weights to resolve the window function $g$. Then under
  assumptions (A$1$)-(A$5$), 
  \begin{equation*} %\label{eq:}
    P\set{\eps(\xi) \geq \beta S(\xi)}
    \leq
     2 
     \exp\set{
     -
       \frac{
          \beta S(\xi)
       }{
          2 \norm[2]{\bal}^2 \int_{E_{\Omega, \xi}} S(\omg) \dif \omg
       }
     }.
  \end{equation*}
\end{theorem}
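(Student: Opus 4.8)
The plan is to recognize $\eps(\xi)$ as a quadratic form in the Gaussian-like vector $\by$ and apply a Hanson--Wright-style concentration inequality, but the cleanest route here exploits the special structure of $\eps(\xi)$ as the aliasing tail of a periodogram-type estimator. First I would write, using (A3), $H_\alpha(\omg-\xi) = G(\omg-\xi)$ on the controlled band and note that $\eps(\xi) = \bm{M}_{jj} - \int_{-\Omega+\xi}^{\Omega+\xi}|G(\omg-\xi)|^2 S(\omg)\dif\omg$ is exactly the contribution of frequencies outside $[-\Omega+\xi,\Omega+\xi]$ to $\E|\eta|^2$. The key observation is that this tail contribution can be represented as $\eps(\xi) = |\zeta|^2$ where $\zeta$ is (in distribution, conditionally on the locations) a mean-zero complex Gaussian whose variance is precisely $\int_{E_{\Omega,\xi}}|H_\alpha(\omg-\xi)|^2 S(\omg)\dif\omg$ --- i.e., $\eps(\xi)$ is the squared modulus of the part of the weighted Fourier sum driven by the high-frequency portion of the spectrum. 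Concretely, decompose $Y$ spectrally as $Y = Y_{\text{in}} + Y_{\text{out}}$ where $Y_{\text{out}}$ collects the spectral mass on $E_{\Omega,\xi}$; then the aliasing bias is $\E|H_\alpha\text{-sum applied to } Y_{\text{out}}|^2$, and since this is a single squared complex Gaussian, $\eps(\xi)/\sigma_\xi^2$ is (a multiple of) a $\chi^2_2$ random variable with $\sigma_\xi^2 = \int_{E_{\Omega,\xi}}|H_\alpha(\omg-\xi)|^2 S(\omg)\dif\omg$.

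Second, I would bound $\sigma_\xi^2$ pointwise. By (A3) and the crude bound $|H_\alpha(\omg-\xi)|^2 \le \norm[2]{\bal}^2$ (Cauchy--Schwarz on $\sum \alpha_j e^{-2\pi i(\omg-\xi)x_j}$, using $|e^{-2\pi i(\omg-\xi)x_j}|=1$), we get $\sigma_\xi^2 \le \norm[2]{\bal}^2 \int_{E_{\Omega,\xi}} S(\omg)\dif\omg$. Then the standard sub-exponential tail for a $\chi^2_2$ variable, $P\{\chi^2_2 \ge t\} = e^{-t/2}$, combined with scaling by $\sigma_\xi^2$, gives $P\{\eps(\xi) \ge u\} \le \exp\{-u/(2\sigma_\xi^2)\} \le \exp\{-u/(2\norm[2]{\bal}^2\int_{E_{\Omega,\xi}}S)\}$. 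Setting $u = \beta S(\xi)$ yields the claimed bound; the factor of $2$ out front is slack that covers replacing the $\chi^2_2$ exact identity with a one-sided bound, or handling the real/imaginary decomposition into two independent real Gaussians separately and union-bounding.

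The main obstacle I anticipate is making the ``$\eps(\xi)$ is exactly one squared complex Gaussian'' step rigorous rather than heuristic. The quantity $\eps(\xi)$ as defined in \eqref{eq:epsj} is the \emph{expectation} (over $\by$, conditional on locations) of an aliasing integral, so strictly it is $\E_{\by}|\cdot|^2$, a deterministic-in-$\by$ object whose remaining randomness is over the locations $\set{x_j}$. So the concentration in Theorem~\ref{thm:aliasing} is genuinely over the random sampling locations, not over $\by$ --- this is why assumptions (A1)--(A2) and (A4) enter. The right framing is: $\E\eps(\xi) \approx \norm[2]{\bal}^2\int_{E_{\Omega,\xi}}S$ as already computed in the text, and one must show $\eps(\xi)$ concentrates around (or below) this mean. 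Here $\eps(\xi) = \int_{E_{\Omega,\xi}} |H_\alpha(\omg-\xi)|^2 S(\omg)\dif\omg$ and $|H_\alpha(\omg-\xi)|^2 = \sum_{j,k}\alpha_j\alpha_k e^{-2\pi i(\omg-\xi)(x_j-x_k)}$, so $\eps(\xi) = \sum_{j,k}\alpha_j\alpha_k c_{jk}$ where $c_{jk} = \int_{E_{\Omega,\xi}} e^{-2\pi i(\omg-\xi)(x_j-x_k)}S(\omg)\dif\omg$ depends on $x_j-x_k$; this is a U-statistic-like quadratic form in the i.i.d.\ locations. The cleanest rigorous path is probably to bound its moment generating function directly: using $|c_{jk}|\le \int_{E_{\Omega,\xi}}S =: \tau$ and a decoupling/Gershgorin argument to control $\sum_{j,k}\alpha_j\alpha_k c_{jk} \le \tau \cdot (\text{operator-norm factor}) \cdot \norm[2]{\bal}^2$, then applying a Chernoff bound, where the characteristic-function assumptions (A2), (A4) are exactly what make the off-diagonal terms' MGF tractable and deliver the clean constant $2$ and the exponent $\beta S(\xi)/(2\norm[2]{\bal}^2\tau)$. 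I would structure the writeup around this MGF computation, deferring the routine characteristic-function manipulations.
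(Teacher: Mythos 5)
Your final paragraph lands on the correct framing --- the randomness in Theorem~\ref{thm:aliasing} is over the sampling locations $\set{x_j}$, not over $\by$, and the right tool is an MGF bound for $\eps(\xi) = \int_{E_{\Omega,\xi}} \abs{H_\alpha(\omg-\xi)}^2 S(\omg)\dif\omg$ followed by a Chernoff bound. But there is a genuine gap: the ``decoupling/Gershgorin'' sketch you propose for controlling the MGF of this quadratic form does not produce the stated exponent, and as written it conflates a deterministic bound with a probabilistic one. The crude bound $\abs{H_\alpha(\omg-\xi)}^2 \le \norm[1]{\bal}^2$ holds surely, so an operator-norm-type control of $\sum_{j,k}\alpha_j\alpha_k c_{jk}$ gives a deterministic ceiling on $\eps(\xi)$ of order $\norm[1]{\bal}^2 \int_{E_{\Omega,\xi}} S$; the entire content of the theorem is that the \emph{typical} size is the much smaller $\norm[2]{\bal}^2 \int_{E_{\Omega,\xi}} S$, and no Chernoff bound layered on a deterministic bound can recover that. (Your opening paragraph, which treats $\eps(\xi)$ as a squared complex Gaussian in $\by$, is simply the wrong random object --- you catch this yourself, but it means the $\chi^2_2$ tail identity you invoke there does no work.)

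The paper supplies the two missing ingredients. First, a CLT over the locations (Lemma~\ref{lemma:dist}): under (A$1$)--(A$2$), $H_\alpha(\omg)$ is asymptotically complex Gaussian, and under (A$4$)--(A$5$) the limiting covariance degenerates to $\norm[2]{\bal}^2 \I$, so $\abs{H_\alpha(\omg)}^2$ is asymptotically exponential with MGF $(1 - t\norm[2]{\bal}^2)^{-1}$ --- this is where the characteristic-function assumptions actually enter, pointwise in $\omg$, not in taming off-diagonal terms of a U-statistic. Second, and this is the step your proposal has no substitute for, Jensen's inequality with the \emph{normalized spectral density} $q(\omg) = S(\omg)/\int_{E_{\Omega,\xi}} S$ as mixing measure converts $\E\exp\{t\int \abs{H_\alpha}^2 S\}$ into $\int \E\exp\{(t/C)\abs{H_\alpha(\omg-\xi)}^2\}\, q(\omg)\dif\omg$, i.e.\ a mixture of the pointwise MGFs from the lemma, each evaluated at the rescaled argument $t/C$ with $C = (\int_{E_{\Omega,\xi}} S)^{-1}$. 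This yields $\E e^{t\eps(\xi)} \le (1 - \tfrac{t}{C}\norm[2]{\bal}^2)^{-1}$, and a Chernoff bound at $t = \tfrac{C}{2\norm[2]{\bal}^2}$ gives exactly the claimed exponent with prefactor $2$. Without the Jensen step you have no route from the pointwise law of $\abs{H_\alpha(\omg-\xi)}^2$ to the MGF of its $S$-weighted integral, so the proposal as it stands cannot be completed along the lines you describe.
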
 
This theorem gives a probabilistic control over the size of the aliasing bias
$\eps(\xi)$ relatively to $S(\xi)$. In particular, the event $\set{\eps(\xi)
\geq \beta S(\xi)}$ occuring would mean that one should \emph{not} expect the
estimator $\hat{S}(\xi)$ to achieve $-\log_{10}\beta$ correct digits in the
sense that $|\hat{S}(\xi) - S(\xi)|/S(\xi) < \beta$. It does not directly
correspond to the event of achieving $-\log_{10}\beta$ digits, but if $\eps(\xi)
> S(\xi)$, for example, then there is no chance of achieving even one correct
digit.  With this in mind, this bound is best interpreted as one that is
informative about disqualifying events in which a certain relative error is
achieved. A visual example of the phenomenon described by the above theorem is
given in Figure~\ref{fig:theorem_demo}. The key observation is that as $\xi$
increases, frequencies $\omg$ for which $S(\omg)$ is large --- typically in a
neighborhood of the origin --- are no longer included in the controlled
frequency band $[-\Omega + \xi, \Omega + \xi]$. Since $H_{\alpha}(\omg) \propto
\norm[2]{\bal}^2$ in the uncontrolled region $E_{\Omega,\xi}$, the bias
$\eps(\xi)$ can grow dramatically. In terms of relative error, this bias
can be even more significant, as the signal $S(\xi)$ typically shrinks as $\xi$
grows.

\begin{figure}[ht!]
  \centering
% GNUPLOT: LaTeX picture with Postscript
\begingroup
  \makeatletter
  \providecommand\color[2][]{%
    \GenericError{(gnuplot) \space\space\space\@spaces}{%
      Package color not loaded in conjunction with
      terminal option `colourtext'%
    }{See the gnuplot documentation for explanation.%
    }{Either use 'blacktext' in gnuplot or load the package
      color.sty in LaTeX.}%
    \renewcommand\color[2][]{}%
  }%
  \providecommand\includegraphics[2][]{%
    \GenericError{(gnuplot) \space\space\space\@spaces}{%
      Package graphicx or graphics not loaded%
    }{See the gnuplot documentation for explanation.%
    }{The gnuplot epslatex terminal needs graphicx.sty or graphics.sty.}%
    \renewcommand\includegraphics[2][]{}%
  }%
  \providecommand\rotatebox[2]{#2}%
  \@ifundefined{ifGPcolor}{%
    \newif\ifGPcolor
    \GPcolortrue
  }{}%
  \@ifundefined{ifGPblacktext}{%
    \newif\ifGPblacktext
    \GPblacktexttrue
  }{}%
  % define a \g@addto@macro without @ in the name:
  \let\gplgaddtomacro\g@addto@macro
  % define empty templates for all commands taking text:
  \gdef\gplbacktext{}%
  \gdef\gplfronttext{}%
  \makeatother
  \ifGPblacktext
    % no textcolor at all
    \def\colorrgb#1{}%
    \def\colorgray#1{}%
  \else
    % gray or color?
    \ifGPcolor
      \def\colorrgb#1{\color[rgb]{#1}}%
      \def\colorgray#1{\color[gray]{#1}}%
      \expandafter\def\csname LTw\endcsname{\color{white}}%
      \expandafter\def\csname LTb\endcsname{\color{black}}%
      \expandafter\def\csname LTa\endcsname{\color{black}}%
      \expandafter\def\csname LT0\endcsname{\color[rgb]{1,0,0}}%
      \expandafter\def\csname LT1\endcsname{\color[rgb]{0,1,0}}%
      \expandafter\def\csname LT2\endcsname{\color[rgb]{0,0,1}}%
      \expandafter\def\csname LT3\endcsname{\color[rgb]{1,0,1}}%
      \expandafter\def\csname LT4\endcsname{\color[rgb]{0,1,1}}%
      \expandafter\def\csname LT5\endcsname{\color[rgb]{1,1,0}}%
      \expandafter\def\csname LT6\endcsname{\color[rgb]{0,0,0}}%
      \expandafter\def\csname LT7\endcsname{\color[rgb]{1,0.3,0}}%
      \expandafter\def\csname LT8\endcsname{\color[rgb]{0.5,0.5,0.5}}%
    \else
      % gray
      \def\colorrgb#1{\color{black}}%
      \def\colorgray#1{\color[gray]{#1}}%
      \expandafter\def\csname LTw\endcsname{\color{white}}%
      \expandafter\def\csname LTb\endcsname{\color{black}}%
      \expandafter\def\csname LTa\endcsname{\color{black}}%
      \expandafter\def\csname LT0\endcsname{\color{black}}%
      \expandafter\def\csname LT1\endcsname{\color{black}}%
      \expandafter\def\csname LT2\endcsname{\color{black}}%
      \expandafter\def\csname LT3\endcsname{\color{black}}%
      \expandafter\def\csname LT4\endcsname{\color{black}}%
      \expandafter\def\csname LT5\endcsname{\color{black}}%
      \expandafter\def\csname LT6\endcsname{\color{black}}%
      \expandafter\def\csname LT7\endcsname{\color{black}}%
      \expandafter\def\csname LT8\endcsname{\color{black}}%
    \fi
  \fi
    \setlength{\unitlength}{0.0500bp}%
    \ifx\gptboxheight\undefined%
      \newlength{\gptboxheight}%
      \newlength{\gptboxwidth}%
      \newsavebox{\gptboxtext}%
    \fi%
    \setlength{\fboxrule}{0.5pt}%
    \setlength{\fboxsep}{1pt}%
    \definecolor{tbcol}{rgb}{1,1,1}%
\begin{picture}(8500.00,3400.00)%
    \gplgaddtomacro\gplbacktext{%
      \csname LTb\endcsname%%
      \put(323,676){\makebox(0,0)[r]{\strut{}\footnotesize $10^{-6}$}}%
      \csname LTb\endcsname%%
      \put(323,1149){\makebox(0,0)[r]{\strut{}\footnotesize $10^{-5}$}}%
      \csname LTb\endcsname%%
      \put(323,1622){\makebox(0,0)[r]{\strut{}\footnotesize $10^{-4}$}}%
      \csname LTb\endcsname%%
      \put(323,2095){\makebox(0,0)[r]{\strut{}\footnotesize $10^{-3}$}}%
      \csname LTb\endcsname%%
      \put(323,2568){\makebox(0,0)[r]{\strut{}\footnotesize $10^{-2}$}}%
      \csname LTb\endcsname%%
      \put(323,3041){\makebox(0,0)[r]{\strut{}\footnotesize $10^{-1}$}}%
      \csname LTb\endcsname%%
      \put(424,436){\makebox(0,0){\strut{}0}}%
      \csname LTb\endcsname%%
      \put(1560,436){\makebox(0,0){\strut{}$\Omega/2$}}%
      \csname LTb\endcsname%%
      \put(2697,436){\makebox(0,0){\strut{}$\Omega$}}%
    }%
    \gplgaddtomacro\gplfronttext{%
      \csname LTb\endcsname%%
      \put(2525,2826){\makebox(0,0)[r]{\strut{}\footnotesize $\E \hat{S}(\omg)$}}%
      \csname LTb\endcsname%%
      \put(2525,2586){\makebox(0,0)[r]{\strut{}\footnotesize $S(\omg)$}}%
      \csname LTb\endcsname%%
      \put(2525,2346){\makebox(0,0)[r]{\strut{}\footnotesize $(S * |G|^2)(\omg)$}}%
      \csname LTb\endcsname%%
      \put(2525,2107){\makebox(0,0)[r]{\strut{}\footnotesize $\eps(\omg)$}}%
      \csname LTb\endcsname%%
      \put(2119,76){\makebox(0,0){\strut{}\small $\omega$}}%
    }%
    \gplgaddtomacro\gplbacktext{%
      \csname LTb\endcsname%%
      \put(4563,676){\makebox(0,0)[r]{\strut{}\footnotesize $10^{-25}$}}%
      \csname LTb\endcsname%%
      \put(4563,1149){\makebox(0,0)[r]{\strut{}\footnotesize $10^{-20}$}}%
      \csname LTb\endcsname%%
      \put(4563,1622){\makebox(0,0)[r]{\strut{}\footnotesize $10^{-15}$}}%
      \csname LTb\endcsname%%
      \put(4563,2095){\makebox(0,0)[r]{\strut{}\footnotesize $10^{-10}$}}%
      \csname LTb\endcsname%%
      \put(4563,2568){\makebox(0,0)[r]{\strut{}\footnotesize $10^{-5}$}}%
      \csname LTb\endcsname%%
      \put(4563,3041){\makebox(0,0)[r]{\strut{}\footnotesize $10^{0}$}}%
      \csname LTb\endcsname%%
      \put(5004,436){\makebox(0,0){\strut{}$-\Omega + \xi$}}%
      \csname LTb\endcsname%%
      \put(6360,436){\makebox(0,0){\strut{}$\xi$}}%
      \csname LTb\endcsname%%
      \put(7715,436){\makebox(0,0){\strut{}$\Omega + \xi$}}%
    }%
    \gplgaddtomacro\gplfronttext{%
      \csname LTb\endcsname%%
      \put(7305,1610){\makebox(0,0)[r]{\strut{}\footnotesize (n=5k)}}%
      \csname LTb\endcsname%%
      \put(7305,1371){\makebox(0,0)[r]{\strut{}\footnotesize (n=100k)}}%
      \csname LTb\endcsname%%
      \put(6359,76){\makebox(0,0){\strut{}\small $\omega$}}%
    }%
    \gplbacktext
    \put(0,0){\includegraphics[width={425.00bp},height={170.00bp}]{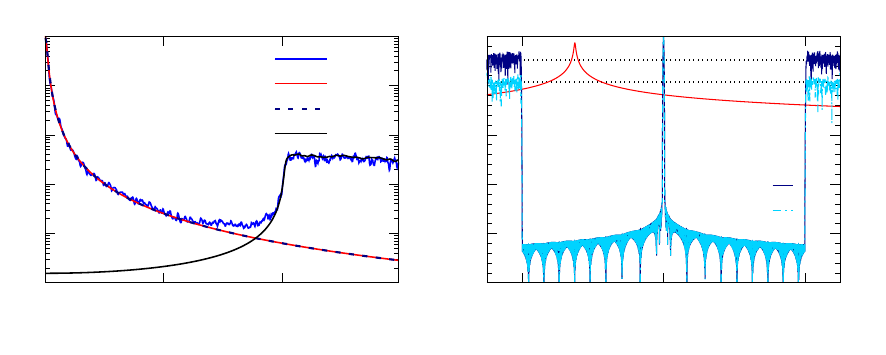}}%
    \gplfronttext
  \end{picture}%
\endgroup
  \caption{ A visual demonstration of the phenomenon described in Theorem
  \ref{thm:aliasing}. Left panel: the expected value of the estimator
  $\hat{S}(\omg)$ at various frequencies compared with the true $S(\omg)$ and
  the aliasing bias $\eps(\omg)$. Right panel: $|H_{\alpha}(\omg - \xi)|^2$ for
  two values of $n$ but fixed $\Omega$ visualized against $S(\omg)$, where the
  integral of the product of those two functions is $\E \hat{S}(\omg)$. The two
  faint dotted lines give $\norm[2]{\bal}^2$ for each data size. }
  \label{fig:theorem_demo}
\end{figure}

Several consequences and interpretive conclusions follow from this theorem.
First, we see that if $\norm[2]{\bm{\alpha}}$ were to blow up, the bound on
aliasing errors would lose all power. As discussed in Sections~\ref{sec:weights}
and~\ref{sec:window}, this blowup happens precisely when one either tries to
resolve too high of a maximum controlled frequency $\Omega$ or selects a poor
window function $g$. As Equation~\ref{eq:bias_decomp} and this theorem make
clear, this aliasing error will pollute \emph{every} estimate $\hat{S}(\xi)$,
even if $\xi \ll \Omega$. So before computing any estimator, one must confirm
that the weights $\bm{\alpha}$ are not only a high-accuracy solution to
(\ref{eq:weights_linsys}), but also are small in norm. Finally, we observe that
the sampling scheme $p$ (or equivalently $\varphi$) does not directly impact
this tail aliasing error in the sense that $\varphi$ is asymptotically
irrelevant to the distribution of $|H_{\alpha}|^2$. It is important to keep in
mind, however, that the distribution of sampling locations plays a crucial role
in the size of $\norm[2]{\bal}$ in practice, and thus should be considered when
selecting a window function $g$.

We provide the following result regarding the convergence rate for spectral
densities with algebraic decay, such as the Mat\'ern process. 
\begin{corollary} \label{cor:rate} Let $Y(x)$ be a process whose spectral
  density $S$ has algebraic tails ${S(\omg) \propto \abs{\omg}^{-2 \nu - 1}}$
  for $|\omg| > \Omega_0$. Consider sampling $Y$ at locations $x_1,\dots,x_n
  \iid p$ satisfying (A$1$)-(A$2$). Let $\Omega_n$ be a sequence of maximum
  controlled frequencies satisfying (A$3$)-(A$5$). Then for frequencies $\xi_n <
  \Omega_n$ and $n$ sufficiently large, 
  \begin{equation*} %\label{eq:} 
    \E \eps(\xi_n) \leq \frac{C (\Omega_n - \xi_n)^{-2 \nu}}{\nu},
  \end{equation*}
  where $C$ is a constant that does not depend on the particular locations
  $\set{x_j}$. 
  \end{corollary}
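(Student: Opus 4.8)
The plan is to reduce the claim to the expression for the expected aliasing bias already obtained in this section and then estimate the tail mass of an algebraically-decaying density. Recall from \eqref{eq:epsj} that $\eps(\xi) = \int_{E_{\Omega,\xi}} |H_{\alpha}(\omg - \xi)|^2 S(\omg)\dif\omg \geq 0$, and that the surrounding discussion gives $\E\eps(\xi) \approx \norm[2]{\bal}^2 \int_{E_{\Omega,\xi}} S(\omg)\dif\omg$. To get a bound of precisely the stated form one can instead integrate the tail bound of Theorem~\ref{thm:aliasing}: since $\eps(\xi)\geq 0$ we have $\E\eps(\xi) = \int_0^\infty P\set{\eps(\xi) > t}\dif t = S(\xi)\int_0^\infty P\set{\eps(\xi) > \beta S(\xi)}\dif\beta$, and splitting this integral at the value $\beta_0$ where the bound $2\exp\set{-\beta S(\xi)/(2\norm[2]{\bal}^2\int_{E_{\Omega,\xi}}S)}$ first drops below $1$ --- bounding the integrand by $1$ for $\beta < \beta_0$ and by the exponential for $\beta \geq \beta_0$ --- yields $\E\eps(\xi) \leq 2(1 + \log 2)\,\norm[2]{\bal}^2 \int_{E_{\Omega,\xi}} S(\omg)\dif\omg$. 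So up to a universal constant it suffices to bound $\norm[2]{\bal}^2$ and the tail mass $\int_{E_{\Omega,\xi}} S$.

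Next I would dispose of $\norm[2]{\bal}$ using (A5): there exist weights $\bal$ with $\norm[1]{\bal} = \bO(1)$, and this bound is independent of the sampling locations $\set{x_j}$ --- indeed, by construction $\sum_j \alpha_j \approx G(0)$, a quantity fixed by the window $g$ alone. Since $\norm[2]{\bal} \leq \norm[1]{\bal}$, we obtain $\norm[2]{\bal}^2 \leq C_1$ for a location-independent constant $C_1$.

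The remaining step is the tail computation. By symmetry of $S$ we may take $\xi \geq 0$, and $E_{\Omega,\xi} = (-\infty,\xi-\Omega)\cup(\Omega+\xi,\infty)$, so $\int_{E_{\Omega,\xi}} S = \int_{\Omega-\xi}^\infty S(\omg)\dif\omg + \int_{\Omega+\xi}^\infty S(\omg)\dif\omg$. Taking $n$ large enough that $\Omega_n - \xi_n > \Omega_0$ (which also forces $\Omega_n + \xi_n > \Omega_0$) --- this is the content of ``$n$ sufficiently large'' --- we are integrating over a region where $S(\omg) = c\abs{\omg}^{-2\nu-1}$, so $\int_t^\infty S = \frac{c}{2\nu} t^{-2\nu}$ and hence $\int_{E_{\Omega_n,\xi_n}} S = \frac{c}{2\nu}\big[(\Omega_n-\xi_n)^{-2\nu} + (\Omega_n+\xi_n)^{-2\nu}\big] \leq \frac{c}{\nu}(\Omega_n - \xi_n)^{-2\nu}$, using $\Omega_n + \xi_n \geq \Omega_n - \xi_n$. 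Combining the three estimates gives $\E\eps(\xi_n) \leq 2(1+\log 2)\, C_1 \cdot \frac{c}{\nu}(\Omega_n - \xi_n)^{-2\nu}$, which is the claim with $C := 2(1+\log 2)\, C_1 \, c$, depending only on $g$ and on the tail constant of $S$.

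There is no serious obstacle here --- this is genuinely a corollary of Theorem~\ref{thm:aliasing} --- but two points merit care. The first is the assertion that $C$ does not depend on the locations $\set{x_j}$: this is exactly the uniformity built into (A5), and it is worth stating explicitly that the near-equality $\norm[1]{\bal}\approx G(0)$ is what makes that assumption reasonable in practice. The second is the implicit regime restriction $\Omega_n - \xi_n > \Omega_0$ folded into ``$n$ sufficiently large''; if one wanted a statement valid for all $n$, the contribution of $[\Omega_n - \xi_n, \Omega_0]$ (where the power-law form of $S$ need not hold) could be absorbed into the constant using only $\int_{\R} S < \infty$, at the cost of a cruder bound in that transitional regime.
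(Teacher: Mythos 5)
Your proposal is correct and follows essentially the same route as the paper's proof: both bound $\E\eps(\xi_n)$ via the survival-function identity combined with the exponential tail bound of Theorem~\ref{thm:aliasing}, use (A5) to control $\norm[2]{\bal}$ uniformly in $n$ and independently of the locations, and then evaluate the algebraic tail integral over $E_{\Omega_n,\xi_n}$. Your splitting of the $\beta$-integral at $\beta_0$ to handle the prefactor $2$ in the tail bound is in fact slightly more careful than the paper's ``slight variation'' shortcut, but yields the same form of bound up to the constant.
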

The error between the estimated value $\hat{S}$ and the convolved spectrum $(G *
S)$ is made up of the error between $G$ and $H_\alpha$ within the controlled
band $[-\Omega, \Omega]$, and the aliasing error studied above. In practice this
first source of error is negligible, and thus Corollary~\ref{cor:rate} can be
used to obtain the rate at which $\hat{S}(\xi_n) \to_n (G * S)(\xi_n)$ for
sequences $\set{\xi_n}_n$ such that $\Omega_n - \xi_n$ has a (possibly infinite)
limit.

A consequence of this result is that for processes with slow spectral decay, one
can require significantly higher amounts of data to reduce $\E \eps(\xi_n)$ by a
meaningful factor. As an example, consider a Mat\'ern process with $\nu =
\frac{1}{2}$, taking $\xi_n = \sqrt{n}$ and $\Omega_n = \frac{1}{5} n^{0.9}$. If
we start with a baseline data size of $n=1{,}000$, then $\E \eps(\xi_n)$ is
reduced by a factor of $10$ at data size $n' \approx 10{,}000$. For $\nu =
\frac{5}{2}$, on the other hand, the same reduction is achieved at $n' \approx
1{,}550$. 

As a final observation on controlling $\eps(\xi_n)$, we note that if $\Omega_n$
is chosen \emph{less} ambitiously than the maximum possible under which
$\bal(n)$ remain uniformly bounded, then one can achieve $\norm[2]{\bal(n)}
\rightarrow_n 0$, which also leads to a reduction in aliasing error. Therefore,
particularly for processes with very slow spectral decay, more conservative
choices of $\Omega$ which reduce the size of $\norm[2]{\bal}$ may yield superior
performance. While we leave a detailed analysis of such tradeoffs to future work
and specific applications, it is worth keeping in mind that a larger $\Omega$ is
not the only way to reduce super-Nyquist bias.

\section{Numerical demonstrations} \label{sec:demo}

In the following section we provide a variety of numerical experiments
illuminating various aspects of our spectral estimator, including aliasing bias
bounds and window reconstruction errors. In addition, we demonstrate the
performance of our method in estimating spectral densities of one- and
two-dimensional processes.

\subsection{Runtime cost of weight computation} \label{sec:runtime}

Figures \ref{fig:runtime} and \ref{fig:runtime_2d} show the runtime cost of
computing weights in two cases: random locations $\set{x_j}_{j=1}^n \iid
\text{Unif}([0,1])$, and $\set{x_j}_{j=1}^n$ derived from a grid with a single
large gap in one and two dimensions respectively. The biggest effective
difference between these two computations is that for the uniformly random
points, a preconditioner must be assembled and factorized in order to accelerate
the Krylov solver. In the gappy gridded case, as mentioned above, the matrix
$\Frt^* \bm{D} \Frt$ in the normal equations reduces approximately to a scaled
identity matrix --- and so even with no preconditioner and a severely
ill-conditioned linear system, the iterative solver still converges rapidly. In
all cases, the number of iterations required to solve the system is effectively
constant in the number of observations $n$, and so the dominant runtime cost in
the case of fully irregular data is preconditioner construction, and in the
gridded case is the NUFFT computation. Building the preconditioner is admittedly
expensive relative to computing all of the necessary NUFFTs (costing roughly a
factor of five times more in the current software implementation and data sizes
of Figure \ref{fig:runtime}).  While the asymptotic complexity of forming and
solving linear systems with the sparse preconditioner in two dimensions is
$\bO(n^{3/2})$ in theory, evidently that scaling cost can be slow to dominate,
as the agreement with $\bO(n \log n)$ scaling is quite good even up to
$n=2^{18}$ points. 

\begin{figure}[!ht]
  \centering
% GNUPLOT: LaTeX picture with Postscript
\begingroup
  \makeatletter
  \providecommand\color[2][]{%
    \GenericError{(gnuplot) \space\space\space\@spaces}{%
      Package color not loaded in conjunction with
      terminal option `colourtext'%
    }{See the gnuplot documentation for explanation.%
    }{Either use 'blacktext' in gnuplot or load the package
      color.sty in LaTeX.}%
    \renewcommand\color[2][]{}%
  }%
  \providecommand\includegraphics[2][]{%
    \GenericError{(gnuplot) \space\space\space\@spaces}{%
      Package graphicx or graphics not loaded%
    }{See the gnuplot documentation for explanation.%
    }{The gnuplot epslatex terminal needs graphicx.sty or graphics.sty.}%
    \renewcommand\includegraphics[2][]{}%
  }%
  \providecommand\rotatebox[2]{#2}%
  \@ifundefined{ifGPcolor}{%
    \newif\ifGPcolor
    \GPcolortrue
  }{}%
  \@ifundefined{ifGPblacktext}{%
    \newif\ifGPblacktext
    \GPblacktexttrue
  }{}%
  % define a \g@addto@macro without @ in the name:
  \let\gplgaddtomacro\g@addto@macro
  % define empty templates for all commands taking text:
  \gdef\gplbacktext{}%
  \gdef\gplfronttext{}%
  \makeatother
  \ifGPblacktext
    % no textcolor at all
    \def\colorrgb#1{}%
    \def\colorgray#1{}%
  \else
    % gray or color?
    \ifGPcolor
      \def\colorrgb#1{\color[rgb]{#1}}%
      \def\colorgray#1{\color[gray]{#1}}%
      \expandafter\def\csname LTw\endcsname{\color{white}}%
      \expandafter\def\csname LTb\endcsname{\color{black}}%
      \expandafter\def\csname LTa\endcsname{\color{black}}%
      \expandafter\def\csname LT0\endcsname{\color[rgb]{1,0,0}}%
      \expandafter\def\csname LT1\endcsname{\color[rgb]{0,1,0}}%
      \expandafter\def\csname LT2\endcsname{\color[rgb]{0,0,1}}%
      \expandafter\def\csname LT3\endcsname{\color[rgb]{1,0,1}}%
      \expandafter\def\csname LT4\endcsname{\color[rgb]{0,1,1}}%
      \expandafter\def\csname LT5\endcsname{\color[rgb]{1,1,0}}%
      \expandafter\def\csname LT6\endcsname{\color[rgb]{0,0,0}}%
      \expandafter\def\csname LT7\endcsname{\color[rgb]{1,0.3,0}}%
      \expandafter\def\csname LT8\endcsname{\color[rgb]{0.5,0.5,0.5}}%
    \else
      % gray
      \def\colorrgb#1{\color{black}}%
      \def\colorgray#1{\color[gray]{#1}}%
      \expandafter\def\csname LTw\endcsname{\color{white}}%
      \expandafter\def\csname LTb\endcsname{\color{black}}%
      \expandafter\def\csname LTa\endcsname{\color{black}}%
      \expandafter\def\csname LT0\endcsname{\color{black}}%
      \expandafter\def\csname LT1\endcsname{\color{black}}%
      \expandafter\def\csname LT2\endcsname{\color{black}}%
      \expandafter\def\csname LT3\endcsname{\color{black}}%
      \expandafter\def\csname LT4\endcsname{\color{black}}%
      \expandafter\def\csname LT5\endcsname{\color{black}}%
      \expandafter\def\csname LT6\endcsname{\color{black}}%
      \expandafter\def\csname LT7\endcsname{\color{black}}%
      \expandafter\def\csname LT8\endcsname{\color{black}}%
    \fi
  \fi
    \setlength{\unitlength}{0.0500bp}%
    \ifx\gptboxheight\undefined%
      \newlength{\gptboxheight}%
      \newlength{\gptboxwidth}%
      \newsavebox{\gptboxtext}%
    \fi%
    \setlength{\fboxrule}{0.5pt}%
    \setlength{\fboxsep}{1pt}%
    \definecolor{tbcol}{rgb}{1,1,1}%
\begin{picture}(9060.00,3400.00)%
    \gplgaddtomacro\gplbacktext{%
      \csname LTb\endcsname%%
      \put(1255,992){\makebox(0,0)[r]{\strut{}\footnotesize $10^{-1}$}}%
      \csname LTb\endcsname%%
      \put(1255,1638){\makebox(0,0)[r]{\strut{}\footnotesize $10^{0}$}}%
      \csname LTb\endcsname%%
      \put(1255,2285){\makebox(0,0)[r]{\strut{}\footnotesize $10^{1}$}}%
      \csname LTb\endcsname%%
      \put(1356,436){\makebox(0,0){\strut{}\footnotesize $2^{10}$}}%
      \csname LTb\endcsname%%
      \put(1723,436){\makebox(0,0){\strut{}\footnotesize $2^{11}$}}%
      \csname LTb\endcsname%%
      \put(2090,436){\makebox(0,0){\strut{}\footnotesize $2^{12}$}}%
      \csname LTb\endcsname%%
      \put(2457,436){\makebox(0,0){\strut{}\footnotesize $2^{13}$}}%
      \csname LTb\endcsname%%
      \put(2825,436){\makebox(0,0){\strut{}\footnotesize $2^{14}$}}%
      \csname LTb\endcsname%%
      \put(3192,436){\makebox(0,0){\strut{}\footnotesize $2^{15}$}}%
      \csname LTb\endcsname%%
      \put(3559,436){\makebox(0,0){\strut{}\footnotesize $2^{16}$}}%
      \csname LTb\endcsname%%
      \put(3926,436){\makebox(0,0){\strut{}\footnotesize $2^{17}$}}%
      \csname LTb\endcsname%%
      \put(4293,436){\makebox(0,0){\strut{}\footnotesize $2^{18}$}}%
    }%
    \gplgaddtomacro\gplfronttext{%
      \csname LTb\endcsname%%
      \put(3745,1131){\makebox(0,0)[r]{\strut{}\footnotesize runtime}}%
      \csname LTb\endcsname%%
      \put(3745,891){\makebox(0,0)[r]{\strut{}\footnotesize $\bO(n \log n)$}}%
      \csname LTb\endcsname%%
      \put(455,1689){\rotatebox{-270.00}{\makebox(0,0){\strut{}\small runtime (s)}}}%
      \csname LTb\endcsname%%
      \put(2824,76){\makebox(0,0){\strut{}\small $n$}}%
      \csname LTb\endcsname%%
      \put(2824,3063){\makebox(0,0){\strut{}\small $x_j$ uniform i.i.d.}}%
    }%
    \gplgaddtomacro\gplbacktext{%
      \csname LTb\endcsname%%
      \put(5097,1089){\makebox(0,0)[r]{\strut{}\footnotesize $10^{-2}$}}%
      \csname LTb\endcsname%%
      \put(5097,1850){\makebox(0,0)[r]{\strut{}\footnotesize $10^{-1}$}}%
      \csname LTb\endcsname%%
      \put(5097,2612){\makebox(0,0)[r]{\strut{}\footnotesize $10^{0}$}}%
      \csname LTb\endcsname%%
      \put(5198,436){\makebox(0,0){\strut{}\footnotesize $2^{10}$}}%
      \csname LTb\endcsname%%
      \put(5565,436){\makebox(0,0){\strut{}\footnotesize $2^{11}$}}%
      \csname LTb\endcsname%%
      \put(5932,436){\makebox(0,0){\strut{}\footnotesize $2^{12}$}}%
      \csname LTb\endcsname%%
      \put(6299,436){\makebox(0,0){\strut{}\footnotesize $2^{13}$}}%
      \csname LTb\endcsname%%
      \put(6667,436){\makebox(0,0){\strut{}\footnotesize $2^{14}$}}%
      \csname LTb\endcsname%%
      \put(7034,436){\makebox(0,0){\strut{}\footnotesize $2^{15}$}}%
      \csname LTb\endcsname%%
      \put(7401,436){\makebox(0,0){\strut{}\footnotesize $2^{16}$}}%
      \csname LTb\endcsname%%
      \put(7768,436){\makebox(0,0){\strut{}\footnotesize $2^{17}$}}%
      \csname LTb\endcsname%%
      \put(8135,436){\makebox(0,0){\strut{}\footnotesize $2^{18}$}}%
    }%
    \gplgaddtomacro\gplfronttext{%
      \csname LTb\endcsname%%
      \put(7587,1131){\makebox(0,0)[r]{\strut{}\footnotesize runtime}}%
      \csname LTb\endcsname%%
      \put(7587,891){\makebox(0,0)[r]{\strut{}\footnotesize $\bO(n \log n)$}}%
      \csname LTb\endcsname%%
      \put(6666,76){\makebox(0,0){\strut{}\small $n$}}%
      \csname LTb\endcsname%%
      \put(6666,3063){\makebox(0,0){\strut{}\small $x_j$ gappy grid}}%
    }%
    \gplbacktext
    \put(0,0){\includegraphics[width={453.00bp},height={170.00bp}]{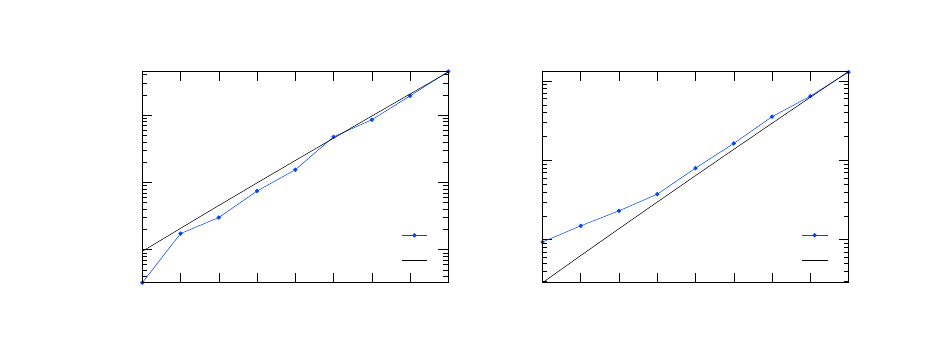}}%
    \gplfronttext
  \end{picture}%
\endgroup
  \caption{Runtime measurements for weight computations in two sampling regimes:
  uniformly random points and gappy gridded points. For random points, a
  hierarchical matrix preconditioner is used to accelerate the Krylov solver. In
  the gappy gridded case, no preconditioner is used.}
  \label{fig:runtime}
\end{figure}

\begin{figure}[!ht]
  \centering
% GNUPLOT: LaTeX picture with Postscript
\begingroup
  \makeatletter
  \providecommand\color[2][]{%
    \GenericError{(gnuplot) \space\space\space\@spaces}{%
      Package color not loaded in conjunction with
      terminal option `colourtext'%
    }{See the gnuplot documentation for explanation.%
    }{Either use 'blacktext' in gnuplot or load the package
      color.sty in LaTeX.}%
    \renewcommand\color[2][]{}%
  }%
  \providecommand\includegraphics[2][]{%
    \GenericError{(gnuplot) \space\space\space\@spaces}{%
      Package graphicx or graphics not loaded%
    }{See the gnuplot documentation for explanation.%
    }{The gnuplot epslatex terminal needs graphicx.sty or graphics.sty.}%
    \renewcommand\includegraphics[2][]{}%
  }%
  \providecommand\rotatebox[2]{#2}%
  \@ifundefined{ifGPcolor}{%
    \newif\ifGPcolor
    \GPcolortrue
  }{}%
  \@ifundefined{ifGPblacktext}{%
    \newif\ifGPblacktext
    \GPblacktexttrue
  }{}%
  % define a \g@addto@macro without @ in the name:
  \let\gplgaddtomacro\g@addto@macro
  % define empty templates for all commands taking text:
  \gdef\gplbacktext{}%
  \gdef\gplfronttext{}%
  \makeatother
  \ifGPblacktext
    % no textcolor at all
    \def\colorrgb#1{}%
    \def\colorgray#1{}%
  \else
    % gray or color?
    \ifGPcolor
      \def\colorrgb#1{\color[rgb]{#1}}%
      \def\colorgray#1{\color[gray]{#1}}%
      \expandafter\def\csname LTw\endcsname{\color{white}}%
      \expandafter\def\csname LTb\endcsname{\color{black}}%
      \expandafter\def\csname LTa\endcsname{\color{black}}%
      \expandafter\def\csname LT0\endcsname{\color[rgb]{1,0,0}}%
      \expandafter\def\csname LT1\endcsname{\color[rgb]{0,1,0}}%
      \expandafter\def\csname LT2\endcsname{\color[rgb]{0,0,1}}%
      \expandafter\def\csname LT3\endcsname{\color[rgb]{1,0,1}}%
      \expandafter\def\csname LT4\endcsname{\color[rgb]{0,1,1}}%
      \expandafter\def\csname LT5\endcsname{\color[rgb]{1,1,0}}%
      \expandafter\def\csname LT6\endcsname{\color[rgb]{0,0,0}}%
      \expandafter\def\csname LT7\endcsname{\color[rgb]{1,0.3,0}}%
      \expandafter\def\csname LT8\endcsname{\color[rgb]{0.5,0.5,0.5}}%
    \else
      % gray
      \def\colorrgb#1{\color{black}}%
      \def\colorgray#1{\color[gray]{#1}}%
      \expandafter\def\csname LTw\endcsname{\color{white}}%
      \expandafter\def\csname LTb\endcsname{\color{black}}%
      \expandafter\def\csname LTa\endcsname{\color{black}}%
      \expandafter\def\csname LT0\endcsname{\color{black}}%
      \expandafter\def\csname LT1\endcsname{\color{black}}%
      \expandafter\def\csname LT2\endcsname{\color{black}}%
      \expandafter\def\csname LT3\endcsname{\color{black}}%
      \expandafter\def\csname LT4\endcsname{\color{black}}%
      \expandafter\def\csname LT5\endcsname{\color{black}}%
      \expandafter\def\csname LT6\endcsname{\color{black}}%
      \expandafter\def\csname LT7\endcsname{\color{black}}%
      \expandafter\def\csname LT8\endcsname{\color{black}}%
    \fi
  \fi
    \setlength{\unitlength}{0.0500bp}%
    \ifx\gptboxheight\undefined%
      \newlength{\gptboxheight}%
      \newlength{\gptboxwidth}%
      \newsavebox{\gptboxtext}%
    \fi%
    \setlength{\fboxrule}{0.5pt}%
    \setlength{\fboxsep}{1pt}%
    \definecolor{tbcol}{rgb}{1,1,1}%
\begin{picture}(9060.00,3400.00)%
    \gplgaddtomacro\gplbacktext{%
      \csname LTb\endcsname%%
      \put(1255,1712){\makebox(0,0)[r]{\strut{}$10^{0}$}}%
      \csname LTb\endcsname%%
      \put(1255,2273){\makebox(0,0)[r]{\strut{}$10^{1}$}}%
      \csname LTb\endcsname%%
      \put(1356,436){\makebox(0,0){\strut{}\footnotesize $2^{10}$}}%
      \csname LTb\endcsname%%
      \put(1723,436){\makebox(0,0){\strut{}\footnotesize $2^{11}$}}%
      \csname LTb\endcsname%%
      \put(2090,436){\makebox(0,0){\strut{}\footnotesize $2^{12}$}}%
      \csname LTb\endcsname%%
      \put(2457,436){\makebox(0,0){\strut{}\footnotesize $2^{13}$}}%
      \csname LTb\endcsname%%
      \put(2825,436){\makebox(0,0){\strut{}\footnotesize $2^{14}$}}%
      \csname LTb\endcsname%%
      \put(3192,436){\makebox(0,0){\strut{}\footnotesize $2^{15}$}}%
      \csname LTb\endcsname%%
      \put(3559,436){\makebox(0,0){\strut{}\footnotesize $2^{16}$}}%
      \csname LTb\endcsname%%
      \put(3926,436){\makebox(0,0){\strut{}\footnotesize $2^{17}$}}%
      \csname LTb\endcsname%%
      \put(4293,436){\makebox(0,0){\strut{}\footnotesize $2^{18}$}}%
    }%
    \gplgaddtomacro\gplfronttext{%
      \csname LTb\endcsname%%
      \put(3745,1371){\makebox(0,0)[r]{\strut{}\footnotesize runtime}}%
      \csname LTb\endcsname%%
      \put(3745,1131){\makebox(0,0)[r]{\strut{}\footnotesize $\bO(n \log n)$}}%
      \csname LTb\endcsname%%
      \put(3745,891){\makebox(0,0)[r]{\strut{}\footnotesize $\bO(n^{3/2})$}}%
      \csname LTb\endcsname%%
      \put(757,1689){\rotatebox{-270.00}{\makebox(0,0){\strut{}\small runtime (s)}}}%
      \csname LTb\endcsname%%
      \put(2824,76){\makebox(0,0){\strut{}\small $n$}}%
      \csname LTb\endcsname%%
      \put(2824,3063){\makebox(0,0){\strut{}\small $\bx_j$ uniform i.i.d.}}%
    }%
    \gplgaddtomacro\gplbacktext{%
      \csname LTb\endcsname%%
      \put(5097,1121){\makebox(0,0)[r]{\strut{}\footnotesize $10^{-1}$}}%
      \csname LTb\endcsname%%
      \put(5097,1883){\makebox(0,0)[r]{\strut{}\footnotesize $10^{0}$}}%
      \csname LTb\endcsname%%
      \put(5097,2644){\makebox(0,0)[r]{\strut{}\footnotesize $10^{1}$}}%
      \csname LTb\endcsname%%
      \put(5198,436){\makebox(0,0){\strut{}\footnotesize $2^{10}$}}%
      \csname LTb\endcsname%%
      \put(5565,436){\makebox(0,0){\strut{}\footnotesize $2^{11}$}}%
      \csname LTb\endcsname%%
      \put(5932,436){\makebox(0,0){\strut{}\footnotesize $2^{12}$}}%
      \csname LTb\endcsname%%
      \put(6299,436){\makebox(0,0){\strut{}\footnotesize $2^{13}$}}%
      \csname LTb\endcsname%%
      \put(6667,436){\makebox(0,0){\strut{}\footnotesize $2^{14}$}}%
      \csname LTb\endcsname%%
      \put(7034,436){\makebox(0,0){\strut{}\footnotesize $2^{15}$}}%
      \csname LTb\endcsname%%
      \put(7401,436){\makebox(0,0){\strut{}\footnotesize $2^{16}$}}%
      \csname LTb\endcsname%%
      \put(7768,436){\makebox(0,0){\strut{}\footnotesize $2^{17}$}}%
      \csname LTb\endcsname%%
      \put(8135,436){\makebox(0,0){\strut{}\footnotesize $2^{18}$}}%
    }%
    \gplgaddtomacro\gplfronttext{%
      \csname LTb\endcsname%%
      \put(7587,1131){\makebox(0,0)[r]{\strut{}\footnotesize runtime}}%
      \csname LTb\endcsname%%
      \put(7587,891){\makebox(0,0)[r]{\strut{}\footnotesize $\bO(n \log n)$}}%
      \csname LTb\endcsname%%
      \put(6666,76){\makebox(0,0){\strut{}\small $n$}}%
      \csname LTb\endcsname%%
      \put(6666,3063){\makebox(0,0){\strut{}\small $\bx_j$ gappy grid}}%
    }%
    \gplbacktext
    \put(0,0){\includegraphics[width={453.00bp},height={170.00bp}]{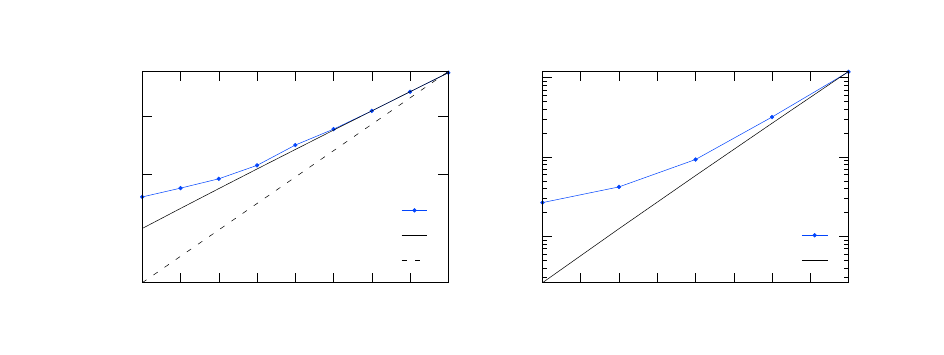}}%
    \gplfronttext
  \end{picture}%
\endgroup
  \caption{An analog of Figure \ref{fig:runtime} but in two dimensions. In the
  uniformly random case, the sparse Gaussian kernel preconditioner described in
  Section \ref{sec:numerical} is used.}
  \label{fig:runtime_2d}
\end{figure}

\subsection{Artifacts of interpolating to a regular grid}
\label{sec:interp_error}

A common practice in the spectral analysis of irregularly sampled data is to
interpolate the data to a regular grid so that standard equispaced spectral
estimators can be used. Certainly with sufficient care and in favorable settings
this \textit{``regridding''} can work very well: kernel interpolation using a
kernel function whose Fourier transform has tails which match the tails of the
process' true spectral density, for example, can result in minimal regridding
artifacts even at high frequencies. Given that high frequency information is one
of the few things that can be estimated well in the fixed-domain asymptotic
sampling regime \cite{stein1999}, one would expect that fitting e.g. a Mat\'ern
model (crucially including the smoothness parameter) to the data and
interpolating would perform well (see~\cite{babu2010} for more discussion, and
\cite{elipot2016,early2020} for a discussion and exploration of the impact of
different regridding methods).  However, estimating the necessary kernel
parameters can be expensive, and model misspecification can still result in
significant errors. The fundamental limitation of regridding estimators is that
the Fourier transform of the regridded data is simply a linear combination of
the Fourier transforms of the shifted kernel function, and thus spectral
estimates often exhibit the spectral characteristics of the interpolating kernel
rather than the true process.  When estimating the spectral density at
sufficiently low frequencies from sufficiently dense data, regridding is
unlikely to produce dramatic artifacts.  At higher frequencies compared to the
data sampling rate and basis function density, this issue can produce
meaningfully incorrect results.

\begin{figure}[!ht]
  \centering
% GNUPLOT: LaTeX picture with Postscript
\begingroup
  \makeatletter
  \providecommand\color[2][]{%
    \GenericError{(gnuplot) \space\space\space\@spaces}{%
      Package color not loaded in conjunction with
      terminal option `colourtext'%
    }{See the gnuplot documentation for explanation.%
    }{Either use 'blacktext' in gnuplot or load the package
      color.sty in LaTeX.}%
    \renewcommand\color[2][]{}%
  }%
  \providecommand\includegraphics[2][]{%
    \GenericError{(gnuplot) \space\space\space\@spaces}{%
      Package graphicx or graphics not loaded%
    }{See the gnuplot documentation for explanation.%
    }{The gnuplot epslatex terminal needs graphicx.sty or graphics.sty.}%
    \renewcommand\includegraphics[2][]{}%
  }%
  \providecommand\rotatebox[2]{#2}%
  \@ifundefined{ifGPcolor}{%
    \newif\ifGPcolor
    \GPcolortrue
  }{}%
  \@ifundefined{ifGPblacktext}{%
    \newif\ifGPblacktext
    \GPblacktexttrue
  }{}%
  % define a \g@addto@macro without @ in the name:
  \let\gplgaddtomacro\g@addto@macro
  % define empty templates for all commands taking text:
  \gdef\gplbacktext{}%
  \gdef\gplfronttext{}%
  \makeatother
  \ifGPblacktext
    % no textcolor at all
    \def\colorrgb#1{}%
    \def\colorgray#1{}%
  \else
    % gray or color?
    \ifGPcolor
      \def\colorrgb#1{\color[rgb]{#1}}%
      \def\colorgray#1{\color[gray]{#1}}%
      \expandafter\def\csname LTw\endcsname{\color{white}}%
      \expandafter\def\csname LTb\endcsname{\color{black}}%
      \expandafter\def\csname LTa\endcsname{\color{black}}%
      \expandafter\def\csname LT0\endcsname{\color[rgb]{1,0,0}}%
      \expandafter\def\csname LT1\endcsname{\color[rgb]{0,1,0}}%
      \expandafter\def\csname LT2\endcsname{\color[rgb]{0,0,1}}%
      \expandafter\def\csname LT3\endcsname{\color[rgb]{1,0,1}}%
      \expandafter\def\csname LT4\endcsname{\color[rgb]{0,1,1}}%
      \expandafter\def\csname LT5\endcsname{\color[rgb]{1,1,0}}%
      \expandafter\def\csname LT6\endcsname{\color[rgb]{0,0,0}}%
      \expandafter\def\csname LT7\endcsname{\color[rgb]{1,0.3,0}}%
      \expandafter\def\csname LT8\endcsname{\color[rgb]{0.5,0.5,0.5}}%
    \else
      % gray
      \def\colorrgb#1{\color{black}}%
      \def\colorgray#1{\color[gray]{#1}}%
      \expandafter\def\csname LTw\endcsname{\color{white}}%
      \expandafter\def\csname LTb\endcsname{\color{black}}%
      \expandafter\def\csname LTa\endcsname{\color{black}}%
      \expandafter\def\csname LT0\endcsname{\color{black}}%
      \expandafter\def\csname LT1\endcsname{\color{black}}%
      \expandafter\def\csname LT2\endcsname{\color{black}}%
      \expandafter\def\csname LT3\endcsname{\color{black}}%
      \expandafter\def\csname LT4\endcsname{\color{black}}%
      \expandafter\def\csname LT5\endcsname{\color{black}}%
      \expandafter\def\csname LT6\endcsname{\color{black}}%
      \expandafter\def\csname LT7\endcsname{\color{black}}%
      \expandafter\def\csname LT8\endcsname{\color{black}}%
    \fi
  \fi
    \setlength{\unitlength}{0.0500bp}%
    \ifx\gptboxheight\undefined%
      \newlength{\gptboxheight}%
      \newlength{\gptboxwidth}%
      \newsavebox{\gptboxtext}%
    \fi%
    \setlength{\fboxrule}{0.5pt}%
    \setlength{\fboxsep}{1pt}%
    \definecolor{tbcol}{rgb}{1,1,1}%
\begin{picture}(9060.00,3400.00)%
    \gplgaddtomacro\gplbacktext{%
      \csname LTb\endcsname%%
      \put(803,791){\makebox(0,0)[r]{\strut{}\footnotesize -2}}%
      \csname LTb\endcsname%%
      \put(803,1014){\makebox(0,0)[r]{\strut{}\footnotesize -1.5}}%
      \csname LTb\endcsname%%
      \put(803,1237){\makebox(0,0)[r]{\strut{}\footnotesize -1}}%
      \csname LTb\endcsname%%
      \put(803,1460){\makebox(0,0)[r]{\strut{}\footnotesize -0.5}}%
      \csname LTb\endcsname%%
      \put(803,1684){\makebox(0,0)[r]{\strut{}\footnotesize 0}}%
      \csname LTb\endcsname%%
      \put(803,1907){\makebox(0,0)[r]{\strut{}\footnotesize 0.5}}%
      \csname LTb\endcsname%%
      \put(803,2130){\makebox(0,0)[r]{\strut{}\footnotesize 1}}%
      \csname LTb\endcsname%%
      \put(803,2353){\makebox(0,0)[r]{\strut{}\footnotesize 1.5}}%
      \csname LTb\endcsname%%
      \put(803,2577){\makebox(0,0)[r]{\strut{}\footnotesize 2}}%
      \csname LTb\endcsname%%
      \put(904,436){\makebox(0,0){\strut{}\footnotesize 0}}%
      \csname LTb\endcsname%%
      \put(1523,436){\makebox(0,0){\strut{}\footnotesize 0.2}}%
      \csname LTb\endcsname%%
      \put(2142,436){\makebox(0,0){\strut{}\footnotesize 0.4}}%
      \csname LTb\endcsname%%
      \put(2761,436){\makebox(0,0){\strut{}\footnotesize 0.6}}%
      \csname LTb\endcsname%%
      \put(3380,436){\makebox(0,0){\strut{}\footnotesize 0.8}}%
      \csname LTb\endcsname%%
      \put(4000,436){\makebox(0,0){\strut{}\footnotesize 1}}%
    }%
    \gplgaddtomacro\gplfronttext{%
      \csname LTb\endcsname%%
      \put(2452,76){\makebox(0,0){\strut{}\small $x$}}%
    }%
    \gplgaddtomacro\gplbacktext{%
      \csname LTb\endcsname%%
      \put(4939,930){\makebox(0,0)[r]{\strut{}\footnotesize $10^{-6}$}}%
      \csname LTb\endcsname%%
      \put(4939,1336){\makebox(0,0)[r]{\strut{}\footnotesize $10^{-5}$}}%
      \csname LTb\endcsname%%
      \put(4939,1742){\makebox(0,0)[r]{\strut{}\footnotesize $10^{-4}$}}%
      \csname LTb\endcsname%%
      \put(4939,2148){\makebox(0,0)[r]{\strut{}\footnotesize $10^{-3}$}}%
      \csname LTb\endcsname%%
      \put(4939,2555){\makebox(0,0)[r]{\strut{}\footnotesize $10^{-2}$}}%
      \csname LTb\endcsname%%
      \put(5039,436){\makebox(0,0){\strut{}\footnotesize 0}}%
      \csname LTb\endcsname%%
      \put(5659,436){\makebox(0,0){\strut{}\footnotesize 50}}%
      \csname LTb\endcsname%%
      \put(6279,436){\makebox(0,0){\strut{}\footnotesize 100}}%
      \csname LTb\endcsname%%
      \put(6899,436){\makebox(0,0){\strut{}\footnotesize 150}}%
      \csname LTb\endcsname%%
      \put(7519,436){\makebox(0,0){\strut{}\footnotesize 200}}%
    }%
    \gplgaddtomacro\gplfronttext{%
      \csname LTb\endcsname%%
      \put(7587,2488){\makebox(0,0)[r]{\strut{}\footnotesize $S(\omg)$}}%
      \csname LTb\endcsname%%
      \put(7587,2248){\makebox(0,0)[r]{\strut{}\footnotesize our estimator}}%
      \csname LTb\endcsname%%
      \put(7587,2008){\makebox(0,0)[r]{\strut{}\footnotesize interp. estimator}}%
      \csname LTb\endcsname%%
      \put(6587,76){\makebox(0,0){\strut{}\small $\omg$}}%
    }%
    \gplbacktext
    \put(0,0){\includegraphics[width={453.00bp},height={170.00bp}]{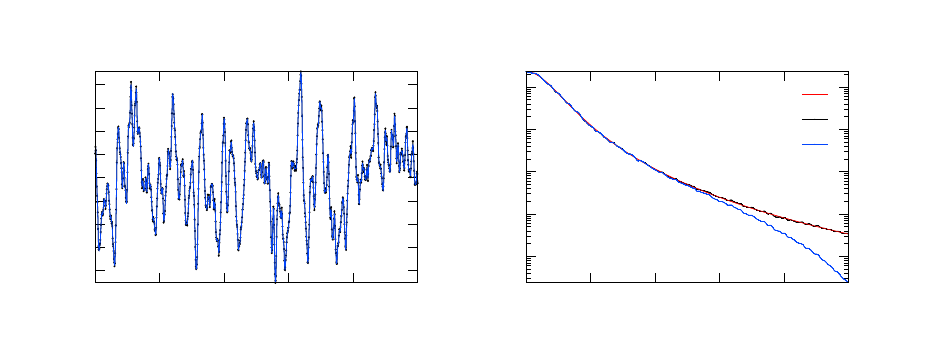}}%
    \gplfronttext
  \end{picture}%
\endgroup
  \caption{An example spectral density estimator on grid-interpolated data with
  $1{,}000$ replicates.  Left panel: raw data (black dots) and the interpolating
  Gaussian kernel-based gridded estimator (blue lines). Right panel: spectral
  density estimators for each case, with the true spectral density shown in
  red.}
  \label{fig:interp}
\end{figure}

Figure \ref{fig:interp} gives a demonstration of this phenomenon. Here, data are
sampled at a lightly jittered grid of $n=1{,}000$ points on $[0,1]$, with the
jitter uniform in the range $[-5\cdot10^{-5}, 5\cdot10^{-5}]$. The true process
simulated at those points is a Mat\'ern process with $\nu = 3/2$, which is once
mean-square differentiable but not exceptionally smooth. The process was
interpolated to a regular grid with the Gaussian kernel $k(x,y) = e^{-|x-y|^2 /
0.001}$ and $10$ nearest neighbors, where the smaller range parameter was
selected in an attempt to make $k(x,y)$ agree as well as possible with the true
kernel around the origin. The left panel of Figure \ref{fig:interp} shows one
replicate of the true data with the interpolated values at regular grid points
overlaid, demonstrating that the disagreement between the two plotted lines is
imperceptible to the eye and that one might very plausibly conclude that the
regridding did not introduce any of the standard artifacts of miscalibrated
interpolation. The right panel, however, shows the spectral density estimator of
the re-gridded measurements next to the direct estimator we propose here
computed directly from the irregularly sampled data. Here the disagreement is
extremely clear: the Gaussian kernel interpolation has hugely impacted the
estimated power at moderate to high frequencies, and the inflection point where
the gridded estimator begins to decay like a Gaussian spectral density, as
opposed to a Mat\'ern spectral density, is clear.

\subsection{Two-dimensional processes}

Moving to two dimensions, we now consider an analogous problem of estimating the
spectral density $S$ for a process $Y(\bx)$ defined on $[0,1]^2$ measured at $n
= 10^6$ random locations $\bx_j \iid \text{Unif}([0,1]^2)$. As mentioned in
Section \ref{sec:method}, no aspect of our method is specific to one dimension,
and our fundamental quadrature-based observation that one can design weights
$\bal$ such that the Fourier sum $H_{\alpha}(\bomg)$ approximates the continuous
Fourier transform of a window function remains valid. Letting $\bm{\Omega} =
(\Omega_1, ..., \Omega_d)$ denote the multivariate analog of $\Omega$, we note
that the maximum controllable frequency in each dimension can grow at most like
$\Omega_j = \bO(n^{1/d})$ when the total number of measurements is $n$, just as
in the gridded data case for a process on a hypercube. As such, even in two
dimensions the frequencies that can be resolved with large amounts of data can
appear relatively small.

\begin{figure}[] 
  \centering 
% GNUPLOT: LaTeX picture with Postscript
\begingroup
  \makeatletter
  \providecommand\color[2][]{%
    \GenericError{(gnuplot) \space\space\space\@spaces}{%
      Package color not loaded in conjunction with
      terminal option `colourtext'%
    }{See the gnuplot documentation for explanation.%
    }{Either use 'blacktext' in gnuplot or load the package
      color.sty in LaTeX.}%
    \renewcommand\color[2][]{}%
  }%
  \providecommand\includegraphics[2][]{%
    \GenericError{(gnuplot) \space\space\space\@spaces}{%
      Package graphicx or graphics not loaded%
    }{See the gnuplot documentation for explanation.%
    }{The gnuplot epslatex terminal needs graphicx.sty or graphics.sty.}%
    \renewcommand\includegraphics[2][]{}%
  }%
  \providecommand\rotatebox[2]{#2}%
  \@ifundefined{ifGPcolor}{%
    \newif\ifGPcolor
    \GPcolortrue
  }{}%
  \@ifundefined{ifGPblacktext}{%
    \newif\ifGPblacktext
    \GPblacktexttrue
  }{}%
  % define a \g@addto@macro without @ in the name:
  \let\gplgaddtomacro\g@addto@macro
  % define empty templates for all commands taking text:
  \gdef\gplbacktext{}%
  \gdef\gplfronttext{}%
  \makeatother
  \ifGPblacktext
    % no textcolor at all
    \def\colorrgb#1{}%
    \def\colorgray#1{}%
  \else
    % gray or color?
    \ifGPcolor
      \def\colorrgb#1{\color[rgb]{#1}}%
      \def\colorgray#1{\color[gray]{#1}}%
      \expandafter\def\csname LTw\endcsname{\color{white}}%
      \expandafter\def\csname LTb\endcsname{\color{black}}%
      \expandafter\def\csname LTa\endcsname{\color{black}}%
      \expandafter\def\csname LT0\endcsname{\color[rgb]{1,0,0}}%
      \expandafter\def\csname LT1\endcsname{\color[rgb]{0,1,0}}%
      \expandafter\def\csname LT2\endcsname{\color[rgb]{0,0,1}}%
      \expandafter\def\csname LT3\endcsname{\color[rgb]{1,0,1}}%
      \expandafter\def\csname LT4\endcsname{\color[rgb]{0,1,1}}%
      \expandafter\def\csname LT5\endcsname{\color[rgb]{1,1,0}}%
      \expandafter\def\csname LT6\endcsname{\color[rgb]{0,0,0}}%
      \expandafter\def\csname LT7\endcsname{\color[rgb]{1,0.3,0}}%
      \expandafter\def\csname LT8\endcsname{\color[rgb]{0.5,0.5,0.5}}%
    \else
      % gray
      \def\colorrgb#1{\color{black}}%
      \def\colorgray#1{\color[gray]{#1}}%
      \expandafter\def\csname LTw\endcsname{\color{white}}%
      \expandafter\def\csname LTb\endcsname{\color{black}}%
      \expandafter\def\csname LTa\endcsname{\color{black}}%
      \expandafter\def\csname LT0\endcsname{\color{black}}%
      \expandafter\def\csname LT1\endcsname{\color{black}}%
      \expandafter\def\csname LT2\endcsname{\color{black}}%
      \expandafter\def\csname LT3\endcsname{\color{black}}%
      \expandafter\def\csname LT4\endcsname{\color{black}}%
      \expandafter\def\csname LT5\endcsname{\color{black}}%
      \expandafter\def\csname LT6\endcsname{\color{black}}%
      \expandafter\def\csname LT7\endcsname{\color{black}}%
      \expandafter\def\csname LT8\endcsname{\color{black}}%
    \fi
  \fi
    \setlength{\unitlength}{0.0500bp}%
    \ifx\gptboxheight\undefined%
      \newlength{\gptboxheight}%
      \newlength{\gptboxwidth}%
      \newsavebox{\gptboxtext}%
    \fi%
    \setlength{\fboxrule}{0.5pt}%
    \setlength{\fboxsep}{1pt}%
    \definecolor{tbcol}{rgb}{1,1,1}%
\begin{picture}(9060.00,3960.00)%
    \gplgaddtomacro\gplbacktext{%
      \csname LTb\endcsname%%
      \put(351,710){\makebox(0,0)[r]{\strut{}\footnotesize 0}}%
      \csname LTb\endcsname%%
      \put(351,1385){\makebox(0,0)[r]{\strut{}\footnotesize 0.33}}%
      \csname LTb\endcsname%%
      \put(351,2061){\makebox(0,0)[r]{\strut{}\footnotesize 0.66}}%
      \csname LTb\endcsname%%
      \put(351,2736){\makebox(0,0)[r]{\strut{}\footnotesize 0.99}}%
      \csname LTb\endcsname%%
      \put(453,469){\makebox(0,0){\strut{}\footnotesize 0}}%
      \csname LTb\endcsname%%
      \put(1148,469){\makebox(0,0){\strut{}\footnotesize 0.33}}%
      \csname LTb\endcsname%%
      \put(1843,469){\makebox(0,0){\strut{}\footnotesize 0.66}}%
      \csname LTb\endcsname%%
      \put(2539,469){\makebox(0,0){\strut{}\footnotesize 0.99}}%
    }%
    \gplgaddtomacro\gplfronttext{%
      \csname LTb\endcsname%%
      \put(1506,157){\makebox(0,0){\strut{}}}%
      \csname LTb\endcsname%%
      \put(1506,2961){\makebox(0,0){\strut{}\footnotesize regridded sample path}}%
    }%
    \gplgaddtomacro\gplbacktext{%
      \csname LTb\endcsname%%
      \put(3364,965){\makebox(0,0)[r]{\strut{}\footnotesize -150}}%
      \csname LTb\endcsname%%
      \put(3364,1733){\makebox(0,0)[r]{\strut{}\footnotesize 0}}%
      \csname LTb\endcsname%%
      \put(3364,2501){\makebox(0,0)[r]{\strut{}\footnotesize 150}}%
      \csname LTb\endcsname%%
      \put(3729,469){\makebox(0,0){\strut{}\footnotesize -150}}%
      \csname LTb\endcsname%%
      \put(4520,469){\makebox(0,0){\strut{}\footnotesize 0}}%
      \csname LTb\endcsname%%
      \put(5310,469){\makebox(0,0){\strut{}\footnotesize 150}}%
    }%
    \gplgaddtomacro\gplfronttext{%
      \csname LTb\endcsname%%
      \put(2967,1733){\rotatebox{-270.00}{\makebox(0,0){\strut{}\footnotesize $\omega_2$}}}%
      \csname LTb\endcsname%%
      \put(4519,110){\makebox(0,0){\strut{}\footnotesize $\omega_1$}}%
      \csname LTb\endcsname%%
      \put(4519,2961){\makebox(0,0){\strut{}\footnotesize true $S(\bomg)$}}%
    }%
    \gplgaddtomacro\gplbacktext{%
      \csname LTb\endcsname%%
      \put(6742,469){\makebox(0,0){\strut{}\footnotesize -150}}%
      \csname LTb\endcsname%%
      \put(7533,469){\makebox(0,0){\strut{}\footnotesize 0}}%
      \csname LTb\endcsname%%
      \put(8324,469){\makebox(0,0){\strut{}\footnotesize 150}}%
    }%
    \gplgaddtomacro\gplfronttext{%
      \csname LTb\endcsname%%
      \put(7533,110){\makebox(0,0){\strut{}\footnotesize $\omega_1$}}%
      \csname LTb\endcsname%%
      \put(8846,855){\makebox(0,0)[l]{\strut{}\footnotesize 1e-14}}%
      \csname LTb\endcsname%%
      \put(8846,1148){\makebox(0,0)[l]{\strut{}\footnotesize 1e-12}}%
      \csname LTb\endcsname%%
      \put(8846,1440){\makebox(0,0)[l]{\strut{}\footnotesize 1e-10}}%
      \csname LTb\endcsname%%
      \put(8846,1733){\makebox(0,0)[l]{\strut{}\footnotesize 1e-08}}%
      \csname LTb\endcsname%%
      \put(8846,2026){\makebox(0,0)[l]{\strut{}\footnotesize 1e-06}}%
      \csname LTb\endcsname%%
      \put(8846,2318){\makebox(0,0)[l]{\strut{}\footnotesize 0.0001}}%
      \csname LTb\endcsname%%
      \put(8846,2611){\makebox(0,0)[l]{\strut{}\footnotesize 0.01}}%
      \csname LTb\endcsname%%
      \put(7533,2961){\makebox(0,0){\strut{}\footnotesize $\E \hat{S}(\bomg)$}}%
    }%
    \gplbacktext
    \put(0,0){\includegraphics[width={453.00bp},height={198.00bp}]{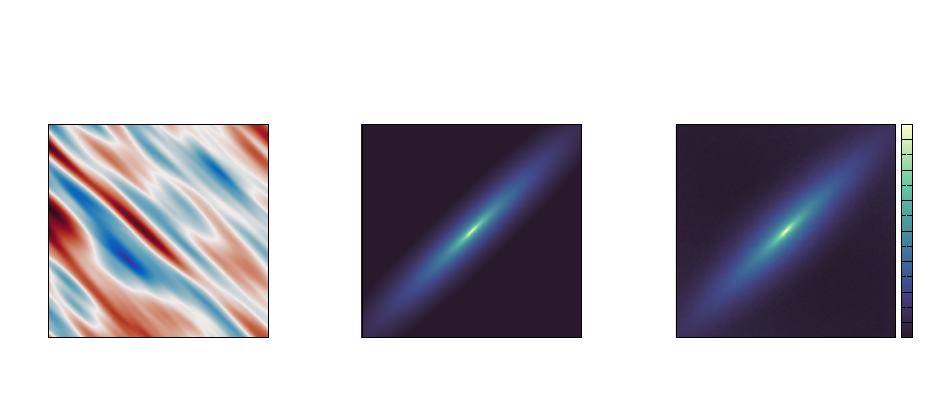}}%
    \gplfronttext
  \end{picture}%
\endgroup
  \caption{A two-dimensional nonparametric spectral density estimator for $n=1$M
  samples from an anisotropic Mat\'ern process on $[0,1]^2$. Left panel: a
  sample path from the process on a regular grid. Center panel: the true
  spectral density $S(\bomg)$. Right panel: the expected value $\E
  \hat{S}(\bomg)$ approximated with $500$ Monte Carlo samples.}
  \label{fig:npest2d} 
\end{figure}

However, we can combat this curse of dimensionality, as our estimator can be
computed rapidly for very large data sets using the accelerated methods
introduced in Section~\ref{sec:method}. Figure \ref{fig:npest2d} gives an
example of the estimator applied to $n=10^6$ samples of an \emph{anisotropic}
(meaning that its spectral density is not radially symmetric) Mat\'ern process
on $[0,1]^2$. Using the sparse preconditioner described in Section
\ref{sec:numerical}, this estimator was computed in approximately $4.5$ minutes
on an AMD EPYC $9554$P processor using $40$ cores. The window function $g$ used
for the estimator is a two-dimensional prolate function where the domain of
support $\mathcal{R}$ for $G$ is a disk of radius $5$, and as the right panel of
the figure demonstrates this estimator armed with such a window $g$ is capable
of capturing $14$ orders of magnitude of variation. For smaller data sizes,
since $\Omega_j$ grows slowly and the radius of the spectral support region
$\mathcal{R}$ is hard to reduce, artifacts can be significant. But as this
numerical experiment demonstrates, spectral density estimation in multiple
dimensions is not necessarily cursed to always have significant finite-sample or
window-based artifacts --- it simply requires much more data than one is
accustomed to needing in one-dimensional settings.

\section{Discussion} \label{sec:discussion}

In this work, we present a framework for nonparametric spectral density
estimation that applies to fully irregularly sampled measurements in one or more
dimensions. The key insight is to implicitly obtain quadrature weights such that
the variance of the weighted nonuniform Fourier sums can be interpreted as
integral transforms of the SDF in a way that mimics the standard univariate
gridded setting. This estimator demonstrates significant practical improvement
over periodogram or Lomb-Scargle least squares-based approaches. However, there
are also limitations to its accuracy due to aliasing-based sources of bias in
the tails of the spectral density, and there is a direct tradeoff between how
far in the tails of an SDF one hopes to estimate and the chance of aliasing
biases dominating the variance of the true signal one is attempting to observe.
Various theoretical and computational aspects of selecting the window $g$ and
scalably computing the corresponding weights $\bal$ remain exciting avenues for
future research.

\ifbool{anon}{}{
  \renewcommand{\abstractname}{Acknowledgments}
  \begin{abstract}
  \noindent The authors would like to thank Rishabh Dudeja for suggesting the
  Jensen inequality mechanism used in the proof of Theorem $2$, and
  Michael O'Neil for helpful conversations throughout the development process.
  \end{abstract}
}

\bibliography{references}{}

\newpage

\appendix

\section{Proofs} \label{app:proofs}
\subsection*{Proof of Theorem~\ref{thm:recovery}}
\begin{proof}
    Let $\{\omega_k\}_{k=1}^n$ be Chebyshev nodes on $[-\Omega,\Omega]$. Then
    the linear system
    \begin{equation}
    H_{\bm{\alpha}}(\omega_k) 
    = \sum_{j=1}^n \alpha_j e^{-2\pi i\omega_k x_j}
    = G(\omega_k), \qquad 
    k = 1,\dots,n
    \end{equation}
    has a unique solution $\bm{\alpha} \in \C^n$ for which the order-$n$
    Chebyshev interpolants of $G$ and the resulting $H_{\bm{\alpha}}$ are
    everywhere equal. $H_{\bal}$ and $G$ are band-limited and can thus be
    analytically extended to entire functions, and therefore their pointwise
    error converges to zero exponentially as
    \begin{equation} \label{eq:chebyshev-convergence}
    \norm[{L^\infty([-\Omega,\Omega])}]{H_{\bm{\alpha}} - G}
    \leq \frac{4M(\rho)\rho^{-n}}{\rho - 1}
    \end{equation}
    for all $\rho > 1$, where $\abs{H_{\bm{\alpha}}(z) - G(z)} \leq M(\rho)$ for
    all $z \in E_\rho$ and $E_\rho := \{(z + z^{-1})/2 : |z| = \rho \}$ is the
    $\rho$-Bernstein ellipse~\cite[Theorem 8.2]{trefethen2019approximation}.
    Defining $F(z) := H_{\bm{\alpha}}(z) - G(z)$, we can compute the explicit
    bound 
    \begin{align*}
    \abs{F(z)} 
    &= \abs{F(p + iq)} \\
    &= \abs{\int_{-\frac{\Omega}{2}(b-a)}^{\frac{\Omega}{2}(b-a)} \left(\sum_{j=1}^n \alpha_j \delta(x-x_j) - g(x)\right) e^{-2\pi i (p + iq)x} \dif{x}} \\
    &\leq \int_{-\frac{\Omega}{2}(b-a)}^{\frac{\Omega}{2}(b-a)} \left(\sum_{j=1}^n \abs{\alpha_j} \delta(x-x_j) + \abs{g(x)}\right) e^{2\pi qx} \dif{x} \\
    &\leq e^{\pi \Omega (b-a) \abs{z}} \left( \norm[{L^1([a,b])}]{g} + \norm[1]{\bm{\alpha}} \right)
    \end{align*}
    The point in $E_\rho$ with largest magnitude is $z = \frac{\rho +
    \rho^{-1}}{2}$. Thus we have
    \begin{equation*}
    \abs{F(z)}
    \leq \exp\left\{\frac{\pi}{2} \Omega (b-a) (\rho + \rho^{-1})\right\} \left( \norm[{L^1([a,b])}]{g} + \norm[1]{\bm{\alpha}} \right) =: M(\rho)
    \end{equation*}
    for all $z \in E_\rho$. The result then follows
    from~\eqref{eq:chebyshev-convergence}.
\end{proof}

\subsection*{Proof of Theorem~\ref{thm:aliasing}}

\begin{lemma} \label{lemma:dist} Let $\set{x_j}_{j=1}^n \iid p$ with $p$
  satisfying (A$2$) and $\bm{\alpha} = [\alpha_j]_{j=1}^n$ be weights satisfying
  (A$5$). Then by a standard abuse of notation
  \begin{equation*} %\label{eq:}
    H_{\alpha}(\omg) = \sum_{j=1}^n e^{2 \pi i \omg x_j} \alpha_j
    \cvd
    \mathcal{C}\Nd\set{
      \mat{G_0 \varphi(\omg) \\ 0},
      \;
      \bS
    },
  \end{equation*}
  where $\mathcal{C}\Nd$ denotes the complex-normal distribution, $\bm{\alpha} =
  \bm{a} + i \bm{b}$, $\inp{\cdot, \cdot}$ denotes the Euclidean inner product,
  and $\bS$ has entries given by
  \begin{align*} %\label{eq:}
    \bS_{11} &= 
    \norm[2]{\bm{a}}^2 \left( \frac{1 + \varphi(2 \omg)}{2} -
    \varphi(\omg)^2 \right)
    +
    \norm[2]{\bm{b}}^2 \frac{1 - \varphi(2 \omg)}{2}
    \\
    \bS_{12} = \bS_{21} &= 
    \inp{\bm{a}, \bm{b}} (\varphi(2 \omg) - \varphi(\omg)^2).
    \\
    \bS_{22} &= 
    \norm[2]{\bm{a}}^2 \frac{1 - \varphi(2 \omg)}{2}
    +
    \norm[2]{\bm{b}}^2 \left(\frac{1 + \varphi(2 \omg)}{2} -
    \varphi(\omg)^2 \right).
  \end{align*}
  Additionally, as $\omg \to \infty$ as well, we have the simplified limiting
  MGF for $\abs{H_{\alpha}(\omg)}^2$ given by
  \begin{equation} \label{eq:H_mgf}
    \E e^{t |H_{\alpha}(\omg)|^2}
    \to_{n, \omg}
    \frac{1}{1 - t \norm[2]{\bal}^2}
  \end{equation}
  for $t$ such that the right-hand side is finite.
\end{lemma}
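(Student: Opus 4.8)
The plan is to prove the two assertions separately: the weak-convergence statement follows from a bivariate central limit theorem applied to the real and imaginary parts of $H_{\bal}(\omg)$, while the simplified moment generating function is obtained by a direct computation once the $\omg\to\infty$ limit has collapsed the covariance to a multiple of the identity.

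Write $\theta_j := 2\pi\omg x_j$ and $\alpha_j = a_j + i b_j$, so that $H_{\bal}(\omg) = U + iW$ with $U = \sum_j U_j$, $U_j = a_j\cos\theta_j - b_j\sin\theta_j$, and $W = \sum_j W_j$, $W_j = a_j\sin\theta_j + b_j\cos\theta_j$; these are sums of independent random variables, each bounded in absolute value by $|\alpha_j|$. Because $p$ is symmetric about the origin (A2), $\E\sin\theta_j = \E\sin 2\theta_j = 0$, $\E\cos\theta_j = \varphi(\omg)$, and $\E\cos 2\theta_j = \varphi(2\omg)$; the first gives $\E U = \varphi(\omg)\sum_j a_j$ and $\E W = \varphi(\omg)\sum_j b_j$, and since by the quadrature property $\sum_j\alpha_j = H_{\bal}(0)\approx G(0) = G_0 \in \R$, the mean vector is $[G_0\varphi(\omg),\,0]^T$ to the accuracy afforded by (A3). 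The half-angle identities $\cos^2 = \tfrac12(1+\cos 2\theta)$, $\sin^2 = \tfrac12(1-\cos 2\theta)$, $\cos\theta\sin\theta = \tfrac12\sin 2\theta$ reduce $\Var U_j$, $\Var W_j$, and $\Cov(U_j,W_j)$ to combinations of $1$, $\varphi(2\omg)$, $\varphi(\omg)^2$ weighted by $a_j^2$, $b_j^2$, $a_j b_j$; summing over $j$ reproduces exactly the stated entries $\bS_{11}$, $\bS_{12}$, $\bS_{22}$. A Lindeberg--Feller central limit theorem for triangular arrays of independent $\R^2$-valued vectors then applies: the centered summands are bounded by $2\shortnorm[\infty]{\bal}$, so the Lindeberg condition holds as soon as $\shortnorm[\infty]{\bal} = o(\shortnorm[2]{\bal})$, a mild nondegeneracy requirement satisfied under (A5) whenever the weights behave like quadrature weights of size $\bO(1/n)$. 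This yields $H_{\bal}(\omg)\cvd\mathcal{C}\Nd([G_0\varphi(\omg),0]^T,\bS)$ in the stated sense.

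For the limiting moment generating function I would avoid routing through convergence in distribution — the double limit $n\to\infty$, $\omg\to\infty$ together with convergence of moments needs care — and instead compute $\E e^{t|H_{\bal}(\omg)|^2}$ directly. Each summand satisfies $U_j^2 + W_j^2 = a_j^2 + b_j^2 = |\alpha_j|^2$ identically (it is $|\alpha_j|$ times a unit vector rotating with $\theta_j$), and since $p$ is absolutely continuous (A1) the Riemann--Lebesgue lemma gives $\varphi(\omg),\varphi(2\omg)\to 0$, hence $\E U_j, \E W_j, \E U_j W_j \to 0$ and $\E U_j^2, \E W_j^2 \to \tfrac12|\alpha_j|^2$. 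Using the linearization $e^{\lambda v^2} = \E_{Z\sim\Nd(0,1)} e^{\sqrt{2\lambda}\,Zv}$ in each coordinate with independent $Z_1,Z_2$ and then Fubini gives
\begin{equation*}
  \E e^{t|H_{\bal}(\omg)|^2}
  = \E_{Z_1,Z_2}\prod_{j=1}^n \E_{x_j}\exp\!\big(\sqrt{2t}\,(Z_1 U_j + Z_2 W_j)\big),
\end{equation*}
and since $|Z_1 U_j + Z_2 W_j|\le|\alpha_j|\sqrt{Z_1^2+Z_2^2}$ with $|\alpha_j|\le\shortnorm[\infty]{\bal}\to 0$, each inner factor admits the expansion $1 + \tfrac{t}{2}(Z_1^2+Z_2^2)|\alpha_j|^2 + o(|\alpha_j|^2)$; the product converges to $\exp\!\big(\tfrac{t}{2}(Z_1^2+Z_2^2)\shortnorm[2]{\bal}^2\big)$ because $\sum_j|\alpha_j|^2 = \shortnorm[2]{\bal}^2$ while $\sum_j|\alpha_j|^4 \le \shortnorm[\infty]{\bal}^2\shortnorm[2]{\bal}^2 = o(\shortnorm[2]{\bal}^4)$ controls the error, and a uniform envelope $e^{\sqrt{2t}\,\shortnorm[1]{\bal}\sqrt{Z_1^2+Z_2^2}}$ justifies passing the limit under $\E_{Z_1,Z_2}$. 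Finally $\E e^{sZ^2/2} = (1-s)^{-1/2}$ gives $\E e^{t|H_{\bal}(\omg)|^2}\to(1 - t\shortnorm[2]{\bal}^2)^{-1}$ for $t\shortnorm[2]{\bal}^2 < 1$, which is \eqref{eq:H_mgf}.

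The routine parts are the trigonometric moment computations and the Taylor bookkeeping in the product expansion. The genuinely delicate points are (i) casting the central limit theorem as a triangular-array statement in which the weights $\bal = \bal(n)$ and possibly $\omg = \omg(n)$ vary with $n$, and anchoring the required asymptotic negligibility of individual summands, $\shortnorm[\infty]{\bal} = o(\shortnorm[2]{\bal})$, to assumption (A5); and (ii) controlling the order of the $n\to\infty$ and $\omg\to\infty$ limits, which is precisely what makes the direct moment-generating-function computation — rather than a weak-convergence-plus-uniform-integrability argument — the safer route.
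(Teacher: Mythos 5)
Your proof of the weak-convergence claim is essentially the paper's: the same decomposition into real and imaginary parts, the same half-angle moment computations yielding $\bS$, and a CLT for independent non-identically-distributed summands (you invoke Lindeberg--Feller for triangular arrays where the paper cites Lyapunov; your framing is slightly more careful about the $n$-dependence of $\bal$, and you correctly flag that the negligibility condition $\shortnorm[\infty]{\bal} = o(\shortnorm[2]{\bal})$ is not literally implied by (A$5$) --- a gap the paper shares). Where you genuinely diverge is the MGF limit \eqref{eq:H_mgf}. The paper upgrades convergence in law to convergence of $\E e^{t|H_\alpha(\omg)|^2}$ via uniform integrability (using the uniform bound $|H_\alpha(\omg)| \leq \shortnorm[1]{\bal} = \bO(1)$ from (A$5$)) and then evaluates the limit with the closed-form MGF $\abs{\I - 2t\bm{C}}^{-1/2}$ of a Gaussian quadratic form, with $\bm{C} \to \tfrac{1}{2}\shortnorm[2]{\bal}^2 \I$ as $\varphi \to 0$. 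You instead compute the MGF directly by Gaussian linearization of $e^{tU^2}e^{tW^2}$, Fubini, and a product expansion over the independent summands, exploiting the identity $U_j^2 + W_j^2 = |\alpha_j|^2$. Both are correct; the paper's route is shorter because it reuses the CLT already proved, while yours sidesteps the interchange of the $n \to \infty$ and $\omg \to \infty$ limits and the law-to-moments transfer entirely, at the cost of more Taylor bookkeeping and an explicit reliance on $\shortnorm[\infty]{\bal} \to 0$. Your dominating envelope $e^{\sqrt{2t}\shortnorm[1]{\bal}\sqrt{Z_1^2+Z_2^2}}$ does the work that uniform integrability does in the paper, so the two arguments ultimately lean on the same feature of (A$5$).
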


\begin{proof}
  First, we break $H_{\alpha}(\omg)$ into its real and imaginary parts
  \begin{align*} %\label{eq:}
    H_{\alpha}^r(\omg) &= 
    \sum_{j=1}^n \cos(2 \pi \omg x_j) a_j - \sin(2 \pi \omg x_j) b_j
    \\
    H_{\alpha}^i(\omg) &= 
    \sum_{j=1}^n \sin(2 \pi \omg x_j) a_j + \cos(2 \pi \omg x_j) b_j.
  \end{align*}
  For the first moments, we simply note that $\varphi(\omg)$ is real-valued by
  (A$2$) and we have that $\E H_{\alpha}^r(\omg) = G_0 \varphi(\omg)$ and $\E
  H_{\alpha}^i(\omg) = 0$. For the second moments, the fundamental computations
  required are $\V \cos(2 \pi \omg x_j)$, $\V \sin(2 \pi \omg x_j)$, and
  $\text{Cov}(\cos(2 \pi \omg x_j), \sin(2 \pi \omg x_j))$. In all cases, these
  may be expressed using standard product-to-sum trigonometric formulae. As an
  example, we note that 
  \begin{align*} %\label{eq:}
    \V \cos(2 \pi \omg x_j) &= \E \cos(2 \pi \omg x_j)^2 - (\E \cos(2 \pi \omg
    x_j))^2
    \\
    &= \E\set{
      \frac{1}{2}[1 + \cos(2 \pi (2 \omg) x_j)]
    }
    -
    \varphi(\omg)^2
    \\
    &= \frac{1}{2}(1 + \varphi(2 \omg)) - \varphi(\omg)^2.
  \end{align*}
  By similar arguments, we see that $\V \sin(2 \pi \omg x_j) = \frac{1}{2}(1 -
  \phi(2 \omg))$ and that the covariance of the cross term is $\text{Cov}(\cos(2
  \pi \omg x_j), \sin(2 \pi \omg x_j)) = 0$. With this established, we first
  compute the marginal variance of $H_{\alpha}^r(\omg)$ as
  \begin{align*} %\label{eq:}
    \V H_{\alpha}^r(\omg)
    &= \V\set{\sum_{j=1}^n \cos(2 \pi \omg x_j) a_j}
    +
    \V\set{\sum_{j=1}^n \sin(2 \pi \omg x_j) b_j}
    \\
    &= 
    \norm[2]{\bm{a}}^2 \left( \frac{1 + \varphi(2 \omg)}{2} -
    \varphi(\omg)^2 \right)
    +
    \norm[2]{\bm{b}}^2 \frac{1 - \varphi(2 \omg)}{2}
  \end{align*}
  where the cross term is zero by above and the independence of the $\set{x_j}$
  is used to bring the variance inside the sums. Similarly, we have that the
  marginal variance for $H_{\alpha}^i(\omg)$ is given as
  \begin{equation*} %\label{eq:}
    \V H_{\alpha}^i(\omg)
    =
    \norm[2]{\bm{a}}^2 \frac{1 - \varphi(2 \omg)}{2}
    +
    \norm[2]{\bm{b}}^2 \left(\frac{1 + \varphi(2 \omg)}{2} -
    \varphi(\omg)^2 \right).
  \end{equation*}
  Finally, by similar or straightforward definitional computations, we see that
  \begin{align*} %\label{eq:}
    \text{Cov}(H_{\alpha}^r(\omg), H_{\alpha}^i(\omg))
    = 
    \inp{\bm{a}, \bm{b}} (\varphi(2 \omg) - \varphi(\omg)^2).
  \end{align*}
  With the first and second moments computed, the first claim follows from the
  Lyapunov central limit theorem.

  For the second claim, we note that for each $\omg$ and $n$,
  $|H_{\alpha}(\omg)| \leq \norm[1]{\bm{\alpha}} = \bO(1)$ by (A$5$), so that
  the doubly-indexed collection of random variables is uniformly bounded and
  thus asymptotically uniformly integrable, and the same is true for the family
  of random variables $e^{t |H_{\alpha}(\omg)|^2}$. By standard results on
  uniform integrability (for example \cite[Theorem 2.20]{van2000}), we have that
  $e^{t |H_{\alpha}(\omg)|^2}$ converges in expectation as well as in law. For a
  random variable $\bm{s} \sim \Nd\set{\bm{0}, \bm{C}}$, the moment generating
  function of $\bm{s}^T \bm{s}$ is given by
  \begin{equation} \label{eq:mvn_qform_mgf}
    \E e^{t \bm{s}^T \bm{s}} = \abs{\I - 2t \bm{C}}^{-1/2},
  \end{equation}
  and applying in this case to the asymptotic variance $\bm{C} = \bS$ with $\omg
  \to 0$ reduces $\bm{C}$ to $\norm[2]{\bal}^2 \I$. Plugging this reduction in
  (\ref{eq:mvn_qform_mgf}) gives the desired result.
\end{proof} 

\begin{proof}[Proof of Theorem~\ref{thm:aliasing}]
  Letting $q(x)$ be a valid density on $E_{\Omega, \xi}$, Jensen's inequality
  gives that
  \begin{align*} %\label{eq:}
    \E e^{t \eps(\xi)} 
    \leq 
    \int_{E_{\Omega, \xi}}
    \E e^{t \frac{S(\omg)}{q(\omg)} \abs{H_{\alpha}(\omg - \xi)}^2 }
    q(\omg) \dif \omg.
  \end{align*}
  Since $S(\omg)$ is a spectral density, we note that $q(\omg) = C S(\omg)$ is a
  valid probability density for $C = \left(\int_{E_{\Omega, \xi}} S(\omg) \dif
  \omg \right)^{-1}$. Substituting this choice of $q$ gives
  \begin{align*} %\label{eq:}
    \E e^{t \eps(\xi)} 
    &\leq 
    C
    \int_{E_{\Omega, \xi}} 
    M_{|H_{\alpha}(\omg - \xi)|^2}(t/C) S(\omg) \dif \omg \\
    &\approx
    \frac{C}{1 - \tfrac{t}{C} \norm[2]{\bm{\alpha}}^2}
    \int_{E_{\Omega, \xi}} 
    S(\omg) \dif \omg \\
    &=
    \frac{1}{1 - \tfrac{t}{C} \norm[2]{\bm{\alpha}}^2}.
  \end{align*}
  Where we have used Lemma~\ref{lemma:dist} regarding the moment generating
  function $M_{|H_{\alpha}(\omg - \xi)|^2}$. With this bound on the MGF
  established, by a Chernoff bound with $t = \frac{\delta C}{\norm[2]{\bal}^2}$,
  $\delta \in [0,1)$, we see that
  \begin{align*} %\label{eq:}
   P\set{\eps(\xi) \geq \beta S(\xi)}
   \leq
   \frac{
      \exp\set{-t \beta S(\xi)}
    }{
      1 - \tfrac{t}{C} \norm[2]{\bal}^2
    }
    =
    (1 - \delta)^{-1} \exp\set{
    -
      \frac{
        \delta \beta S(\xi)
      }{
        \norm[2]{\bal}^2 \int_{E_{\Omega, \xi}} S(\omg) \dif \omg
      }
    }.
  \end{align*}
  Picking the specific case of $\delta=1/2$ completes the proof.
\end{proof}

\subsection*{Proof of Corollary~\ref{cor:rate}}
\begin{proof}
  By assumption (A$5$), $\sup_n \norm[1]{\bal(n)} < \infty$, and so $\sup_n
  \norm[2]{\bal(n)} = C_{\alpha} < \infty$ gives a uniform bound on the weight
  norms as $n$ grows. Using the survival function form of the expectation and a
  slight variation on the stated form of Theorem \ref{thm:aliasing}, one has
  that
  \begin{align*} %\label{eq:}
    \E \eps(\xi_n) 
    =
    \int_{0}^{\infty} P \set{\eps(\xi_n) > t} \dif t
    \leq
    \int_{0}^{\infty} \exp \set{\frac{-t}{2 C_{\alpha}
    \int_{E_{\Omega_n, \xi_n}} S(\omg) \dif \omg}} \dif t.
  \end{align*}
  By the assumption that $n$ is sufficiently large that $S(\omg) \approx C_0
  |\omg|^{-2 \nu - 1}$ on the super-Nyquist region $E_{\Omega_n, \xi_n}$ and
  that $\xi_n \geq 0$, one has that
  \begin{equation*} %\label{eq:}
    \int_{E_{\Omega_n, \xi_n}} S(\omg) \dif \omg
    \leq
    2 C_0 \int_{-\infty}^{-\Omega_n + \xi_n} |\omg|^{-2 \nu - 1} \dif \omg
    =
    2 C_0 \frac{ (\Omega_n - \xi_n)^{-2 \nu} }{ 2 \nu }.
  \end{equation*}
  Subtituting this into the previous bound and applying the assumed forms for
  $\Omega_n$ and $\xi_n$ gives that 
  \begin{equation*} %\label{eq:}
    \E \eps(\xi_n) 
    \leq
    \int_{0}^{\infty} \exp \set{\frac{-t \nu}{
    2 C_{\alpha} C_0 (\Omega_n - \xi_n)^{-2 \nu} 
    }} \dif t
    =
    \frac{2 C_{\alpha} C_0 (\Omega_n - \xi_n)^{-2 \nu}}{\nu},
  \end{equation*}
  as postulated.
\end{proof}

\section{Appendix 3: Additional numerical experiments}
\subsection*{Aliasing bias and the norm of the weights}

As a second demonstration of the role of $\norm[2]{\bal}$ in controlling
aliasing bias, we revisit the setting of Figure~\ref{fig:gap_prolate}, this time
picking sampling locations $\{x_j\}_{j=1}^n \iid \text{Unif}([-1, 1] \setminus
[-\tau, 0])$, simulating a Mat\'ern process with spectral density
$
  S(\omg) = C(\nu, \rho) \sgm^2 (2 \nu/\rho^2 + 4 \pi^2 \omg^2)^{-\nu - 1/2}
$
\cite{stein1999} and parameters $(\sgm, \rho, \nu) = (1, 0.1, 0.75)$, and
attempting to estimate $S$. This process exhibits strong dependence and slow
spectral decay, and represents a setting where aliasing bias is a particularly
serious concern. Figure \ref{fig:norm_est} shows the result of choosing $g$ to
be either a standard Kaiser window supported on $[-1,1]$ or a prolate function
supported on the union of the two disjoint sampling intervals for several values
of $\tau$. As the figure demonstrates, for small enough $\tau$ the two
estimators behave reasonably similarly, but as the gap $\tau$ increases and the
norm $\norm[2]{\bal}$ increases the estimates $\hat{S}(\xi)$ eventually are
uniformly ruined by the size of the aliasing errors. We emphasize, however, that
even with the moderate gap the Kaiser-based $G$ \emph{is} recovered reasonably
well by the weights $\bal$ as demonstrated in Figure~\ref{fig:gap_prolate}. So
the failure here is not that there is no $\bal$ such that $\Frt \bal \approx
\bm{b}$, but rather that the $\bal$ which provides the best approximate solution
to that system has large norm.

\begin{figure}[!ht]
  \centering
% GNUPLOT: LaTeX picture with Postscript
\begingroup
  \makeatletter
  \providecommand\color[2][]{%
    \GenericError{(gnuplot) \space\space\space\@spaces}{%
      Package color not loaded in conjunction with
      terminal option `colourtext'%
    }{See the gnuplot documentation for explanation.%
    }{Either use 'blacktext' in gnuplot or load the package
      color.sty in LaTeX.}%
    \renewcommand\color[2][]{}%
  }%
  \providecommand\includegraphics[2][]{%
    \GenericError{(gnuplot) \space\space\space\@spaces}{%
      Package graphicx or graphics not loaded%
    }{See the gnuplot documentation for explanation.%
    }{The gnuplot epslatex terminal needs graphicx.sty or graphics.sty.}%
    \renewcommand\includegraphics[2][]{}%
  }%
  \providecommand\rotatebox[2]{#2}%
  \@ifundefined{ifGPcolor}{%
    \newif\ifGPcolor
    \GPcolortrue
  }{}%
  \@ifundefined{ifGPblacktext}{%
    \newif\ifGPblacktext
    \GPblacktexttrue
  }{}%
  % define a \g@addto@macro without @ in the name:
  \let\gplgaddtomacro\g@addto@macro
  % define empty templates for all commands taking text:
  \gdef\gplbacktext{}%
  \gdef\gplfronttext{}%
  \makeatother
  \ifGPblacktext
    % no textcolor at all
    \def\colorrgb#1{}%
    \def\colorgray#1{}%
  \else
    % gray or color?
    \ifGPcolor
      \def\colorrgb#1{\color[rgb]{#1}}%
      \def\colorgray#1{\color[gray]{#1}}%
      \expandafter\def\csname LTw\endcsname{\color{white}}%
      \expandafter\def\csname LTb\endcsname{\color{black}}%
      \expandafter\def\csname LTa\endcsname{\color{black}}%
      \expandafter\def\csname LT0\endcsname{\color[rgb]{1,0,0}}%
      \expandafter\def\csname LT1\endcsname{\color[rgb]{0,1,0}}%
      \expandafter\def\csname LT2\endcsname{\color[rgb]{0,0,1}}%
      \expandafter\def\csname LT3\endcsname{\color[rgb]{1,0,1}}%
      \expandafter\def\csname LT4\endcsname{\color[rgb]{0,1,1}}%
      \expandafter\def\csname LT5\endcsname{\color[rgb]{1,1,0}}%
      \expandafter\def\csname LT6\endcsname{\color[rgb]{0,0,0}}%
      \expandafter\def\csname LT7\endcsname{\color[rgb]{1,0.3,0}}%
      \expandafter\def\csname LT8\endcsname{\color[rgb]{0.5,0.5,0.5}}%
    \else
      % gray
      \def\colorrgb#1{\color{black}}%
      \def\colorgray#1{\color[gray]{#1}}%
      \expandafter\def\csname LTw\endcsname{\color{white}}%
      \expandafter\def\csname LTb\endcsname{\color{black}}%
      \expandafter\def\csname LTa\endcsname{\color{black}}%
      \expandafter\def\csname LT0\endcsname{\color{black}}%
      \expandafter\def\csname LT1\endcsname{\color{black}}%
      \expandafter\def\csname LT2\endcsname{\color{black}}%
      \expandafter\def\csname LT3\endcsname{\color{black}}%
      \expandafter\def\csname LT4\endcsname{\color{black}}%
      \expandafter\def\csname LT5\endcsname{\color{black}}%
      \expandafter\def\csname LT6\endcsname{\color{black}}%
      \expandafter\def\csname LT7\endcsname{\color{black}}%
      \expandafter\def\csname LT8\endcsname{\color{black}}%
    \fi
  \fi
    \setlength{\unitlength}{0.0500bp}%
    \ifx\gptboxheight\undefined%
      \newlength{\gptboxheight}%
      \newlength{\gptboxwidth}%
      \newsavebox{\gptboxtext}%
    \fi%
    \setlength{\fboxrule}{0.5pt}%
    \setlength{\fboxsep}{1pt}%
    \definecolor{tbcol}{rgb}{1,1,1}%
\begin{picture}(9060.00,3400.00)%
    \gplgaddtomacro\gplbacktext{%
      \csname LTb\endcsname%%
      \put(1255,998){\makebox(0,0)[r]{\strut{}\footnotesize $10^{-1}$}}%
      \csname LTb\endcsname%%
      \put(1255,1459){\makebox(0,0)[r]{\strut{}\footnotesize $10^{0}$}}%
      \csname LTb\endcsname%%
      \put(1255,1920){\makebox(0,0)[r]{\strut{}\footnotesize $10^{1}$}}%
      \csname LTb\endcsname%%
      \put(1255,2381){\makebox(0,0)[r]{\strut{}\footnotesize $10^{2}$}}%
      \csname LTb\endcsname%%
      \put(1743,436){\makebox(0,0){\strut{}\footnotesize 0}}%
      \csname LTb\endcsname%%
      \put(2130,436){\makebox(0,0){\strut{}\footnotesize 0.01}}%
      \csname LTb\endcsname%%
      \put(2712,436){\makebox(0,0){\strut{}\footnotesize 0.025}}%
      \csname LTb\endcsname%%
      \put(3680,436){\makebox(0,0){\strut{}\footnotesize 0.05}}%
    }%
    \gplgaddtomacro\gplfronttext{%
      \csname LTb\endcsname%%
      \put(2607,2340){\makebox(0,0)[r]{\strut{}\footnotesize Kaiser}}%
      \csname LTb\endcsname%%
      \put(2607,2100){\makebox(0,0)[r]{\strut{}\footnotesize prolate}}%
      \csname LTb\endcsname%%
      \put(2711,76){\makebox(0,0){\strut{}\small $\tau$}}%
      \csname LTb\endcsname%%
      \put(2711,3063){\makebox(0,0){\strut{}\small $\norm[2]{\bal}$}}%
    }%
    \gplgaddtomacro\gplbacktext{%
      \csname LTb\endcsname%%
      \put(5323,676){\makebox(0,0)[r]{\strut{}\footnotesize $10^{-5}$}}%
      \csname LTb\endcsname%%
      \put(5323,901){\makebox(0,0)[r]{\strut{}\footnotesize $10^{-4}$}}%
      \csname LTb\endcsname%%
      \put(5323,1126){\makebox(0,0)[r]{\strut{}\footnotesize $10^{-3}$}}%
      \csname LTb\endcsname%%
      \put(5323,1352){\makebox(0,0)[r]{\strut{}\footnotesize $10^{-2}$}}%
      \csname LTb\endcsname%%
      \put(5323,1577){\makebox(0,0)[r]{\strut{}\footnotesize $10^{-1}$}}%
      \csname LTb\endcsname%%
      \put(5323,1802){\makebox(0,0)[r]{\strut{}\footnotesize $10^{0}$}}%
      \csname LTb\endcsname%%
      \put(5323,2027){\makebox(0,0)[r]{\strut{}\footnotesize $10^{1}$}}%
      \csname LTb\endcsname%%
      \put(5323,2253){\makebox(0,0)[r]{\strut{}\footnotesize $10^{2}$}}%
      \csname LTb\endcsname%%
      \put(5323,2478){\makebox(0,0)[r]{\strut{}\footnotesize $10^{3}$}}%
      \csname LTb\endcsname%%
      \put(5323,2703){\makebox(0,0)[r]{\strut{}\footnotesize $10^{4}$}}%
      \csname LTb\endcsname%%
      \put(5424,436){\makebox(0,0){\strut{}\footnotesize 0}}%
      \csname LTb\endcsname%%
      \put(5966,436){\makebox(0,0){\strut{}\footnotesize 10}}%
      \csname LTb\endcsname%%
      \put(6508,436){\makebox(0,0){\strut{}\footnotesize 20}}%
      \csname LTb\endcsname%%
      \put(7051,436){\makebox(0,0){\strut{}\footnotesize 30}}%
      \csname LTb\endcsname%%
      \put(7593,436){\makebox(0,0){\strut{}\footnotesize 40}}%
      \csname LTb\endcsname%%
      \put(8135,436){\makebox(0,0){\strut{}\footnotesize 50}}%
    }%
    \gplgaddtomacro\gplfronttext{%
      \csname LTb\endcsname%%
      \put(8710,2524){\makebox(0,0)[r]{\strut{}\footnotesize $0.0$}}%
      \csname LTb\endcsname%%
      \put(8710,2284){\makebox(0,0)[r]{\strut{}\footnotesize $0.01$}}%
      \csname LTb\endcsname%%
      \put(8710,2044){\makebox(0,0)[r]{\strut{}\footnotesize $0.025$}}%
      \csname LTb\endcsname%%
      \put(8710,1805){\makebox(0,0)[r]{\strut{}\footnotesize $0.05$}}%
      \csname LTb\endcsname%%
      \put(8710,1565){\makebox(0,0)[r]{\strut{}\footnotesize $S(\omg)$}}%
      \csname LTb\endcsname%%
      \put(6779,76){\makebox(0,0){\strut{}\small $\omg$}}%
      \csname LTb\endcsname%%
      \put(6779,3063){\makebox(0,0){\strut{}\small $\hat{S}(\omg)$}}%
    }%
    \gplbacktext
    \put(0,0){\includegraphics[width={453.00bp},height={170.00bp}]{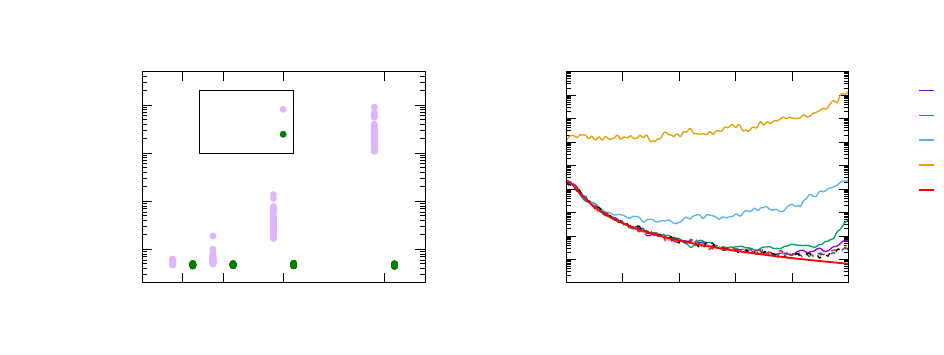}}%
    \gplfronttext
  \end{picture}%
\endgroup
  \caption{Expected values for spectral density estimators computed from
  $n=1,000$ samples of a Mat\'ern process observed $\set{x_j}_{j=1}^n \iid
  \text{Unif}([-1, 1] \setminus [-\tau, 0))$. Left panel: the norm of weights
  $\bal$ with growing $\tau$ (x-axis) for $g$ chosen to be a Kaiser on $[-1, 1]$
  or a prolate on $[-1, 1] \setminus [-\tau, 0)$. Right panel: the expected
  $\hat{S}$ in each case, with colors representing values of $\tau$, solid lines
  giving the estimator using the Kaiser-computed $\bal$, and dotted lines using
  the prolate-computed $\bal$. The true spectral density $S$ is shown in red.
  All weights are computed to resolve up to $\Omega = 50$.}
  \label{fig:norm_est}
\end{figure}

\subsection*{Window reconstuction error}

Another way to understand which window functions $g$ can be well-approximated
using weights $\bal$ with small norm is to more closely inspect the implications
of Theorem~\ref{thm:recovery}. Figure~\ref{fig:thm1} shows the $L^\infty$ errors
in window recovery of a Kaiser function for sampling schemes with and without a
gap. From the left panel, we note that that in the case where the sampling
regime has a gap we fail to recover an $H_{\bal}$ which approximates $G$ well.
However, the right panel illustrates that this failure is due to the limitations
of finite-precision arithmetic and not the mathematical properties of $G$; as
$\norm[2]{\bal} \approx 10^{11}$ for the gapped sampling scheme, the resulting
residuals may be as large as $\norm[\infty]{\Frt \bal - \bm{b}} = 10^{11} \cdot
\epsilon_{\text{mach}} \approx 10^{-5}$ due to catastrophic cancellation when
computing $\Frt \bal$. Nevertheless, the right panel confirms the claims of
Theorem~\ref{thm:recovery} that the residual is small relative to
$\norm[1]{\bal}$ for sufficiently large $n$. This further highlights the
relevance of carefully selecting a window function $g$ such that good
\emph{absolute} error is achieved, which is closely connected to the weights
$\bal$ having a favorably small norm.

\begin{figure}[!t]
  \centering
% GNUPLOT: LaTeX picture with Postscript
\begingroup
  \makeatletter
  \providecommand\color[2][]{%
    \GenericError{(gnuplot) \space\space\space\@spaces}{%
      Package color not loaded in conjunction with
      terminal option `colourtext'%
    }{See the gnuplot documentation for explanation.%
    }{Either use 'blacktext' in gnuplot or load the package
      color.sty in LaTeX.}%
    \renewcommand\color[2][]{}%
  }%
  \providecommand\includegraphics[2][]{%
    \GenericError{(gnuplot) \space\space\space\@spaces}{%
      Package graphicx or graphics not loaded%
    }{See the gnuplot documentation for explanation.%
    }{The gnuplot epslatex terminal needs graphicx.sty or graphics.sty.}%
    \renewcommand\includegraphics[2][]{}%
  }%
  \providecommand\rotatebox[2]{#2}%
  \@ifundefined{ifGPcolor}{%
    \newif\ifGPcolor
    \GPcolortrue
  }{}%
  \@ifundefined{ifGPblacktext}{%
    \newif\ifGPblacktext
    \GPblacktexttrue
  }{}%
  % define a \g@addto@macro without @ in the name:
  \let\gplgaddtomacro\g@addto@macro
  % define empty templates for all commands taking text:
  \gdef\gplbacktext{}%
  \gdef\gplfronttext{}%
  \makeatother
  \ifGPblacktext
    % no textcolor at all
    \def\colorrgb#1{}%
    \def\colorgray#1{}%
  \else
    % gray or color?
    \ifGPcolor
      \def\colorrgb#1{\color[rgb]{#1}}%
      \def\colorgray#1{\color[gray]{#1}}%
      \expandafter\def\csname LTw\endcsname{\color{white}}%
      \expandafter\def\csname LTb\endcsname{\color{black}}%
      \expandafter\def\csname LTa\endcsname{\color{black}}%
      \expandafter\def\csname LT0\endcsname{\color[rgb]{1,0,0}}%
      \expandafter\def\csname LT1\endcsname{\color[rgb]{0,1,0}}%
      \expandafter\def\csname LT2\endcsname{\color[rgb]{0,0,1}}%
      \expandafter\def\csname LT3\endcsname{\color[rgb]{1,0,1}}%
      \expandafter\def\csname LT4\endcsname{\color[rgb]{0,1,1}}%
      \expandafter\def\csname LT5\endcsname{\color[rgb]{1,1,0}}%
      \expandafter\def\csname LT6\endcsname{\color[rgb]{0,0,0}}%
      \expandafter\def\csname LT7\endcsname{\color[rgb]{1,0.3,0}}%
      \expandafter\def\csname LT8\endcsname{\color[rgb]{0.5,0.5,0.5}}%
    \else
      % gray
      \def\colorrgb#1{\color{black}}%
      \def\colorgray#1{\color[gray]{#1}}%
      \expandafter\def\csname LTw\endcsname{\color{white}}%
      \expandafter\def\csname LTb\endcsname{\color{black}}%
      \expandafter\def\csname LTa\endcsname{\color{black}}%
      \expandafter\def\csname LT0\endcsname{\color{black}}%
      \expandafter\def\csname LT1\endcsname{\color{black}}%
      \expandafter\def\csname LT2\endcsname{\color{black}}%
      \expandafter\def\csname LT3\endcsname{\color{black}}%
      \expandafter\def\csname LT4\endcsname{\color{black}}%
      \expandafter\def\csname LT5\endcsname{\color{black}}%
      \expandafter\def\csname LT6\endcsname{\color{black}}%
      \expandafter\def\csname LT7\endcsname{\color{black}}%
      \expandafter\def\csname LT8\endcsname{\color{black}}%
    \fi
  \fi
    \setlength{\unitlength}{0.0500bp}%
    \ifx\gptboxheight\undefined%
      \newlength{\gptboxheight}%
      \newlength{\gptboxwidth}%
      \newsavebox{\gptboxtext}%
    \fi%
    \setlength{\fboxrule}{0.5pt}%
    \setlength{\fboxsep}{1pt}%
    \definecolor{tbcol}{rgb}{1,1,1}%
\begin{picture}(9060.00,3400.00)%
    \gplgaddtomacro\gplbacktext{%
      \csname LTb\endcsname%%
      \put(1255,676){\makebox(0,0)[r]{\strut{}\footnotesize $10^{-15}$}}%
      \csname LTb\endcsname%%
      \put(1255,1014){\makebox(0,0)[r]{\strut{}\footnotesize $10^{-10}$}}%
      \csname LTb\endcsname%%
      \put(1255,1352){\makebox(0,0)[r]{\strut{}\footnotesize $10^{-5}$}}%
      \csname LTb\endcsname%%
      \put(1255,1690){\makebox(0,0)[r]{\strut{}\footnotesize $10^{0}$}}%
      \csname LTb\endcsname%%
      \put(1255,2027){\makebox(0,0)[r]{\strut{}\footnotesize $10^{5}$}}%
      \csname LTb\endcsname%%
      \put(1255,2365){\makebox(0,0)[r]{\strut{}\footnotesize $10^{10}$}}%
      \csname LTb\endcsname%%
      \put(1255,2703){\makebox(0,0)[r]{\strut{}\footnotesize $10^{15}$}}%
      \csname LTb\endcsname%%
      \put(1818,436){\makebox(0,0){\strut{}\footnotesize 100}}%
      \csname LTb\endcsname%%
      \put(2291,436){\makebox(0,0){\strut{}\footnotesize 200}}%
      \csname LTb\endcsname%%
      \put(2763,436){\makebox(0,0){\strut{}\footnotesize 300}}%
      \csname LTb\endcsname%%
      \put(3236,436){\makebox(0,0){\strut{}\footnotesize 400}}%
      \csname LTb\endcsname%%
      \put(3708,436){\makebox(0,0){\strut{}\footnotesize 500}}%
      \csname LTb\endcsname%%
      \put(4180,436){\makebox(0,0){\strut{}\footnotesize 600}}%
    }%
    \gplgaddtomacro\gplfronttext{%
      \csname LTb\endcsname%%
      \put(2768,76){\makebox(0,0){\strut{}\small $n$}}%
      \csname LTb\endcsname%%
      \put(2768,3063){\makebox(0,0){\strut{}\small $\norm[{L^{\infty}([-\Omega, \Omega])}]{H_{\alpha} - G}$}}%
    }%
    \gplgaddtomacro\gplbacktext{%
      \csname LTb\endcsname%%
      \put(5210,676){\makebox(0,0)[r]{\strut{}\footnotesize $10^{-16}$}}%
      \csname LTb\endcsname%%
      \put(5210,929){\makebox(0,0)[r]{\strut{}\footnotesize $10^{-14}$}}%
      \csname LTb\endcsname%%
      \put(5210,1183){\makebox(0,0)[r]{\strut{}\footnotesize $10^{-12}$}}%
      \csname LTb\endcsname%%
      \put(5210,1436){\makebox(0,0)[r]{\strut{}\footnotesize $10^{-10}$}}%
      \csname LTb\endcsname%%
      \put(5210,1690){\makebox(0,0)[r]{\strut{}\footnotesize $10^{-8}$}}%
      \csname LTb\endcsname%%
      \put(5210,1943){\makebox(0,0)[r]{\strut{}\footnotesize $10^{-6}$}}%
      \csname LTb\endcsname%%
      \put(5210,2196){\makebox(0,0)[r]{\strut{}\footnotesize $10^{-4}$}}%
      \csname LTb\endcsname%%
      \put(5210,2450){\makebox(0,0)[r]{\strut{}\footnotesize $10^{-2}$}}%
      \csname LTb\endcsname%%
      \put(5210,2703){\makebox(0,0)[r]{\strut{}\footnotesize $10^{0}$}}%
      \csname LTb\endcsname%%
      \put(5773,436){\makebox(0,0){\strut{}\footnotesize 100}}%
      \csname LTb\endcsname%%
      \put(6246,436){\makebox(0,0){\strut{}\footnotesize 200}}%
      \csname LTb\endcsname%%
      \put(6718,436){\makebox(0,0){\strut{}\footnotesize 300}}%
      \csname LTb\endcsname%%
      \put(7191,436){\makebox(0,0){\strut{}\footnotesize 400}}%
      \csname LTb\endcsname%%
      \put(7663,436){\makebox(0,0){\strut{}\footnotesize 500}}%
      \csname LTb\endcsname%%
      \put(8135,436){\makebox(0,0){\strut{}\footnotesize 600}}%
    }%
    \gplgaddtomacro\gplfronttext{%
      \csname LTb\endcsname%%
      \put(6217,1131){\makebox(0,0)[r]{\strut{}\footnotesize no gap}}%
      \csname LTb\endcsname%%
      \put(6217,891){\makebox(0,0)[r]{\strut{}\footnotesize gap}}%
      \csname LTb\endcsname%%
      \put(6723,76){\makebox(0,0){\strut{}\small $n$}}%
      \csname LTb\endcsname%%
      \put(6723,3063){\makebox(0,0){\strut{}\small $\frac{\norm[{L^{\infty}([-\Omega, \Omega])}]{H_{\alpha} - G}}{\norm[{L^1([-1,1])}]{g} + \norm[1]{\bal}}$}}%
    }%
    \gplbacktext
    \put(0,0){\includegraphics[width={453.00bp},height={170.00bp}]{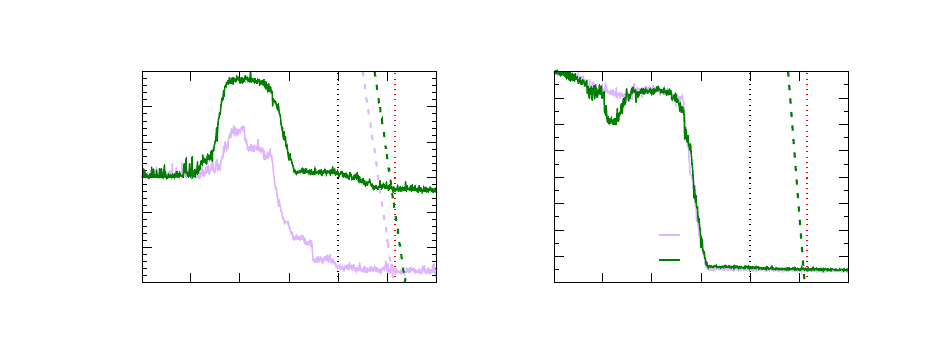}}%
    \gplfronttext
  \end{picture}%
\endgroup
  \caption{Maximum absolute error in Kaiser window recovery with $n$ points
  drawn uniformly at random on $[-1, \tau] \cup [0, 1]$ for $\tau = 0$ (purple)
  and $\tau = 0.15$ (green) for a fixed $\Omega = 400$. The corresponding dotted
  lines show the exponentially decaying error bounds of
  Theorem~\ref{thm:recovery} with $\rho=3$. The dotted black line shows $n =
  4\Omega(b-a)$ which corresponds to Nyquist sampling, and the dotted red line
  shows the bound given in Corollary~\ref{cor:recovery}.}
  \label{fig:thm1}
\end{figure}

\end{document}